\title{Saturation Problems for Families of Automata}
\author{León Bohn}{RWTH Aachen University, Germany}{bohn@lics.rwth-aachen.de}{https://orcid.org/0000-0003-0881-3199}{Supported by DFG grant LO 1174/7-1}
\author{Yong Li}{Key Laboratory of System Software (Chinese Academy of Sciences) and State Key Laboratory of Computer Science, Institute of Software, Chinese Academy of Sciences, PRC}{liyong@ios.ac.cn}{https://orcid.org/0000-0002-7301-9234}{Supported by National Natural Science Foundation of China (Grant No. 62102407)}
\author{Christof Löding}{RWTH Aachen University, Germany}{loeding@automata.rwth-aachen.de}{https://orcid.org/0000-0002-1529-2806}{}
\author{Sven Schewe}{University of Liverpool, UK}{sven.schewe@liverpool.ac.uk}{https://orcid.org/0000-0002-9093-9518}{Supported by the EPSRC through projects EP/X03688X/1 and EP/X042596/1}
\authorrunning{L. Bohn, Y. Li, C. Löding, S. Schewe}
\keywords{Families of Automata, automata learning, FDFAs}
\tikzset{
    box/.style={
        draw,
        rectangle,
        minimum width=2cm,
        minimum height=1cm,
        text centered,
        node distance=3cm,
    },
    softbox/.style={
        draw,
        rectangle,
        rounded corners=8pt,
        minimum width=2cm,
        minimum height=1cm,
        text centered,
        node distance=3cm,
        thick
    },
    rounded/.style={
        draw,
        ellipse,
        minimum width=2cm,
        minimum height=1cm,
        text centered,
        node distance=3cm,
    },
    squiggly/.style={
        decorate,
        decoration={
            snake,
            amplitude=.4mm,
            segment length=2mm,
            post length=1mm,
            pre length=1mm
        },
        thick
    }
}
\tikzset{automaton/.append style={
			shorten >=1pt,
			auto,
			on grid,
			node distance=10mm,
			initial text=,
			every edge/.append style={
					every node/.append style={
							font=\scriptsize
						}
				},
			every state/.append style={
					inner sep=1pt,
					minimum size=3mm,
                    font=\small,
                    draw=none,
				},
			squig/.append style={decorate,decoration={snake, amplitude=0.2mm, segment length=1.5mm}}
		},
        lblsmall/.append style={
            every node/.append style={
                font=\small
            },
			every edge/.append style={
					every node/.append style={
							font=\small
						}
				},
        },
        lbltiny/.append style={
            every node/.append style={
                font=\scriptsize
            },
			every edge/.append style={
					every node/.append style={
							font=\scriptsize
						}
				},
        }
}
\definecolor{blueish}{RGB}{0,84,159}
\definecolor{petrolish}{RGB}{0,97,101}
\definecolor{violett}{RGB}{97,33,88}
\definecolor{primary}{rgb}{0.0,0.25,0.42}
\definecolor{secondary}{rgb}{0.44, 0.11, 0.11}
\definecolor{tertiary}{rgb}{0.32,0.18,0.5}
\definecolor{quartary}{rgb}{0.45,0.42,0.75}
\newcommand{\mc}[1]{\ensuremath{\mathcal{#1}}}
\newcommand{\reprs}[1][\Sigma]{\ensuremath{{#1}^* \times {#1}^+}}
\newcommand{\A}{\mc{A}}
\newcommand{\T}{\mc{T}}
\newcommand{\F}{\mc{F}}
\newcommand{\N}{\mc{N}}
\newcommand{\D}{\mc{D}}
\newcommand{\C}{\mc{C}}
\renewcommand{\L}{\mc{L}}
\newcommand{\B}{\mc{B}}
\newcommand{\W}{\mc{W}}
\renewcommand{\P}{\mc{P}}
\newcommand{\eps}{\ensuremath{\varepsilon}}
\newcommand{\op}[1]{\ensuremath{\operatorname{\mathsf{#1}}}}
\newcommand{\PSPACE}{\textsf{PSPACE}\xspace}
\newcommand{\PTIME}{\textsf{PTIME}\xspace}
\newcommand{\moracle}{\textit{mem}}
\newcommand{\eoracle}{\textit{eq}}
\knowledgenewrobustcmd{\UP}[1]{\mathop{\cmdkl{\mathsf{UP}}}(#1)}
\knowledgenewrobustcmd{\states}[1]{\mathop{\cmdkl{Q}}(#1)}
\knowledgenewrobustcmd{\DFA}[1]{\mathop{\cmdkl{\mathsf{DFA}}}(#1)}
\knowledgenewrobustcmd{\LangDFA}[1]{\mathop{\cmdkl{L_*}}(#1)}
\knowledgenewrobustcmd{\LangDBA}[1]{\mathop{L_\omega}(#1)}
\let\root\undefined
\knowledgenewrobustcmd{\wordroot}[1]{\cmdkl{\sqrt{#1}}}
\knowledgenewrobustcmd{\Pow}[1]{\cmdkl{\operatorname{\omega-\mathsf{Pow}}}(#1)}
\knowledgenewrobustcmd{\ReprFDFA}[1]{\mathop{\cmdkl{{\mathsf{L}}_{\mathsf{rep}}}}(#1)}
\knowledgenewrobustcmd{\NormFDFA}[1]{\mathop{\cmdkl{{\mathsf{L}}_{\mathsf{norm}}}}(#1)}
\knowledgenewrobustcmd{\LangFDFA}[1]{\mathop{\cmdkl{L_\omega}}(#1)}
\knowledgenewrobustcmd{\ReprFDWA}[1]{\mathop{\cmdkl{{\mathsf{L}}_{\mathsf{rep}}}}(#1)}
\knowledgenewrobustcmd{\NormFDWA}[1]{\mathop{\cmdkl{{\mathsf{L}}_{\mathsf{norm}}}}(#1)}
\knowledgenewrobustcmd{\LangFDWA}[1]{\mathop{\cmdkl{L_\omega}}(#1)}
\knowledgenewmathcommandPIE{\redequiv}{\mathrel{\cmdkl{\equiv#2#3}}}
\newcommand{\ifappendix}[1]{\ifappendixelse{#1}{}}
\newcommand{\ifappendixelse}[2]{\ifthenelse{\isundefined{\shouldhaveappendix}}{#2}{#1}}
\newcommand*{\shouldhaveappendix}{}
\newcommand{\iftodoselse}[2]{\ifthenelse{\isundefined{\showtodos}}{#2}{#1}}
\newcommand{\sven}[1]{\todo[inline,color=teal!10,caption={Sven}]{\textbf{Sven:}
#1}}
\newcommand{\ly}[1]{\todo[inline,color=orange!10,caption={LY}]{\textbf{LY:} #1}}
\newcommand{\leon}[1]{\todo[color=green!10,caption={Leon}]{\textbf{Leon:} #1}}
\newcommand{\leonin}[1]{\todo[inline,color=green!10,caption={Leon}]{\textbf{Leon:} #1}}
\newcommand{\chrin}[1]{\todo[inline,color=Orchid!30,caption={Christof}]{\textbf{Christof:} #1}}
\newcommand{\chr}[1]{\todo[color=Orchid!30,caption={Christof}]{\textbf{Christof:} #1}}
\newcommand{\sven}[1]{}
\newcommand{\ly}[1]{}
\newcommand{\leon}[1]{}
\newcommand{\leonin}[1]{}
\newcommand{\chr}[1]{}
\newcommand{\chrin}[1]{}
\begin{document}

\maketitle

\begin{abstract}
Families of deterministic finite automata (FDFA) represent regular $\omega$-languages through their ultimately periodic words (UP-words).
An FDFA accepts pairs of words, where the first component corresponds to a prefix of the UP-word, and the second component represents a period of that UP-word.
An FDFA is termed saturated if, for each UP-word, either all or none of the pairs representing that UP-word are accepted.
We demonstrate that determining whether a given FDFA is saturated can be accomplished in polynomial time, thus improving the known \PSPACE upper bound by an exponential.
We illustrate the application of this result by presenting the first polynomial learning algorithms for representations of the class of all regular $\omega$-languages.
Furthermore, we establish that deciding a weaker property, referred to as almost saturation, is \PSPACE-complete.
Since FDFAs do not necessarily define regular $\omega$-languages when they are not saturated, we also address the regularity problem and show that it is \PSPACE-complete.
Finally, we explore a variant of FDFAs called families of deterministic weak automata (FDWA), where the semantics for the periodic part of the UP-word considers $\omega$-words instead of finite words. 
We demonstrate that saturation for FDWAs is also decidable in polynomial time, that FDWAs always define regular $\omega$-languages, and we compare the succinctness of these different models.
\end{abstract}%
%
\section{Introduction}
\label{section:introduction}
Regular $\omega$-languages (languages of infinite words) are a useful tool for developing decision procedures in logic that have applications in model checking and synthesis~\cite{BaierK2008,Thomas09}. There are many different formalisms for representing regular $\omega$-languages, like regular expressions, automata, semigroups, and logic~\cite{Thomas90,PerrinP04}. In this paper, we study \emph{families of automata} that represent regular $\omega$-languages in terms of the ultimately periodic words that they contain.  This is based on the fact that a regular $\omega$-language is uniquely determined by the ultimately periodic words that it contains, which follows from results by Büchi \cite{Buchi62} on the closure of regular $\omega$-languages under Boolean operations (see also \cite[Fact~1]{CalbrixNP93})).  Ultimately periodic words are of the form $uv^\omega$ for finite words $u,v$ (where $v$ is non-empty). For a regular $\omega$-language $L$, \cite{CalbrixNP93} considers the language $L_\$$ of all finite words of the form $u\$ v$ with $uv^\omega \in L$. They show that $L_\$$ is regular, and from a DFA for $L_\$$ one can construct in polynomial time a nondeterministic B\"uchi automaton for $L$.
A similar approach was used independently by Klarlund in \cite{Klarlund94} who introduces the concept of \emph{families of deterministic finite automata} (FDFA) for representing $\omega$-languages, based on the notion of family of right congruences (FORC) introduced by Maler and Staiger \cite{MalerS97}. Instead of using a single DFA, an FDFA consists of one so-called leading transition system, and so-called progress DFAs, one for each state of the leading transition system.
A pair $(u,v)$ of finite words (with $v$ non-empty) is accepted if $v$ is accepted by the progress DFA of the state reached by $u$ in the leading transition system. As opposed to the $L_\$$ representation from \cite{CalbrixNP93}, the FDFA model of \cite{Klarlund94,MalerS97} only considers pairs $(u,v)$ such that $v$ loops on the state of the leading transition system that is reached by $u$, referred to as \emph{normalized pairs}. So an FDFA for $L$ is an FDFA that accepts all normalized pairs $(u,v)$ with $uv^\omega \in L$. 

Because there exist many learning algorithms for DFAs, these kinds of representations based on DFAs have received attention in the area of learning regular $\omega$-languages
\cite{FarzanCCTW08,AngluinF16,LiCZL21,BohnL22,BohnL24,LiST24}. 
One problem is that methods for DFA learning might come up with automata that treat different pairs that define the same ultimately periodic word differently. So it might happen that a pair $(u_1,v_1)$ is accepted while $(u_2,v_2)$ is rejected, although $u_1v_1^\omega = u_2v_2^\omega$. In this case it is not clear anymore which $\omega$-language is accepted. One can use the existential (nondeterministic) semantics and say that $uv^\omega$ is accepted if some pair representing $uv^\omega$ is accepted. But this does not necessarily define a regular $\omega$-language (see \cite[Example 2]{LiCZL21}).
An FDFA is called \emph{saturated} if it treats all normalized pairs representing the same ultimately periodic word in the same way (accepts all or rejects all). Additionally, we call an FDFA \emph{fully saturated} if it treats all pairs representing the same ultimately periodic words in the same way (not only the normalized ones). The $u\$v$ representation corresponds to fully saturated FDFAs because a DFA for the language $L_\$$ can easily be turned into a fully saturated FDFA for $L$ by taking the part before $\$$ as leading transition system, and the part after the $\$$ for defining the progress DFAs.

It has been shown in \cite{AngluinBF18} that saturated FDFAs possess many good properties as formalism for defining $\omega$-languages.\footnote{The semantics of FDFAs in \cite{AngluinBF18} and \cite{Klarlund94} are defined differently, however, saturated FDFAs define the same $\omega$-languages in both models. We follow the definitions from \cite{AngluinBF18}.} However, up to now the best known upper bound on the complexity for deciding if a given FDFA is (fully) saturated is \PSPACE \cite{AngluinBF18}. In this paper, we settle several questions concerning the complexity of saturation problems. 
As our first contribution we provide polynomial time algorithms for deciding saturation and full saturation of FDFAs, solving a question that was first posed more than three decades ago in \cite{CalbrixNP93} (see related work for existing work on the saturation problem). 
We also consider the property of almost saturation, which is weaker than saturation and thus allows for smaller FDFAs (\Cref{lemma:lowerbound:combinedToSatFDFA}), while ensuring that almost saturated FDFAs define regular $\omega$-languages \cite{LiST23}. We show that this comes at a price, because deciding almost saturation is \PSPACE-complete. Furthermore, we show that it is \PSPACE-complete to decide whether a given FDFA defines a regular $\omega$-language, solving a question that was raised in \cite{CalbrixN95}.

An overview of the models that appear in our paper and of our results is given in Figure~\ref{fig:overview}. The different classes of FDFAs are shown on the left-hand side. The entries \PTIME/\PSPACE-c.\ indicate the complexity of deciding whether a given FDFA belongs to the respective class. As an application of the polynomial saturation algorithms, we provide the \emph{first} polynomial learning algorithms for classes representing all regular $\omega$-languages: a polynomial time active learner for fully saturated FDFAs, and a passive learner for syntactic FDFAs that is polynomial in time and data (see related work for existing work on learning $\omega$-languages). The syntactic FDFA for $L$ is the saturated FDFA with the least possible leading transition system (called $\mathfrak{L}^L$ in \cite{Klarlund94}). 
The size of the models that our algorithms can learn is, in general, incomparable to the size of deterministic parity automata (DPAs) (\cite[Theorem~5.12]{AngluinBF18} shows an exponential lower bound from saturated FDFAs to DPAs and the language family used for that is accepted also by small fully saturated FDFAs; an exponential lower bound from DPAs to syntactic FDFAs follows from the language family $L_n$ used in  \Cref{lemma:expgapsatFDFAsynFDFAfullysatFDFA}, which is easily seen to be accepted by small DPAs).
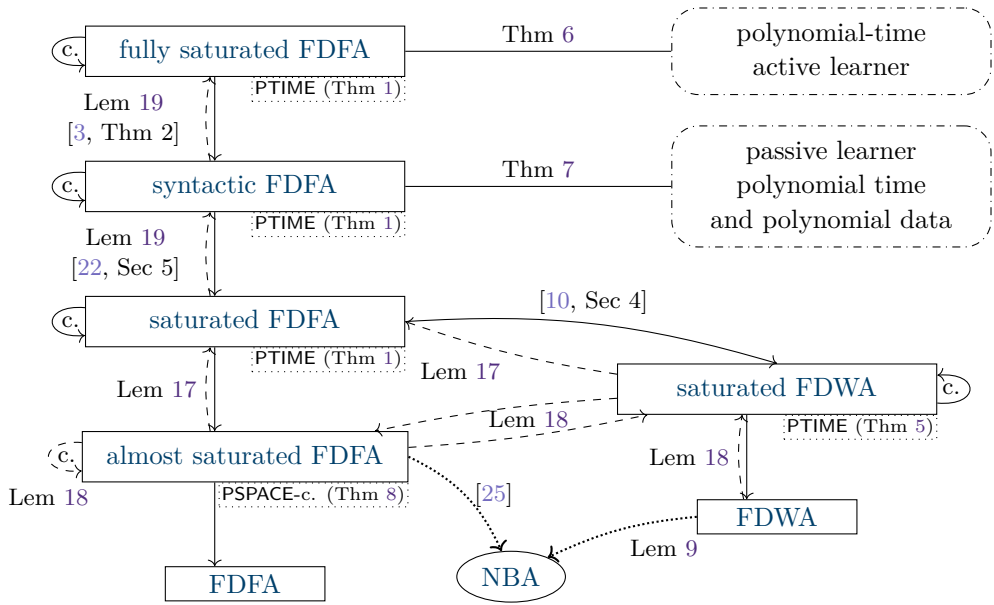
\begin{figure}
    \begin{center}
    \begin{tikzpicture}[%
    baseline={(NBA)},%
    box/.style={%
        draw,%
        rectangle,%
        minimum width=0.3\textwidth,%
        text centered,%
        node distance=3cm,%
        inner sep=3pt,%
    },%
    softbox/.style={%
        draw,%
        rectangle,%
        dash dot,%
        rounded corners=10pt,%
        minimum width=0.4\textwidth,%
        minimum height=1cm,%
        text centered,%
        node distance=3cm,%
        minimum width=0.3\textwidth,%
    },%
    rounded/.style={%
        draw,%
        ellipse,
        text centered,%
        node distance=3cm,%
    }%
    ]
        \node[box] (fullysatFDFA) {\begin{tabular}{c}"fully saturated FDFA"\end{tabular}};
        \node[box, below=0.05\textheight of fullysatFDFA] (synFDFA) {\begin{tabular}{c}"syntactic FDFA"\end{tabular}};
        \node[box, below=0.05\textheight of synFDFA] (satFDFA) {\begin{tabular}{c}"saturated FDFA"\end{tabular}};
        \node[box, below=0.05\textheight of satFDFA] (almostsatFDFA) {\begin{tabular}{c}"almost saturated FDFA"\end{tabular}};
        \node[box, minimum width=0.15\textwidth, below=0.05\textheight of almostsatFDFA] (FDFA) {"FDFA"};

        \node[softbox,right=0.25\textwidth of fullysatFDFA] (ptimeAL) {\begin{tabular}{c}polynomial-time\\active learner\end{tabular}};
        \node[softbox] (passiveLearner) at (synFDFA -| ptimeAL) {\begin{tabular}{c}passive learner\\polynomial time\\and polynomial data\end{tabular}};
        
        \node[box] (satFDWA) at ($(satFDFA)!0.5!(almostsatFDFA) + (0.5\textwidth,0)$) {\begin{tabular}{c}"saturated FDWA"\end{tabular}};
        \node[box, minimum width=0.15\textwidth] (FDWA) at ($(FDFA -| satFDWA) + (0,.04\textheight)$) {"FDWA"};

        \node[rounded] (NBA) at ($(FDFA)!0.5!(FDWA) + (0,-0.025\textwidth)$) {"NBA"};       

        \tikzset{lblsmall}

        \draw[-, double]
        (fullysatFDFA) edge node[anchor=south]{Thm~\ref{theorem:activelearner}} (ptimeAL);
        \draw[-, double]
        (synFDFA) edge node[anchor=south]{Thm~\ref{theorem:passivelearner}} (passiveLearner);

        \draw[densely dotted, thick]
        (FDWA.west) edge[->, bend right=10] node[anchor=north west] {\small Lem~\ref{lemma:FDWAregular}} (NBA)
        (almostsatFDFA.east) edge[->, bend left=20] node[anchor=west] {\small \cite{LiST23}}  (NBA)
        ;

        \draw[->]
        (fullysatFDFA) edge[transform canvas={xshift=-4mm}] (synFDFA)
        (synFDFA) edge[transform canvas={xshift=-4mm}] (satFDFA)
        (satFDFA) edge[transform canvas={xshift=-4mm}] (almostsatFDFA)
        (almostsatFDFA) edge[transform canvas={xshift=-4mm}] (FDFA)
        (satFDWA) edge[transform canvas={xshift=-4mm}]  (FDWA);

        \draw[->, dashed, lblsmall]
        (synFDFA) edge[bend left=10,transform canvas={xshift=-4mm}] node[anchor=east] {\begin{tabular}{c}Lem~\ref{lemma:expgapsatFDFAsynFDFAfullysatFDFA}\\\cite[Thm~2]{AngluinF16} \end{tabular}} (fullysatFDFA)
        (satFDFA) edge[bend left=10,transform canvas={xshift=-4mm}] node[anchor=east] {\begin{tabular}{c}Lem~\ref{lemma:expgapsatFDFAsynFDFAfullysatFDFA}\\\cite[Sec~5]{Klarlund94}\end{tabular}} (synFDFA)
        (almostsatFDFA) edge[bend left=10,transform canvas={xshift=-4mm}] node[anchor=east] {Lem~\ref{lemma:lowerbound:combinedToSatFDFA}} (satFDFA)
        (satFDWA) edge[bend left=5] node[anchor=north east] {Lem~\ref{lemma:lowerbound:combinedToSatFDFA}} (satFDFA.east)
        (almostsatFDFA) edge[bend right=4] node[anchor=south] {Lem~\ref{lemma:incomparableSatFDWAalsatFDFA}} (satFDWA)
        (satFDWA) edge[bend right=4] (almostsatFDFA)
        (FDWA) edge[bend left=10,transform canvas={xshift=-4mm}] node [anchor=east] {Lem~\ref{lemma:incomparableSatFDWAalsatFDFA}} (satFDWA);

        \draw[lblsmall]
        (satFDFA.east) edge[->, bend left=10] node[anchor=south] {\small \cite[Sec~4]{BohnL24}} (satFDWA.north);

        \draw[->]
        (fullysatFDFA) to [out=175, in=185, looseness=4] node[anchor=west] {c.} (fullysatFDFA);
        \draw[->]
        (synFDFA) to [out=175, in=185, looseness=4] node[anchor=west] {c.}  (synFDFA);
        \draw[->]        
        (satFDFA) to [out=175, in=185, looseness=4] node[anchor=west] {c.}  (satFDFA);
        \draw[->]
        (satFDWA) to [out=355, in=5, looseness=4] node[anchor=east] {c.} (satFDWA);
        \draw[->, dashed]
        (almostsatFDFA) to [out=175, in=185, looseness=4] node[anchor=west] {c.} node[anchor=north,overlay,yshift=-3mm] {\small Lem~\ref{lemma:incomparableSatFDWAalsatFDFA}} (almostsatFDFA);

        \node[inner sep=1pt, draw=black, dotted, anchor=north east] (DECfullysatFDFA) at (fullysatFDFA.south east) {\scriptsize \PTIME (Thm~\ref{theorem:ptimesaturationFDFA})};
        \node[inner sep=1pt, draw=black, dotted, anchor=north east] (DECsynFDFA) at (synFDFA.south east) {\scriptsize \PTIME (Thm~\ref{theorem:ptimesaturationFDFA})};
        \node[inner sep=1pt, draw=black, dotted, anchor=north east] (DECsatFDFA) at (satFDFA.south east) {\scriptsize \PTIME (Thm~\ref{theorem:ptimesaturationFDFA})};
        \node[inner sep=1pt, draw=black, dotted, anchor=north east] (DECalmostsatFDFA) at (almostsatFDFA.south east) {\scriptsize \PSPACE-c. (Thm~\ref{theorem:powerclosednessPSPACEcomplete})};
        \node[inner sep=1pt, draw=black, dotted, anchor=north east] (DECsatFDWA) at (satFDWA.south east) {\scriptsize \PTIME (Thm~\ref{theorem:ptimesaturationFDWA})};
    \end{tikzpicture}
    \caption{Overview of the properties and models considered in this paper. 
    Solid arrows indicate translations that are possible without blowup, while dashed ones are used for translations with an exponential lower bound.
    Dotted arrows are constructions yielding automata, and ones labeled with ``c.'' are complementation constructions.
    Solid lines are used to indicate a connection to learning.
    We provide more details on this diagram in the introduction.}
    \label{fig:overview}
    \end{center}
\end{figure}

For completing the picture of the FDFA landscape, we also give examples showing that complementing almost saturated FDFAs can cause an exponential blow-up (while it is trivial for saturated or fully saturated FDFAs), and for exponential size gaps between the different models. Exponential lower bounds are indicated by the dashed arrows in Figure~\ref{fig:overview}, where the loops labeled ``c.''~represent the complement. The solid arrows mean that there is a transformation that does not cause any blow-up. 

Finally, we also consider the model of families of deterministic weak automata (FDWA), which is shown with two entries on the right-hand side of Figure~\ref{fig:overview}. This is an automaton model for families of weak priority mappings (FWPM), which have been introduced in \cite{BohnL24}. Syntactically, an FDWA is an FDFA in which the progress DFAs have the property that each strongly connected component (SCC) of states is either completely accepting or completely rejecting. This corresponds to the restriction made for deterministic weak (B\"uchi) automata (DWA); See, e.g., \cite{Loding01}. And indeed, the progress automata are not interpreted as DFAs accepting finite words, but as deterministic weak automata accepting periodic words: a pair $(u,v)$ is accepted if $v^\omega$ is accepted by the progress DWA of the state reached by $u$ in the leading transition system, which means that $v^\omega$ ends up cycling in an accepting SCC. 
As for FDFAs, the property of saturation for FDWAs requires that either all normalized representations of an ultimately periodic word are accepted or none. (In order to not consider too many models, we restricted ourselves to existing ones and did not additionally introduce fully saturated and syntactic FDWAs.)
We show that saturation can also be decided in polynomial time for FDWAs, which is easier to see than for FDFAs.

Concerning succinctness of FDWAs compared to FDFAs, it follows from results in \cite{BohnL24} that saturated FDFAs can be transformed into saturated FDWAs by only adapting the acceptance mechanism and keeping the transition structure. We show that a translation in the other direction can cause an exponential blow-up, even when going to the more relaxed model of almost saturated FDFAs. Vice versa, we show that translating almost saturated FDFAs to saturated FDWAs also might cause an exponential blow-up.

In summary, our contributions are polynomial saturation algorithms for FDFAs and FDWAs, \PSPACE-completeness results for deciding almost saturation and regularity of FDFAs, and some exponential lower bounds for
transformations between 
different models.

\smallskip 
\noindent\textbf{Organization.} The paper is structured as follows. We continue the introduction with a discussion of related work. 
In \cref{section:preliminaries} we introduce the main definitions. In \cref{sec:saturation} we present the polynomial time saturation algorithms (\cref{section:ptimesaturation}), the learning algorithms (\cref{sec:learning}), and the \PSPACE-completeness of almost saturation (\cref{section:almostsaturation}). The regularity problem is shown to be \PSPACE-complete in \cref{section:regularity}, and the exponential lower bounds for transformations between the models are presented in \cref{section:comparison}. We conclude in \cref{section:conclusion}.

\newcommand{\pow}{\operatorname{\mathsf{pow}}}
\newcommand{\root}{\operatorname{\mathsf{root}}}
\newcommand{\conj}{\operatorname{\mathsf{conj}}}
\newcommand{\per}{\operatorname{\mathsf{per}}}
\smallskip 
\noindent\textbf{Related work on the saturation and regularity problems.}
The problem of saturation was first raised in the conclusion of \cite{CalbrixNP93} for the $L_\$$ representation of ultimately periodic words: ``How can we decide efficiently that a rational language $K \subseteq A^*\$ A^+$ is saturated by $\stackrel{\text{UP}}{\equiv}$?'' Here, two words $u\$ v$ and $u'\$ v'$ are in relation $\stackrel{\text{UP}}{\equiv}$ if $uv^\omega = u'(v')^\omega$.
A variation of the problem appears in \cite{CalbrixN95} in terms of so called period languages:
For an $\omega$-language $L$, a finite word $v$ is called a \emph{period} of $L$ if there is an ultimately periodic word of the form $uv^\omega$ in $L$,
and $\per(L)$ denotes the set of all these periods of $L$.
So $\per(L)$ is obtained from $L_\$$ by removing the prefixes up to (and including) the $\$$.
It is shown that a language $K$ of finite words is a period language iff $K$ is closed under the operations of power, root, and conjugation, where $\pow(K) =  \{v^k \mid v \in K,\ k \geq 1\}$, $\root(K) = \{v \mid \exists k>0:\; v^k \in L\}$, and $\conj(K) = \{v_2v_1 \mid v_1v_2 \in K\}$.
Our notion of "power-stable" corresponds to closure under $\root$ and $\pow$, and our notion of "loopshift-stable" corresponds to closure under conjugation.
Given any $K \subseteq \Sigma^*$, the least period language containing $K$ is $\per(L) = \root(\pow(\conj(K)))$. Then \cite{CalbrixN95} poses three decision problems for a given regular language $K \subseteq \Sigma^*$:
(1) Is $K$ a period language?
(2) Is $\per(K)$ regular?
(3) Is $\pow(K)$ regular?

Problem~(1) is a variation of the saturation problem for FDFAs that focuses solely on periods, independent of their prefixes.
It is observed in \cite{CalbrixN95} that Problem (1) is decidable without considering the complexity. Similarly, Problem~(2) is a variation of the regularity problem that we study for FDFAs, again considering only periods independent of their prefixes. It is left open in \cite{CalbrixN95} whether Problem~(2) is decidable, but a reduction to Problem~(3) is given.
Notably, this reduction is exponential since it constructs a DFA for $\root(\conj(K))$. 
Later, Problem~(3) was shown to be decidable in \cite{Fazekas11} without an analysis of the complexity. Our results in \Cref{section:regularity} give a direct proof that Problem (2) is \PSPACE-complete.

The paper \cite{BealCR96} defines cyclic languages as those that are closed under the operations of power, root, and conjugation, and gives a characterization of regular cyclic languages in terms of so called strongly cyclic languages. So the cyclic languages are precisely the period languages of \cite{CalbrixN95}. Decidability questions are not studied in \cite{BealCR96}.

In \cite{CianciaV19}, lasso automata are considered, which are automata accepting pairs of finite words. The representation basically corresponds to the $u\$ v$ representation of \cite{CalbrixNP93} but without an explicit $\$$ symbol, using different transition relations instead.
It is easy to see that lasso automata and automata for the $u\$ v$ representation are equivalent up to minor rewriting of the automata. 
The property of full saturation corresponds to the notion of bisimulation invariance in  \cite{CianciaV19}: two lassos $(u,v)$ and $(u',v')$ are called bisimilar if $uv^\omega = u'(v')^\omega$, and a lasso automaton is bisimulation invariant if it accepts all pairs from a bisimulation class or none.
Bisimulation invariance is characterized in  \cite{CianciaV19} by the properties \textit{circular} (power-stable in our terminology) and \textit{coherent} (loopshift-stable in our terminology), \cite[Theorem 2]{CianciaV12}, \cite[Section 3.3]{CianciaV19}. An $\Omega$-automaton is defined as a circular and coherent lasso automaton, which corresponds to the property of full saturation for FDFAs. This property of a lasso automaton is shown to be decidable in \cite[Theorem 18]{CianciaV19} without considering the complexity (since the decision procedure builds the monoid from a DFA, it is at least exponential). Another decision procedure is given in \cite[Proposition 12]{ChernevHK24} with a doubly exponential upper bound.

Finally, the saturation problem for FDFAs is explicitly considered in \cite[Theorem 4.7]{AngluinBF18}, where it is shown to be in \PSPACE by giving an upper bound (exponential in the size of the FDFA) on the size of shortest pairs violating saturation if the given FDFA is not saturated.

\smallskip \noindent
\textbf{Related work concerning learning algorithms.}
There are some active and some passive learning algorithms for different representations of regular $\omega$-languages, but all of them are either for subclasses of the regular $\omega$-languages, or they are not polynomial time (for active learners) or not polynomial in the required data (for passive learners), where the complexity for learning a class $\C$ of representations is measured in a smallest representation of the target language from $\C$.
The only known polynomial time active learner for regular $\omega$-languages in the minimally adequate teacher model of \cite{Angluin87} is for deterministic weak Büchi automata \cite{MalerP95}.
There are active learners for nondeterministic Büchi automata (NBA) \cite{FarzanCCTW08,LiCZL21} and for deterministic Büchi and co-Büchi automata \cite{LiST24}, but these are heuristics for the construction of NBA that are not polynomial in the target representations.
The algorithm from \cite{FarzanCCTW08} uses an active learner for DFAs for learning the $u\$v$ representation of \cite{CalbrixNP93} for regular $\omega$-languages. Whenever the DFA learner asks an equivalence query, the DFA is turned into an NBA, and from a counter-example for the NBA a counter example for the DFA is derived. 
Our active learner uses the same principle but turns the DFA for the $u\$v$ representation into an FDFA and checks whether it is fully saturated.

There is a polynomial time active learner for deterministic parity automata \cite{MichaliszynO22}, but this algorithm uses in addition to membership and equivalence queries so-called loop-index queries that provide some information on the target automaton, not just the target language. So this algorithm does not fit into the minimally adquate teacher model from \cite{Angluin87}. The paper \cite{AngluinF16} presents a learning algorithm for FDFAs. But it turns out that this is not a learning algorithm for regular $\omega$-languages: The authors make the assumption that the FDFAs used in equivalence queries define  regular $\omega$-languages (beginning of Section 4 in \cite{AngluinF16}) while they only provide such a semantics for saturated FDFAs. So their algorithm requires an oracle that returns concrete representations of ultimately periodic words as counter-examples for \emph{arbitrary} FDFAs, which is not an oracle for regular $\omega$-languages. Our algorithms solve this problem by first using the polynomial time saturation check, making sure that we only submit saturated FDFAs to the equivalence oracle.

Concerning passive learners, the only polynomial time algorithms that can learn regular $\omega$-languages in the limit from polynomial data are for subclasses that make restrictions on the canonical Myhill/Nerode right-congruence of the potential target languages. The algorithm from \cite{AngluinFS20} can only learn languages for which the minimal automaton has only one state per Myhill/Nerode equivalence class (referred to as IRC languages for ``informative right congruence''). The algorithm from \cite{BohnL21} can also learn languages beyond that class but there is no characterization of the class of languages that can be inferred beyond the IRC languages. The algorithm in \cite{BohnL22} infers deterministic Büchi automata (DBA) in polynomial time but the upper bound on the size of characteristic samples for a DBA-recognizable language $L$ is exponential in a smallest DBA for $L$, in general. The algorithm in \cite{BohnL24} generalizes this to deterministic parity automata (DPA) and can infer a DPA for each regular $\omega$-language, but a polynomial bound for the size of characteristic samples is only given for languages accepted by a  DPA that has at most $k$ states per Myhill/Nerode class for a fixed $k$.

\smallskip \noindent
\textbf{Related work concerning the model of FDWA.}
The paper \cite{FismanGZ24} considers a model of FDFA with so-called duo-normalized acceptance and mentions connections to the model FWPM from \cite{BohnL24} in a few places, without going into details of the connection.
Using results from \cite{BohnL24}, it is not very hard to show that the models can be considered the same in the sense that they can be easily converted into each other without changing the transition structure. We give a proof of this in the full version of the paper.

%
%
\section{Preliminaries}
\label{section:preliminaries}
An alphabet is a finite, non-empty set $\Sigma$ of symbols.
We write $\Sigma^*$ and $\Sigma^\omega$ to denote the set of finite and the set of infinite words over $\Sigma$, respectively. 
For a word $w$, we use $|w|$ to refer to its length, which is the number of symbols if $w$ is finite, and $\infty$ otherwise.
The empty word, $\eps$, is the unique word of length $0$ and we use $\Sigma^+$ for the set of finite, non-empty words, i.e. $\Sigma^+ = \Sigma^* \setminus \{\eps\}$. We assume a linear order on the alphabet and consider the standard length-lexicographic ordering on finite words, which first orders words by length, and by lexicographic order among words of same length.
\AP
An $\omega$-word $w \in \Sigma^\omega$ is called ""ultimately periodic"" if it can be written as $ux^\omega$ with $u \in \Sigma^*$ and $x \in \Sigma^+$.
We write $\intro*\UP{\Sigma}$ for the set of all ultimately periodic words over the alphabet $\Sigma$.
\AP
For a word $x \in \Sigma^*$, we write $\wordroot{x}$ to denote its ""root"", which is the unique minimal word $u$ such that $x^\omega = u^\omega$.\footnote{Usually, $\wordroot{x}$ is defined to be the minimal $u$ with $u^i = x$ for some $i > 0$. A simple application of the Theorem of Fine and Wilf shows that the two definitions coincide: If $u^i = x$ and $v^\omega = x^\omega$, then $v^{|x|} = u^{i|v|} =: y$. Since $y$ has periods $|u|$ and $|v|$, by the theorem of Fine and Wilf, it also has period $gcd(|u|,|v|)$.}
\AP
For set $P \subseteq \Sigma^+$, we write $\intro*\Pow{P}$ to denote the set $\{x^\omega \mid x \in P\}$

\AP
A ""transition system"" (TS) is a tuple $\T = (\Sigma, Q, \delta, \iota)$, consisting of an input alphabet $\Sigma$, a non-empty set of states $Q$, a transition function $\delta : Q \times \Sigma \to Q$, and an initial state $\iota \in Q$.
We define $\delta^* : Q \times \Sigma^* \to Q$ as the natural extension of $\delta$ to finite words, and overload notation treating $\T$ as the function mapping a finite word $u$ to $\delta^*(\iota, u)$.
\AP
We write $\intro*\states{\T}$ for the set of states of a TS-like object $\T$, and assume that $\states{\T}$ is prefix-closed subset of $\Sigma^*$ containing for each state the length-lexicographically minimal word that reaches it.
\AP
The ""size@@TS"" of $\T$, denoted $|\T|$, is the number of states.
\AP
A set $S \subseteq Q$ of states is a ""strongly connected component"" (SCC) if $S$ is non-empty, and $\subseteq$-maximal with respect to the property that for every pair of states $p, q \in S$ there is a non-empty word $x \in \Sigma^+$ such that $\delta^*(p, x) = q$.
\AP
An equivalence relation ${\sim} \subseteq \Sigma^* \times \Sigma^*$ is called right congruence if it preserves right concatenation, meaning $x \sim y$ implies $xz \sim yz$ for all $x,y,z \in \Sigma^*$.
\AP
A "TS" $\T$ can be viewed as a right congruence $\sim_\T$ where $x \sim_\T y$ if $\T(x) = \T(y)$, and a right congruence $\sim$ can be viewed as a TS using the classes of $\sim$ as states and adding for each class $u$ and symbol $a \in \Sigma$ the transition $[u]_\sim \xrightarrow{a} [ua]_\sim$.

\AP
A ""deterministic finite automaton"" (DFA) $\D = (\Sigma, Q, \delta, \iota, F)$ is a "TS" with a set $F \subseteq Q$ of final states.
By replacing $\delta$ with a transition relation $\Delta \subseteq Q \times \Sigma \times Q$, we obtain a ""nondeterministic finite automaton"" (NFA), so every DFA is also an NFA.
\AP
The language accepted by an NFA $\N$, denoted $\intro*\LangDFA{\N}$, consists of all words that can reach a state in $F$.

\AP
A ""deterministic Büchi automaton"" (DBA) $\B$ is syntactically the same as a "DFA".
It accepts an $\omega$-word $w \in \Sigma^\omega$ if the run of $\B$ on $w$ visits a final state from $F$ infinitely often, and we write $\intro*\LangDBA{\B}$ for the language accepted by $\B$.
\AP
If all states within a "strongly connected component" are either accepting, or all of them are rejecting, we call $\B$ ""weak"".
\AP
If we replace the transition function of a "DBA" with a transition relation $\Delta \subseteq Q \times \Sigma \times Q$, we obtain a ""nondeterministic Büchi automaton"" (NBA), which accepts a word $w$ if some run on $w$ visits $F$ infinitely often.
\AP
The language of an NBA is the set of all words it accepts, and say that $L \subseteq \Sigma^\omega$ is a ""regular $\omega$-language"" if it is the language of some NBA.
It already follows from the results of Büchi~\cite{Buchi62} that if two $\omega$-languages agree on the "ultimately periodic" words, then they must be equal.
\AP
We call a set $L\subseteq \UP{\Sigma}$ of "ultimately periodic" words ""UP-regular"" if there exists some "regular $\omega$-language" $L'$ such that $L' \cap \UP{\Sigma} = L$.

\AP
A family of DFAs (""FDFA"") is a pair $\F = (\T, \D)$ where $\T$ is a "transition system" called the ""leading TS"", and $\D$ is an indexed family of automata, so for each state $q \in \states{\T}$, $\D_q$ is a "DFA".
We overload notation writing $\D_u$ for the DFA $\D_{\T(u)}$ and call $\D_u$ a ""progress DFA"".
We always assume that if $x$ and $y$ lead to the same state in some $\D_u$, then also $ux$ and $uy$ lead to the same state in $\T$.
This can always be achieved with a blow-up that is at most quadratic by taking the product of the progress DFA with the "leading TS", and it makes arguments in some proofs (e.g.~\cref{lemma:rotationinvarianceproperty}) simpler.
\AP
An FDFA $\F$ ""accepts@@FDFA"" a pair $(u, x) \in \reprs$ if the DFA $\D_u$ accepts $x$, and we write $\intro*\ReprFDFA\F$ to denote the set of all representations accepted by $\F$.
\AP
A representation $(u,x)$ is called ""normalized (with regard to $\F$)"" if $ux$ and $u$ lead to the same state in $\T$, and write $\intro*\NormFDFA\F$ for the set of "normalized" pairs "accepted@@FDFA" by $\F$.
\AP
A family of deterministic weak automata (""FDWA"") $\W = (\T, \B)$ is syntactically the same as an "FDFA" but it views the progress DFAs as DBAs, and additionally requires each DBA $\B_u$ to be "weak".
$\W$ accepts a pair $(u,x)$ if $x^\omega$ is accepted by $\B_u$ viewed as a "DBA".
We write $\intro*\ReprFDWA{\W}$ for the set of all representations accepted by $\W$ and denote by $\intro*\NormFDWA\W$ the restriction to "normalized" representations.

\AP
As FDFAs and FDWAs are syntactically the same, we say ""family"" $\F = (\T, \D)$ to refer to either an FDFA or an FDWA.
We call $\D_u$ a ""progress automaton"", and write just $\D$ to denote $\D_\eps$ in the case that $\T$ is trivial.
\AP
The ""size@@family"" of $\F$ is the pair $(n,k)$ where $n$ is the "size@@TS" of $\T$ and $k$ is the maximum "size@@TS" of any $\D_u$.
\AP
We call a family $\F$ ""saturated"" if for all "normalized" $(u,x),(v,y)$ with $ux^\omega = vy^\omega$ holds that $(u,x)$ is accepted by $\F$ iff $(v,y)$ is accepted by $\F$.
\AP 
We say that $\F$ is ""fully saturated"" if this equivalence holds for all pairs, not only normalized ones.
For every "saturated FDFA" $\F$, there exists a unique "regular $\omega$-language" $L$ which contains precisely those "ultimately periodic" words that have a "normalized" representation which is accepted by $\F$~\cite[Theorem~5.7]{AngluinBF18}.
In this case we say that $L$ is the \emph{language} accepted by $\F$ and denote it by $\intro*\LangFDFA{\F}$.

The left quotient of the word $u$ with regard to a regular language $L$ of finite or infinite words, denoted $u^{-1}L$, is the set of all $w$ such that $uw \in L$.
Using it, we define the prefix congruence $\sim_L$ of $L$, as $u \sim_L v$ if $u^{-1}L = v^{-1}L$.
\AP 
We call $L$ ""prefix-independent"" if $\sim_L$ has only one class.
\AP
The ""syntactic FDFA"" $\F_L = (\T_{\sim_L}, \D^L)$ of a "regular $\omega$-language" $L$ is always "saturated", and it is the minimal one (with respect to the size of its "progress DFAs") among all FDFAs that use $\T_{\sim_L}$ as "leading TS".
In general, one may obtain a saturated FDFA with strictly smaller progress DFAs by using a larger leading TS that refines $\sim_L$~\cite[Proposition~2]{Klarlund94}.
This does not hold in "fully saturated" FDFAs, because in those the minimal progress DFAs for all states reached on $\sim_L$-equivalent words coincide.
Intuitively, saturation can be viewed as a form of semantic determinism, which guarantees the existence of unique, minimal "progress DFAs" (for a given "leading TS").
Due to their nondeterministic nature and the resulting non-uniqueness of their progress DFAs, the notion of an "almost saturated" syntactic FDFA does not exist.
%
\section{Saturation Problems}\label{sec:saturation}

In this section, we consider various saturation problems for "FDFA" and "FDWA".
We first establish \PTIME-decidability of saturation for both types of "families".
Then, we illustrate that this can be applied for both active and passive learning of $\omega$-regular languages.
Finally, we consider a relaxed notion of saturation and we show that deciding it is \PSPACE-complete. 

\subsection{Deciding saturation in \PTIME}\label{section:ptimesaturation}
In the following, we state the main theorem regarding both "saturation" and "full saturation".
Subsequently, however, we opt to focus only on the former case of saturation, deferring a proof of the more general statement to~\ifappendixelse{\cref{section:appendixsaturation}}{the full version}.

\begin{theorem}
    For a given "FDFA" $\F$, one can decide in polynomial time whether $\F$ is (fully) "saturated".
    If not, there are (not necessarily) "normalized" pairs $(u,x),(v,y)$ of polynomial length with $ux^\omega = vy^\omega$, such that $\F$ accepts $(u,x)$ and rejects $(v,y)$.%
  \label{theorem:ptimesaturationFDFA}
\end{theorem}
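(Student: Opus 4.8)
The plan is to reduce saturation to a reachability / emptiness question on a polynomial-size product automaton. The key observation is a normal-form result for violations: if $\F$ is not saturated, there is a violation $(u,x),(v,y)$ that can be obtained from one another by a bounded sequence of elementary transformations, each of which changes the pair but preserves the represented ultimately periodic word. Concretely, for normalized pairs, two representations $(u,x)$ and $(v,y)$ satisfy $ux^\omega = vy^\omega$ iff they are connected by (i) \emph{loop shifts} $(u,ab) \leadsto (ua, ba)$ (when both are normalized), (ii) \emph{power steps} $(u,x) \leadsto (u, x^k)$ and roots $(u,x^k)\leadsto(u,x)$, and (iii) moving a prefix of the period into/out of $u$ along the leading TS. So it suffices to check that $\F$ is closed under each elementary step: this is exactly the decomposition into "loopshift-stability" and "power-stability" alluded to in the related-work discussion. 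First I would make this decomposition precise and show that each of the two closure properties can be decided in polynomial time; then saturation is the conjunction of the two (plus a trivial prefix-compatibility check for the leading TS), and a counterexample of polynomial length can be read off whichever check fails.

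For \textbf{loopshift-stability}, the idea is: for every state $q$ of $\T$ and every factorization witness, we want that for all $a\in\Sigma$ and all words $x$ that loop on $q$, the DFA $\D_q$ accepts $ax$ iff $\D_{qa}$ accepts $xa$ — equivalently, reading the period ``rotated by one symbol'' gives the same verdict. Because a single-symbol rotation suffices to generate all rotations (iterating it $|x|$ times cycles back), it is enough to check the one-symbol version. I would build, for each state $q$ and each symbol $a$, a product transition system running $\D_q$ on $ax$ in parallel with $\D_{qa}$ on $xa$ while tracking the state of $\T$ reached from $q$ (to enforce that $x$ loops, i.e., the run returns to $q$, hence $qa$ after appending $a$). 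This product has polynomially many states, and loopshift-stability at $(q,a)$ fails iff there is a reachable configuration where $\T$ has returned to $q$ and exactly one of the two DFA components is final. A shortest such path yields the desired polynomial-length violating pair.

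For \textbf{power-stability}, I want: $\D_q$ accepts $x$ iff $\D_q$ accepts $x^k$ for all $k\ge 1$ and all $x$ looping on $q$. Again a single extra power suffices in a suitable sense — if $\D_q$ disagrees on $x$ and $x^k$ for some $k$, it disagrees on $x$ and $x^m$ for some $m$ bounded by $|\D_q|$, by a pumping/periodicity argument on the sequence of states $\D_q$ reaches after $x, x^2, x^3,\dots$ (this sequence is eventually periodic with period at most $|\D_q|$, and if it ever changes acceptance status it does so within that bound). So I run $\D_q$ on $x$ and, on a parallel copy, on $x^j$ for a symbolically tracked $j$, bounding $j$ by $|\D_q|$, while tracking $\T$ from $q$ to enforce looping; disagreement is again a reachability condition in a polynomial-size graph. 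The main obstacle — and the part needing the most care — is proving the \emph{completeness} of this reduction: that closure under these elementary one-step transformations really implies full saturation, i.e., that any two normalized representations of the same UP-word are connected by a polynomially bounded chain of such steps through normalized intermediate pairs (controlling that intermediate pairs stay normalized, and handling the interaction between loop shifts on a non-primitive period and root/power steps, requires the Fine–Wilf style reasoning already hinted at in the preliminaries). For the \emph{fully} saturated case one additionally allows arbitrary (non-normalized) prefix moves, which is handled by the same product construction without the ``return to $q$'' constraint on $\T$; I would defer the details of that variant to the appendix as the statement announces.
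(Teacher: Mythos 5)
Your high-level decomposition (saturation $=$ loopshift-stability $+$ power-stability, check the first via a polynomial product construction, then the second) is exactly the paper's route, and your loopshift check matches theirs. But there is a genuine gap in how you propose to decide power-stability. You suggest running $\D_q$ on $x$ and ``on a parallel copy, on $x^j$ for a symbolically tracked $j$'' and claim this is a reachability question in a polynomial-size graph. It is not: to know where $\D_q$ is after reading $x^j$ while the witness $x$ is being searched for letter by letter, you must track either the full transition profile of $x$ on $\D_q$ (of which there are $|\D_q|^{|\D_q|}$) or a $j$-tuple of states (one per power), both exponential. This is not a fixable presentation issue -- the paper explicitly notes that deciding power-stability \emph{alone} is \PSPACE-hard (deducible from the proof of Theorem~\ref{theorem:powerclosednessPSPACEcomplete}), so no polynomial product construction for it can exist in general.

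The missing idea is that the power-stability check must \emph{exploit} the already-established loopshift-stability. The paper's key lemma (Lemma~\ref{lemma:rotationinvarianceproperty} / its appendix version) shows that once $\F$ is loopshift-stable, each minimal progress DFA $\D_u$ acquires a monoidal structure: if $x$ and $y$ loop on $u$ and reach the same state of $\D_u$, then $x^i$ and $y^i$ reach the same state for every $i$. Hence $\D_u(x^i)$ depends only on $\D_u(x)$, and the universally-quantified check over all looping $x$ collapses to one short representative per state of $\D_u$, each tested against its first $|\D_u|$ powers -- polynomially many concrete evaluations rather than a product automaton. (Proving this lemma in turn uses the standing assumption that the progress DFAs refine the leading congruence and are minimal.) You correctly flag that the characterization lemma (loopshift $+$ power $\Rightarrow$ saturation) needs care, but that part is comparatively routine; the real crux you are missing is why power-stability becomes tractable at all after the loopshift check succeeds.
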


\AP
We call $\F$ ""loopshift-stable"" if shifting the start of the looping part closer to the beginning or further from the beginning preserves acceptance.
Formally, this is the case if, for all "normalized" $(u,ax)$ it holds that $\F$ accepts $(u,ax)$ if, and only if, $\F$ accepts $(ua,xa)$. If the leading TS is trivial, then there is only one progress automaton. In this case we also say that the language of this progress automaton is loopshift-stable (it contains $ax$ if, and only if, it contains $xa$ for each word $x$ and letter $a$).
\AP
We say that $\F$ is ""power-stable"" when (de-)duplications of the loop preserves acceptance, so if for all "normalized" $(u,x)$ holds that either $\F$ accepts all $(u,x^i)$ or $\F$ rejects all $(u,x^i)$, where $i > 0$.
The following result has been established for full saturation in \cite{CianciaV12,CianciaV19} (see the subsection on related work in the introduction).

\begin{lemma}
  An "FDFA" is "saturated" if, and only if, it is "loopshift-stable" and "power-stable".%
  \label{lemma:fdfasaturationcharacterisation}
\end{lemma}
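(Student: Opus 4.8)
The plan is to prove the two directions separately. The forward direction (saturated implies loopshift-stable and power-stable) is the easy one: both properties are instances of the saturation condition. For loopshift-stability, note that if $(u,ax)$ is normalized then $u(ax)^\omega = (ua)(xa)^\omega$, and one checks that $(ua,xa)$ is again normalized (here the standing assumption that states of $\D_u$ refine $\sim_\T$ is convenient, or one argues directly that $ua$ and $uaxa$ reach the same state of $\T$ since $u$ and $uax$ do). So saturation immediately gives that $(u,ax)$ is accepted iff $(ua,xa)$ is. For power-stability, if $(u,x)$ is normalized then so is $(u,x^i)$ for every $i>0$, and $ux^\omega = u(x^i)^\omega$, so saturation gives that all the $(u,x^i)$ are treated alike.

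For the converse, assume $\F$ is loopshift-stable and power-stable, and let $(u,x)$ and $(v,y)$ be normalized with $ux^\omega = vy^\omega$; I want to show they are accepted or rejected together. The strategy is to connect the two representations by a chain of acceptance-preserving moves of three kinds: (i) cyclic rotation of the loop word (a single application of loopshift-stability turns $(u,ax)$ into $(ua,xa)$, and iterating this realizes an arbitrary conjugate of the loop, while correspondingly moving a prefix block of the loop into/out of the leading word), (ii) duplication/de-duplication of the loop (power-stability), and (iii) extending the leading word by a full copy of the loop, i.e.\ passing from $(u,x)$ to $(ux,x)$ — this is derivable from a rotation that shifts all of $x$ past the bar and back, or more directly it is the special case of loopshift-stability applied $|x|$ times. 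The key structural fact to invoke is the standard normal form for representations of ultimately periodic words: if $ux^\omega = vy^\omega$ then $\wordroot{x}$ and $\wordroot{y}$ are conjugates of each other, $x^\omega$ and $y^\omega$ have a common "aligned" power, and $u,v$ agree up to extending each by a suitable number of loop copies. Concretely, there exist $u'$, a common period word $z$ (a suitable power of a common conjugate of the roots), and integers such that $(u,x)$ reduces via moves (i)--(iii) to $(u',z)$ and $(v,y)$ reduces to $(u',z)$ as well; since each move preserves acceptance, $(u,x)$ and $(v,y)$ are treated alike.

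The main obstacle is the bookkeeping in the converse direction: making precise that the three local moves suffice to transform any normalized representation of $uv^\omega$ into a fixed canonical one, and checking that every intermediate pair stays normalized so that loopshift-/power-stability actually apply to it. This is where the Fine--Wilf / conjugacy analysis of roots of ultimately periodic words (already used in the footnote defining $\wordroot{\cdot}$) does the real work: one first uses power-stability to replace both loops by a common multiple of $|\wordroot{x}|$ and $|\wordroot{y}|$, then uses loopshift-stability to rotate one loop onto the other, absorbing or emitting whole loop-copies at the bar to reconcile the leading words, and finally uses power-stability once more to match the number of loop repetitions. Since $\F$ has finitely many states, all of this can be carried out with words of bounded length, but for the lemma we only need existence of the chain, not bounds — those are deferred to \Cref{theorem:ptimesaturationFDFA}. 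I would also remark that the "only if" part together with the cited characterisation of full saturation via \emph{circular} and \emph{coherent} lasso automata from \cite{CianciaV12,CianciaV19} yields the fully saturated analogue by simply dropping the word "normalized" throughout.
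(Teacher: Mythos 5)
Your proposal is correct and follows essentially the same route as the paper: the forward direction is the same observation that loopshifts and loop powers preserve the represented $\omega$-word (and normalizedness), and the converse is the same canonical-form argument, realized in the paper by decomposing $y=y_0y_1$ with $u=vy^ky_0$ and $x^i=(y_1y_0)^j$ and chaining loopshift- and power-stability steps, with the closure properties of the set of normalized pairs guaranteeing that all intermediate pairs remain normalized. The only cosmetic difference is that the paper aligns the two representations by shifting loop copies into the shorter prefix rather than reducing both to a minimal canonical pair, but this is the same argument.
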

\begin{proof}[Proof (sketch)]
Every "normalized" pair $(u,x)$ has a canonical form that can be obtained by first shifting as much of $u$ as possible into the looping part, and then deduplicating the obtained looping part.
So we can go back and forth between any two representations of the same $\omega$-word by a sequence of only loopshifts and (de)duplications of the looping part.
Thus, $\F$ treats all representations of an $\omega$-word the same if and only if acceptance is preserved under loopshifts and (de)duplications, and the claim is established.
\end{proof}
We now show that deciding the conjunction of "loopshift-stable" and "power-stable" is possible in polynomial time if done in the right order (from the proof of \Cref{theorem:powerclosednessPSPACEcomplete} one can deduce that deciding only "power-stable" alone is \PSPACE-hard). 
\begin{lemma}
    There exists an algorithm which given an "FDFA" $\F$, decides in polynomial time whether $\F$ is "loopshift-stable".
    Moreover, if $\F$ is not "loopshift-stable", the algorithm returns a normalized representation $(u,ax)$ of polynomial length such that $\F$ accepts $(u,ax)$ if and only $\F$ it rejects $(ua,xa)$.
  \label{lemma:rotationinvariancedecidable}
\end{lemma}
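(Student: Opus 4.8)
The plan is to reduce loopshift-stability to a reachability/equivalence question on a product construction that tracks, simultaneously, the behaviour of the two progress automata $\D_u$ (reading $ax$) and $\D_{ua}$ (reading $xa$), together with enough information about the leading TS to recognize normalized pairs. Concretely, fix the leading TS $\T$. For each pair of states $p = \T(u)$ and $q = \T(ua)$ that can arise as $q = \delta(p,a)$ for some letter $a$, I would build a DFA $\mathcal{C}_{p,a}$ over $\Sigma$ whose states are pairs $(s,t)$ with $s$ a state of $\D_p$ and $t$ a state of $\D_q$: on input letter $b$ it updates $s$ according to $\D_p$'s transition and $t$ according to $\D_q$'s transition. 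The start state is $(\D_p(a), \iota_{\D_q})$ — i.e. $\D_p$ has already read $a$ and $\D_q$ has read nothing — and we want to compare, over all words $x$ with $xa$ a loop on $p$ (equivalently $ax$ a loop on $p$, since $\T(uax) = \T(u)$ iff $\T(uxa) = \T(ua) $... careful here), whether $\D_p$ accepts $ax$ iff $\D_q$ accepts $xa$. To handle the trailing letter $a$ on the $\D_q$ side, I would instead run $\D_q$ one step behind: after reading $x$ in $\mathcal{C}_{p,a}$, the $\D_q$-component is in state $\D_q(x)$, and I then ask whether $\D_q$ accepts $xa$, i.e. whether $\delta_{\D_q}(\D_q(x), a)$ is final. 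So the "bad" states of $\mathcal{C}_{p,a}$ are those $(s,t)$ where exactly one of "$s \in F_{\D_p}$" and "$\delta_{\D_q}(t,a) \in F_{\D_q}$" holds.

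The second ingredient is to restrict attention to the normalized $x$: we only care about words $x$ such that $(u, ax)$ is normalized, i.e. $\T(p,ax) = p$. Since $\mathcal{C}_{p,a}$ does not by itself track the leading state reached by $ax$ from $p$, I would take a further product with $\T$ started at the state $\delta(p,a)$ (which is $q$), reading $x$; the accepting condition for this $\T$-component is "currently in state $p$", marking exactly those prefixes $x$ for which $ax$ is a loop on $p$. Then loopshift-stability fails at $(u, ax)$ for this particular $p, a$ iff there is a word $x$ that simultaneously puts the $\T$-component in state $p$ and the $\mathcal{C}_{p,a}$-component in a bad state. This is a single reachability query in the product automaton, which has size polynomial in $|\T|$ and $\max_u |\D_u|$ — I'd iterate over all (polynomially many) choices of $p$ and $a$, so the whole procedure is polynomial time. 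If such an $x$ is found, a shortest witnessing path has length bounded by the number of states of the product, hence polynomial, and it directly yields the normalized pair $(u, ax)$ with $u$ the length-lex minimal word reaching $p$; reading off $\F$'s verdicts on $(u,ax)$ and $(ua,xa)$ from the bad state completes the witness.

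One subtlety I need to get right is the normalization bookkeeping under loopshift: if $(u, ax)$ is normalized — so $ax$ loops on $\T(u)$ — then $xa$ loops on $\T(ua)$, so $(ua, xa)$ is automatically normalized too, which is exactly what makes the statement of the lemma well-posed; I should double-check that the product construction's "currently in state $p$" flag on the $\T$-component, started at $\delta(p,a) = q$ and reading $x$, is equivalent to "$ax$ loops on $p$" (it is: $\delta(p, ax) = \delta(q, x)$). The other point of care is the assumption, stated in the preliminaries, that if $x$ and $y$ lead to the same state in some $\D_u$ then $ux$ and $uy$ lead to the same state in $\T$; this is what guarantees that the $\D_p$-component of $\mathcal{C}_{p,a}$ already determines the $\T$-component, so that in fact the extra product with $\T$ is redundant and the state space is genuinely just pairs of progress-automaton states — but including it explicitly does no harm and keeps the argument robust.

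I expect the main obstacle to be purely notational: correctly aligning the "off-by-one-letter" in the two sides ($ax$ versus $xa$) inside a single synchronous product so that at every prefix $x$ one can read off both "$\D_p$ accepts $ax$" and "$\D_q$ accepts $xa$" from the current product state — running $\D_q$ one step behind and folding the final letter $a$ into the acceptance test, as above, is the clean way to do it, but it requires care to state precisely and to verify against the definitions of \emph{normalized} and \emph{loopshift-stable} given in the preliminaries. There is no deep difficulty beyond that; the result is essentially "loopshift-stability is a local, single-letter condition, so it reduces to emptiness of a polynomial-size product automaton."
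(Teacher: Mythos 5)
Your proposal is correct and follows essentially the same route as the paper: iterate over states $p$ of $\T$ and letters $a$, handle the off-by-one alignment by starting $\D_p$ at its $a$-successor and folding the trailing $a$ into the acceptance test on the $\D_{ua}$ side, restrict to looping words via (a product with) $\T$, and reduce to a reachability/emptiness check in a polynomial-size product, from which a short counterexample is extracted. The paper phrases this as emptiness of the symmetric difference of two auxiliary DFAs $\A_{u,a}$ and $\B_{u,a}$ rather than as reachability of ``bad'' states in one synchronous product, but these are the same construction.
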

\begin{proof}[Proof (sketch)]
It suffices to check for every state $u$ of $\T$ and symbol $a \in \Sigma$, whether there exists a word $w$ such that
$\big(aw$ is accepted by $\D_u$ and loops on $u$ in $\T\big)$
if, and only if,
$\big(wa$ is \emph{not} accepted by $\D_{ua}$ and loops on $ua$ in $\T\big)$.
Finding such a word is equivalent to testing emptiness for a DFA that can be obtained by a product construction between transition systems that are of the same size, and can easily be obtained from $\T$ and $\D_u$.
This guarantees polynomial runtime, and also means that a counterexample of polynomial length exists.
\end{proof}


\begin{lemma}
There exists an algorithm that given a "loopshift-stable" "FDFA" $\F$ in which each "progress DFA" is minimal, decides in polynomial time whether $\F$ is "power-stable".
  Moreover, if $\F$ is not "power-stable", the algorithm returns a counterexample $(u,x)$ of polynomial length such that $\F$ accepts $(u,x^i)$ and rejects $(u,x^j)$ where the exponents $i,j$ are bounded by the maximum size of a progress automaton of $\F$.%
  \label{lemma:rotationinvarianceproperty}
\end{lemma}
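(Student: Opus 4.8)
The statement to prove (Lemma~\ref{lemma:rotationinvarianceproperty}) asks for a polynomial-time test of power-stability, given a loopshift-stable FDFA whose progress DFAs are minimal. My plan is to reduce power-stability to a statement about the progress DFAs alone, exploiting the two hypotheses. Fix a state $u$ of the leading TS $\T$ and consider the progress DFA $\D_u$. A normalized pair $(u,x)$ means $x$ loops on $u$ in $\T$. Power-stability fails at $(u,x)$ if $\D_u$ accepts $x^i$ but rejects $x^j$ for some $i,j>0$ (with $x^i,x^j$ still looping on $u$, which is automatic since loops are closed under concatenation). So the task is: does there exist a word $x$ looping on $u$ such that $\{i>0 : x^i \in \LangDFA{\D_u}\}$ is neither empty nor all of $\mathbb{N}_{>0}$?

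**The key reduction.** The crucial observation is that, because $\D_u$ is \emph{minimal} and (via loopshift-stability together with the standing assumption that words leading to the same progress-DFA state also agree in $\T$) the looping condition is compatible with the state reached, one can read off the behaviour of $x^i$ purely from the state $q = \delta_u^*(\iota, x)$ reached by one copy of $x$: iterating $x$ just follows the self-loop structure in the ``$x$-graph'' on $\states{\D_u}$. Concretely, consider the functional graph of the map $q \mapsto \delta_u^*(q,x)$ on $\states{\D_u}$; starting from $\iota$ it walks into a cycle, and $x^i \in \LangDFA{\D_u}$ iff the $i$-th vertex on this walk is final. Power-stability at $x$ then fails iff this eventually-periodic sequence of accepting/rejecting labels is non-constant. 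I would phrase the whole search as an emptiness check: build a product transition system that simultaneously tracks, for a guessed word $x$, the state reached by $x$, $x^2$, and $x^3$ (three synchronized copies of $\D_u$, plus one copy of $\T$ to enforce looping on $u$), and accept when the reached states have inconsistent $F$-membership. Three copies suffice because on an eventually periodic $0/1$ sequence, if it is non-constant then among the first three ``offsets'' reachable after short loops one already sees a discrepancy — here is where I need a small combinatorial argument bounding the relevant exponents; the lemma states the bound is the size of the progress automaton, so one likely tracks copies of $\D_u$ indexed up to $|\D_u|$, or argues a discrepancy appears between exponent $1$ and some exponent $\le |\D_u|$ by a pigeonhole on the cycle length in the functional graph. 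This product has polynomially many states, so emptiness is decidable in polynomial time, and a witness path yields $(u,x)$ of polynomial length with exponents $i,j \le |\D_u|$.

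**Main obstacle.** The subtle point — and the place I expect the real work — is justifying that checking \emph{only} short exponents (bounded by the automaton size) suffices, i.e.\ that if power-stability fails at all, it fails witnessed by some $x$ with a discrepancy among exponents below $|\D_u|$. The naive worry is an $x$ for which $x, x^2, \dots, x^{N}$ are all accepted and only $x^{N+1}$ is rejected for huge $N$; this is where loopshift-stability must be used to rule out pathological examples, or alternatively where one argues via the cycle structure of the functional graph of $\delta_u(\cdot, x)$: the sequence $(\mathbf{1}[x^i \in \LangDFA{\D_u}])_{i\ge 1}$ is eventually periodic with pre-period and period each at most $|\D_u|$, so if it is non-constant it already disagrees at two indices within $\{1,\dots,2|\D_u|\}$, and minimality of $\D_u$ lets us further normalize. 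The rest — setting up the product automaton, verifying the looping bookkeeping against $\T$, and extracting the polynomial-length counterexample from an accepting path — is routine once the exponent bound is in hand.
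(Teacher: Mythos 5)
There is a genuine gap, and it sits exactly where you flagged your ``main obstacle'' --- except that the obstacle is not the one you identified. Bounding the exponents for a \emph{fixed} word $x$ is indeed easy (the sequence $i \mapsto \delta_u^*(\iota,x^i)$ is eventually periodic with pre-period and period at most $|\D_u|$). The real difficulty is the search over all $x$: a product automaton must track the states reached by $x, x^2, \dots, x^k$ while reading $x$ only once, which forces you either to take $k$ constant (your three copies), in which case the test is incomplete --- e.g.\ in a cyclic semigroup of order $5$ the elements $s,s^2,s^3$ can all be accepting while $s^4$ is not, so the first discrepancy need not occur before exponent $|\D_u|$ --- or to take $k$ up to $|\D_u|$, in which case the product (together with the guessed intermediate states $p_1,\dots,p_{k-1}$) is exponential. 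This is not an artifact of your encoding: deciding power-stability \emph{without} the loopshift-stability hypothesis is \PSPACE-hard (as the paper remarks in connection with \cref{theorem:powerclosednessPSPACEcomplete}), precisely because one must in effect track the transition profile of an unboundedly long guessed word. So no polynomial product-of-copies emptiness check over arbitrary $x$ can be the right algorithm.

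The missing idea is one you state in passing but never prove or exploit: that the acceptance status of $x^i$ depends only on the single state $q = \D_u(x)$, not on the transition profile of $x$. Establishing this is the actual content of the lemma's hypotheses. Loopshift-stability lets one shift blocks of a concatenation into the prefix and back, showing that if $x_1,\dots,x_n$ and $y_1,\dots,y_n$ all loop on $u$ with $\D_u(x_\ell)=\D_u(y_\ell)$ for each $\ell$, then $x_1\cdots x_n$ and $y_1\cdots y_n$ have the same residual, and minimality of $\D_u$ then forces $\D_u(x_1\cdots x_n)=\D_u(y_1\cdots y_n)$; in particular $\D_u(x^i)=\D_u(y^i)$ whenever $\D_u(x)=\D_u(y)$. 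With this in hand the algorithm is not an emptiness check at all: for each state $q$ of each progress DFA that is reachable by a word looping on $u$, pick one short representative $r$, explicitly compute the sequence $\D_u(r),\D_u(r^2),\dots,\D_u(r^{|\D_u|+1})$, and verify it is constant in acceptance. Your functional-graph picture of $p \mapsto \delta_u^*(p,x)$ is the version of this argument that does \emph{not} use the hypotheses and hence does not quotient the search space down to one representative per state; that quotienting is what makes the lemma true and the algorithm polynomial.
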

\begin{proof}[Proof (sketch)]
Let $u$ be a state in $\T$.
This proof makes use of the assumption that $x$ and $y$ leading to the same state in $\D_u$ implies that $ux$ and $uy$ lead to the same state in $\T$.
Specifically, this property ensures that if a state $q$ of $\D_u$ is reached by a word that loops on $u$ in $\T$, then all words reaching $q$ loop on $u$.

The loopshift-stability of $\F$ guarantees that for all words $x_1, \cdots, x_n, x'_1, \cdots, x'_n$ that loop on state $u$ in $\T$ and satisfy that $\D_u(x_i) = \D_u(x'_i)$ for $1 \leq i \leq n$, we have that $\D_u(x_1\ldots x_n) = \D_u(x'_1\ldots x'_n)$.
This means that in an arbitrary word of the form $x_1 \dots x_n$, swapping out any $x_i$ for some other $x_i'$ that leads to the same state in $\D_u$, will not change the state reached by the resulting word.
Thus, an algorithm which tests for power-stability, only has to consider at most one representative for each state $q$ of a progress DFA $\D_u$.

For each progress DFA $\D_u$, the algorithm iterates though all states $q$ of $\D_u$.
If no words reaching $q$ loop on $u$, it proceeds to the next state as such states are irrelevant when considering normalized acceptance.
Otherwise, all words that reach $q$ loop on $u$, and the algorithm picks a short representative $r$.
The sequence of states reached by the words $r, r^2, \dots$ will repeat after at most $|\D_u|$ entries.
This means that the algorithm only has to check if all of them are accepted, or all of them are rejected.
If the algorithm finds a state in some progress DFA for which this is not the case, then $\F$ is not power-stable.
As the length of $u$ and $r$ are at most $|\T|$ and $|\D_u|$, respectively, it is easy to build a counterexample with the desired properties.
\end{proof}

By first checking for "loopshift-stability", and only then testing for "power-stability" means we can decide "saturation" in polynomial time.
A similar result can be obtained for "FDWA".

\begin{theorem}
  For given "FDWA" $\W$, one can decide in polynomial time if $\W$ is "saturated".
  \label{theorem:ptimesaturationFDWA}
\end{theorem}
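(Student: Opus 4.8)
The plan is to lift the FDFA strategy while exploiting that an "FDWA" is automatically "power-stable": a pair $(u,x^i)$ is accepted exactly when $(x^i)^\omega = x^\omega$ is accepted by $\B_u$, which does not depend on $i$. The canonical-form argument behind \Cref{lemma:fdfasaturationcharacterisation} only uses the purely combinatorial fact that any two "normalized" representations of a fixed "ultimately periodic" word are connected by a sequence of loopshifts and (de)duplications of the looping part — operations that map "normalized" pairs to "normalized" pairs — so it applies verbatim to "families" and shows that a "family" is "saturated" iff it is "loopshift-stable" and "power-stable". I would therefore reduce the problem to deciding "loopshift-stability" of $\W$ in polynomial time.

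"Loopshift-stability" fails iff there are a state $p$ of the "leading TS" $\T$, a letter $a\in\Sigma$, and a word $x$ with $\delta_\T^*(p',x)=p$, where $p':=\delta_\T(p,a)$, such that exactly one of $(ax)^\omega\in\LangDBA{\B_p}$ and $(xa)^\omega\in\LangDBA{\B_{p'}}$ holds. Setting $z:=xa$ and using $(ax)^\omega=a(xa)^\omega$, this says $z^\omega\in L$ for $L:=(a^{-1}\LangDBA{\B_p})\mathbin{\triangle}\LangDBA{\B_{p'}}$. Being the symmetric difference of two "weak"-recognizable languages, $L$ is recognized by a polynomial-size "weak" automaton $\C$, namely the product of $\B_p$ (started in $\delta_{\B_p}(\iota,a)$) with $\B_{p'}$, in which a product SCC is accepting iff exactly one of its two component SCCs is accepting. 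The side condition on $z$ — it ends in $a$ and, after deleting that $a$, maps $p'$ to $p$ in $\T$ — is recognized by a DFA $\A$ with $\mathcal{O}(|\T|)$ states; write $W:=\LangDFA{\A}$. Thus, for each of the polynomially many pairs $(p,a)$, I would decide whether some $z\in W$ satisfies $z^\omega\in\LangDBA{\C}$.

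The step that keeps this polynomial — and avoids the naive, exponential ``iterate $z$ many times'' product — is the observation that such a $z$ exists iff there are a state $s$ of $\C$ lying in an accepting SCC and a word $z\in W$ with $\delta_\C^*(\iota_\C,z)=s=\delta_\C^*(s,z)$. One direction is clear: the $\C$-run on $z^\omega$ then stays in the SCC of $s$ from the first $z$-block on, so $z^\omega\in\LangDBA{\C}$. For the converse, from any $z_0\in W$ with $z_0^\omega\in\LangDBA{\C}$ I would look at the orbit $\iota_\C,\delta_\C^*(\iota_\C,z_0),\delta_\C^*(\iota_\C,z_0^2),\dots$, which has a tail of length below $|\C|$ followed by a cycle of some length $k$ whose states lie in the accepting SCC to which the run on $z_0^\omega$ confines; then $z:=z_0^{\,k|\C|}$ works, since applying it once lands on that cycle at a state it also fixes, $z^\omega=z_0^\omega\in\LangDBA{\C}$, and $z\in W$ because $z_0$ loops on $p'$ in $\T$ (so $z_0^{\,k|\C|}$ still ends in $a$ with its prefix mapping $p'$ to $p$). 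Given the observation, for every $(p,a)$ and every candidate $s$ (a state in an accepting SCC of $\C$, of which there are polynomially many) I would test non-emptiness of the polynomial-size product $\A\times\C\times\C$ — the two $\C$-copies tracking the runs from $\iota_\C$ and from $s$ — which is a reachability check, and conclude that $\W$ is not "saturated" precisely if one of these tests succeeds. The main obstacle is the converse direction of this observation: turning an arbitrary witness into one with a one-step attractor and a self-loop without leaving $W$, which is exactly where the "power-stability" of "FDWA"s enters.
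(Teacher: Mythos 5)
Your proof is correct, and it is a legitimate, fully worked-out realization of the one-line plan in the paper's proof sketch (power-stability is automatic for FDWAs, so only loopshifts need checking); however, the way you discharge the remaining check differs from the paper's actual argument. The paper does not go through the per-letter loopshift condition at all: it normalizes an arbitrary witness of non-saturation (a normalized pair $(u,x)$ and $(v,y)$ with $ux^\omega=vy^\omega$ treated differently) into a canonical shape involving three progress states $p,q,r$ and words $\hat x,\hat y$ built with the exponent $e=k!$ from the decomposition $u=vy^m y_0$, $x^i=(y_1y_0)^j$, and then searches for that shape with product automata. You instead (a) invoke the characterization ``saturated iff loopshift-stable and power-stable'', whose proof indeed only uses the combinatorial connectivity of representations and the closure properties of the set of normalized pairs, so it transfers verbatim to FDWAs; (b) reduce the single-letter loopshift check to deciding, for each leading state $p$ and letter $a$, whether some $z$ in a regular side-condition language $W$ has $z^\omega$ in the symmetric difference of two weak languages; and (c) make that decidable by reachability via the self-loop normalization $z:=z_0^{k|\C|}$, which replaces the paper's $e=k!$ trick. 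Your route is more local and modular (each test compares only two progress automata along one transition of the leading TS, and the weak symmetric-difference product plus the eventually-periodic-orbit argument are both standard), while the paper's route avoids re-proving the characterization lemma for the $\omega$-semantics by attacking saturation witnesses directly. Both yield polynomial bounds of comparable quality, and your argument for why $z_0^{k|\C|}$ stays in $W$ (because $z_0$ loops on $p'$ in $\T$) and lands on a self-fixed state of the accepting SCC is exactly the point that needed care; it is handled correctly.
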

\begin{proof}[Proof (sketch)]
As acceptance in an FDWA is naturally invariant under (de)duplication of the looping part, we only have to check whether $\W$ is invariant under loopshifts.
This can be done through an approach similar to the one used in~\cref{lemma:rotationinvariancedecidable} adapted to the $\omega$-semantics of the progress automata of an "FDWA".
\end{proof}%
\subsection{Application to Learning}\label{sec:learning}
In this section we show that the polynomial-time algorithms for (full) saturation of FDFAs can be used to build polynomial time learning algorithms that can learn the "regular $\omega$-languages" represented by FDFAs.
More specifically, we provide a polynomial-time active learner for fully saturated FDFAs, as well as a polynomial-time passive learner that can learn syntactic FDFAs from polynomial data.
As mentioned in the introduction, these are the first polynomial algorithms for representations of the full class of "regular $\omega$-languages".

In general, in automata learning, there is a class $\L$ of potential target languages for which the algorithm should be able to infer a representation in a class $\C$ of automata that represent languages in $\L$. 
\AP
For active learning, we consider the standard setting of a minimally adequate teacher from \cite{Angluin87}, in which an ""active learner"" can ask membership and equivalence queries. 
Membership queries provide information on the membership of words in the target language. An equivalence query can be asked for automata in $\C$, and if the automaton does not accept the target language, then the answer is a counter-example in the symmetric difference of the target language and the language defined by the automaton.
The running time of an "active learner" for the automaton class $\C$ with oracles for a language $L$ as input is measured in the size of the minimal automaton for $L$ (in the class $\C$ under consideration), and the length of the longest counter-example returned by the equivalence oracle during the execution of the algorithm. 
It is by now a well-known result that there are polynomial time active learners for the class of regular languages represented by DFAs \cite{Angluin87,RivestS93,KearnsV94}. Based on such a DFA learner and the full saturation check (\cref{theorem:ptimesaturationFDFA}) we can build an active learner for fully saturated FDFAs.

\begin{restatable}{rsttheorem}{activelearner}
There is a polynomial time "active learner" for the class of "regular $\omega$-languages" represented by "fully saturated FDFAs".
\label{theorem:activelearner}
\end{restatable}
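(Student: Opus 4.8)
The plan is to adapt the approach of Farzan et al.~\cite{FarzanCCTW08}, which learns the $u\$v$ representation of a regular $\omega$-language via a DFA learner, but to replace the intermediate NBA (and its counter-example handling) by an FDFA on which we run the full-saturation check of \Cref{theorem:ptimesaturationFDFA}. Concretely, let $L$ be the target $\omega$-language with oracles $\moracle$ and $\eoracle$, and let $\F^*$ be a smallest fully saturated FDFA for $L$; we want the running time polynomial in $|\F^*|$ and in the length of the longest counter-example. We internally run an Angluin-style active learner for DFAs \cite{Angluin87,RivestS93,KearnsV94} whose target is the finite language $K_{\F^*} = \{u\$v \mid \F^* \text{ accepts the pair } (u,v)\} \subseteq \Sigma^* \$ \Sigma^+$: since $\F^*$ is fully saturated, $K_{\F^*}$ is exactly the ``$L_\$$'' language of $L$ restricted to the transition structure of $\F^*$, and a DFA for it has size polynomial in $|\F^*|$. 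Membership queries ``$u\$v \in K_{\F^*}$?'' of the DFA learner are answered by the $\omega$-membership oracle: return whatever $\moracle(uv^\omega)$ says (this is legitimate because $\F^*$ is saturated, hence agrees with $L$ on all pairs representing the same UP-word).

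When the DFA learner produces a hypothesis DFA $\D$ over $\Sigma \cup \{\$\}$, we first decompose it into an FDFA $\F_\D$ in the obvious way: the part of $\D$ reading the prefix (before the first $\$$) becomes the leading TS, and for each leading state the residual automaton reading the part after $\$$ becomes its progress DFA (discarding any transitions that read a second $\$$, and all unreachable/dead states). Now run the polynomial-time check of \Cref{theorem:ptimesaturationFDFA} on $\F_\D$. There are two cases. (i) If $\F_\D$ is \emph{fully saturated}, then $\F_\D$ defines a bona fide regular $\omega$-language $\LangFDFA{\F_\D}$, so we may legitimately submit it to the $\omega$-equivalence oracle $\eoracle$. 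If $\eoracle$ answers ``yes'', we are done. If it returns an ultimately periodic counter-example $w = uv^\omega \in L \triangle \LangFDFA{\F_\D}$, we hand the finite word $u\$v$ back to the DFA learner as a counter-example for $K_{\F^*}$: indeed $\F_\D$ accepts $(u,v)$ iff $u\$v \in \LangDFA{\D}$, and $u \$ v \in K_{\F^*}$ iff $uv^\omega \in L$ (saturation of $\F^*$), so $w \in L \triangle \LangFDFA{\F_\D}$ translates precisely to $u\$v \in K_{\F^*} \triangle \LangDFA{\D}$. (ii) If $\F_\D$ is \emph{not} fully saturated, \Cref{theorem:ptimesaturationFDFA} hands us pairs $(u,x),(v,y)$ of polynomial length with $ux^\omega = vy^\omega$ such that $\D$ accepts $u\$x$ but rejects $v\$y$. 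Since $\F^*$ is saturated it treats both pairs identically, so at least one of $u\$x$, $v\$y$ is classified incorrectly by $\D$ with respect to $K_{\F^*}$; a single membership query $\moracle(ux^\omega)$ tells us the correct class of \emph{both}, hence identifies one of these two finite words as a counter-example to return to the DFA learner.

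For termination and the complexity bound: each counter-example we feed back to the DFA learner is genuine (it lies in the symmetric difference of $K_{\F^*}$ with the current hypothesis), so the DFA learner makes progress and, after polynomially many queries in $|K_{\F^*}|$ (hence in $|\F^*|$) and in the length of the longest counter-example, it converges to a DFA $\D$ with $\LangDFA{\D} = K_{\F^*}$; at that point $\F_\D$ is fully saturated and accepts $L$, so $\eoracle$ answers ``yes''. Every step between consecutive DFA-learner queries is polynomial: the FDFA decomposition is linear, the saturation check is polynomial by \Cref{theorem:ptimesaturationFDFA}, and we ask at most one $\omega$-membership or $\omega$-equivalence query. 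All counter-examples produced are of polynomial length (those from the saturation check by \Cref{theorem:ptimesaturationFDFA}, those from $\eoracle$ by hypothesis), so they do not blow up the DFA learner's running time beyond the stated measure.

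The main obstacle, and the point requiring the most care, is the bookkeeping that keeps the whole loop ``honest'': we must never submit an FDFA to $\eoracle$ unless it is fully saturated (otherwise $\LangFDFA{\F_\D}$ is undefined and the query is meaningless — this is exactly the gap in \cite{AngluinF16} discussed in the introduction), and we must argue that in the non-saturated branch the witness returned by \Cref{theorem:ptimesaturationFDFA} really does yield a \emph{finite-word} counter-example for the DFA learner rather than leaving the learner stuck. The first is handled by always gating the equivalence query behind the saturation check; the second hinges on the observation above that saturation of the \emph{target} $\F^*$ forces $u\$x$ and $v\$y$ into the same class of $K_{\F^*}$, so a discrepancy in $\D$ on these two words is necessarily an error of $\D$ relative to $K_{\F^*}$. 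One should also double-check the standard subtlety that the DFA learner's hypotheses may have size temporarily incomparable to $|\F^*|$; but since $K_{\F^*}$ has a DFA of size polynomial in $|\F^*|$ and the learners of \cite{Angluin87,RivestS93,KearnsV94} are polynomial in the size of the minimal DFA for the target, this is not an issue.
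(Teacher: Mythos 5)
Your proposal is correct and follows essentially the same route as the paper's proof: run a polynomial-time DFA learner on the $u\$v$ language, convert each hypothesis DFA into an FDFA, gate the $\omega$-equivalence query behind the full-saturation check of \Cref{theorem:ptimesaturationFDFA}, and turn either a saturation violation (resolved by one membership query) or an $\eoracle$ counter-example into a finite-word counter-example for the DFA learner. The only detail the paper makes explicit that you leave implicit is that membership queries for malformed words (no $\$$ or several $\$$) are answered negatively and such words are filtered out by the FDFA construction, which does not affect correctness.
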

\begin{proof}[Proof (sketch)]
  The learning algorithm uses a polynomial time active learner for DFAs in order to learn a DFA for the language $L_\$ := \{u\$v \mid uv^\omega \in L\}$ over the alphabet $\Sigma \cup \{\$\}$. Membership queries can easily be answered using the membership oracle for ultimately periodic words. On equivalence queries of the DFA learner, our algorithm first translates the DFA over $\Sigma \cup \{\$\}$ into an FDFA $\F_\A$ with $\ReprFDFA{\F_\A} = \{(u,v) \in \Sigma^* \times \Sigma^+ \mid u\$v \in \LangDFA\A\}$, which is a straight-forward construction. 
  Then it checks if the FDFA is "fully saturated" (\cref{theorem:ptimesaturationFDFA}), and if not returns a counter-example to the DFA learner derived from the example for failure of full saturation.
  If the FDFA is fully saturated but there is a counter-example $(u,v)$ this is passed on as $u\$v$ to the DFA learner. 
\end{proof}

A ""passive learner"" gets as input two sets $S_+$ of examples that are in the language and $S_-$ of examples that are not. 
It outputs an automaton from $\C$ such that all examples from $S_+$ are accepted and all examples from $S_-$ are rejected. The pair $S = (S_+,S_-)$ is referred to as sample. It is an $L$-sample if all examples in $S_+$ are in $L$, and all examples in $S_-$ are outside $L$. For the class of "regular $\omega$-languages" represented by their syntactic FDFAs, the sets $S_+$ and $S_-$ contain representations of ultimately periodic words, and the passive learner outputs a syntactic FDFA $\F$ that is consistent with $S$, that is, $S_+ \subseteq \ReprFDFA\F$ and $S_- \cap \ReprFDFA\F = \emptyset$.
In the terminology of \cite{Gold78}, we say that a passive learner $f$ can learn automata from $\C$ for each language in $\L$ in the limit if for each $L \in \L$ there is an $L$-sample $S_L$, such that $f(S_L)$ returns an automaton $\A$ that accepts $L$, and for each $L$-sample $S$ that extends $S_L$ (contains all examples from $S_L$), $f(S)$ returns the same automaton $\A$. Such a sample $S_L$ is called a characteristic sample for $L$ and $f$. Further, $f$ is said to learn automata from $\C$ for each language in $\L$ in the limit from polynomial data if for each $L$ there is a characteristic sample for $L$ and $f$ of size polynomial in a minimal automaton for $L$ (from the class $\C$).

\begin{restatable}{rsttheorem}{passivelearner}
There is a polynomial time "passive learner" for "regular $\omega$-languages" represented by syntactic FDFAs that can infer a "syntactic FDFA" for each "regular $\omega$-language" in the limit from polynomial data.
\label{theorem:passivelearner}
\end{restatable}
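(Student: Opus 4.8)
The plan is to reduce the construction of the "syntactic FDFA" $\F_L$ of the target $L$ to two problems that already admit polynomial-time, polynomial-data passive learners: recovering the "leading TS" $\T_{\sim_L}$ as a right congruence, and, for each of its states $q$, recovering the minimal "progress DFA" $\D^L_q$ as an ordinary DFA over $\Sigma$. On input $S = (S_+,S_-)$ the learner first runs a state-merging (RPNI-style) procedure, in the spirit of the right-congruence learners of \cite{AngluinFS20,BohnL21}, on the prefixes of the first components occurring in $S$; two candidate states reached by $u$ and by $v$ are kept apart whenever $S$ contains pairs $(ux,y)$ and $(vx,y)$ on opposite sides. This yields a "leading TS" $\T$. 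The learner then partitions the sample: for every state $q$ of $\T$ it collects the second components of those "normalized" (with regard to $\T$) pairs whose first component is mapped to $q$, keeping the $S_+/S_-$ labels, and feeds this sub-sample to a standard polynomial-time polynomial-data DFA learner (as in \cite{Gold78}) to obtain a "progress DFA" $\D_q$; the output is $\F = (\T,(\D_q)_q)$. All subroutines and the extraction steps are polynomial in $|S|$, so the learner runs in polynomial time; for consistency one first rewrites every pair of $S$ into its canonical "normalized" form with respect to $\T$ (as in the proof of \cref{lemma:fdfasaturationcharacterisation}), after which consistency of each $\D_q$ with its sub-sample lifts to consistency of $\F$ with $S$.

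Next I would build, for each "regular $\omega$-language" $L$, a characteristic $L$-sample $S_L$ of size polynomial in $\F_L=(\T_{\sim_L},\D^L)$; put $n = |\T_{\sim_L}|$, $k = \max_q |\D^L_q|$, and let $u_q$ be the length-lexicographically minimal access word of a state $q$ of $\T_{\sim_L}$. The sample $S_L$ should contain (i) for the "leading TS": the access pairs and transition-confirmation pairs that the state-merging procedure needs (realized by pairs $(u_q,x)$ and $(u_q a,x)$ for suitable short $x$), and, for each pair of distinct states $q\neq q'$, a \emph{separating} pair --- words $x,y$ such that $(u_q x,y)$ and $(u_{q'}x,y)$ are both "normalized" for $\T_{\sim_L}$ and exactly one of them is accepted by $\F_L$; and (ii) for each state $q$, a characteristic DFA-sample for $\D^L_q$, presented as pairs $(u_q,x)$. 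Every pair we add is "normalized" for $\T_{\sim_L}$, and since $\F_L$ is "saturated" and correct on "normalized" pairs \cite[Theorem~5.7]{AngluinBF18}, its intended label agrees with membership of the represented "ultimately periodic" word in $L$; hence $S_L$ is an $L$-sample. Given any $L$-sample $S\supseteq S_L$, the separating pairs block every merge of two distinct $\sim_L$-classes and the access/transition pairs fix the transitions, so the state-merging procedure returns exactly $\T_{\sim_L}$; consequently each per-state sub-sample contains the characteristic DFA-sample of the corresponding $\D^L_q$, so the DFA learner returns exactly $\D^L_q$, and the overall output is $\F_L$. Monotonicity under further extension is inherited from the two subroutines, which establishes identification in the limit from polynomial data.

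The quantitative heart of the argument --- and the step I expect to be the main obstacle --- is that the separating pairs can be taken of polynomial length. Fix $q\neq q'$ and consider the product of $\T_{\sim_L}$ started in $q$ with $\T_{\sim_L}$ started in $q'$, which has at most $n^2$ states; for a reachable product state $(p,p')$, the set of words $y$ that loop on $(p,p')$ in the product and are accepted by exactly one of $\D^L_p,\D^L_{p'}$ is recognized by a further product automaton of size at most $n^2k^2$, so if it is non-empty it contains a word of length at most $n^2k^2$, while the access word $x$ reaching $(p,p')$ has length at most $n^2$; this gives a "normalized" separating pair $(u_q x,y)$ of polynomial length. A suitable reachable $(p,p')$ exists because $q\neq q'$ gives $u_q^{-1}L\neq u_{q'}^{-1}L$, so some "ultimately periodic" word $x_0 y_0^\omega$ lies in exactly one of the two quotients; replacing $x_0$ by $x_0 y_0^m$ and $y_0$ by $y_0^\ell$ for a large enough $m$ and a common $\ell$ makes this representation loop from both $q$ and $q'$, pinning down such a $(p,p')$. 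Apart from this bound, the delicate points are the ordering of the two phases --- the per-state sub-samples only carry the intended information once $\T$ has been forced to equal $\T_{\sim_L}$ --- and the treatment of non-"normalized" pairs in arbitrary input samples through the normalization preprocessing, which must preserve consistency without disturbing the limit behaviour on characteristic samples.
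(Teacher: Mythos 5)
Your construction of the characteristic sample and your two-phase inference (separating the $\sim_L$-classes to recover the "leading TS", then learning each "progress DFA" from the per-state sub-samples) matches step~1 of the paper's algorithm quite closely, and your argument for polynomial-length separating pairs is a reasonable way to justify the polynomial-data bound that the paper states more briefly. However, there is a genuine gap: a "passive learner" for the class of syntactic FDFAs must output an automaton \emph{from that class} on \emph{every} input sample, not only on samples extending the characteristic one. On a small or adversarial $L$-sample, your state-merging plus per-state DFA learning will produce \emph{some} FDFA, but nothing guarantees it is "saturated", let alone that it is the "syntactic FDFA" of any "regular $\omega$-language" --- it may not even define a regular $\omega$-language. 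Your proposal never addresses this.

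This missing step is precisely where the paper's main technical contribution enters: after constructing $\F$ as in your phase one, the paper's learner invokes the polynomial-time saturation check of \cref{theorem:ptimesaturationFDFA} to test whether $\F$ is a syntactic FDFA consistent with $S$ (saturation plus the guarantee from the construction that no smaller leading congruence is consistent with the sample suffices for this), and if the test fails it falls back to a default syntactic FDFA accepting exactly the "ultimately periodic" words represented in $S_+$. Without this check and fallback, the output is not guaranteed to lie in the target class, so the object you describe does not meet the definition of a passive learner for syntactic FDFAs. A secondary weakness is your consistency argument: consistency is defined via $\ReprFDFA{\F}$ on the \emph{original} pairs of $S$, so training only on their "normalized" rewritings does not by itself lift to consistency with $S$ unless the resulting FDFA is saturated --- which again is exactly what must be verified rather than assumed.
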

\begin{proof}[Proof (sketch)]
  The passive learner constructs an FDFA $\F$ from the given sample $S$ using techniques known from passive learning of DFAs. It then checks whether $\F$ is a "syntactic FDFA" that is consistent with $S$ and returns $\F$ if yes.
  Otherwise, it returns a default FDFA that accepts precisely all representations of the ultimately periodic words in $S_+$. The test whether $\F$ is a "syntactic FDFA" can then be done by checking whether $\F$ is saturated (\cref{theorem:ptimesaturationFDFA}). We guarantee in step 1 that there is no smaller leading congruence for the given sample. 
\end{proof}
\subsection{Almost Saturation}\label{section:almostsaturation}
"Saturation" is sufficient, but not necessary to guarantee $\omega$-regular semantics.
In \cite{LiST23}, the weaker notion of \emph{almost saturation} is shown to also be sufficient, and in this section we consider the complexity of deciding whether an "FDFA" has this property.
\AP
We say that $\F = (\T, \D)$ is ""almost saturated"" if $(u,x) \in \NormFDWA{\F}$ implies $(u,x^i) \in \NormFDWA{\F}$ for all $i > 1$.
In other words, once such an FDFA accepts a loop, it must also accept all repetitions of it, which means that the languages accepted by the progress automata are closed under taking powers of the words in the language.

\begin{restatable}{rsttheorem}{almostsaturationPSPACEcomplete}
  It is \PSPACE-complete to decide whether an "FDFA" is "almost saturated".%
  \label{theorem:powerclosednessPSPACEcomplete}
\end{restatable}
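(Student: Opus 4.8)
To show \PSPACE-completeness of deciding almost saturation, I would prove membership in \PSPACE\ and \PSPACE-hardness separately.

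\textbf{Upper bound.} For membership in \PSPACE, observe that an FDFA $\F = (\T,\D)$ fails to be almost saturated exactly when there exist a state $u$ of $\T$, a normalized pair $(u,x)$ accepted by $\D_u$, and an exponent $i > 1$ such that $x^i$ loops on $u$ (which it does automatically, since $x$ does) but $x^i$ is rejected by $\D_u$. So it suffices, for each progress DFA $\D_u$, to decide whether the language $\loops{\T}{u} \cap \LangDFA{\D_u}$ is closed under the power operation $x \mapsto x^i$. This is a statement about the finite automaton $\D_u$ restricted to words looping on $u$: one can guess $x$ letter by letter while tracking the pair of states reached in $\D_u$ by $x$ and by $x^i$ (using a running copy that feeds back the current candidate period), but since $i$ is unbounded one instead reasons via the transition monoid. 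Concretely, $x \in \LangDFA{\D_u}$ but $x^i \notin \LangDFA{\D_u}$ for some $i$ iff the transformation $\tau_x$ induced by $x$ on the states of $\D_u$ has the property that $\tau_x$ maps $\iota$ into $F$ but some power $\tau_x^i$ does not; since the powers of a single transformation on a set of size $k$ become periodic after at most $k$ steps, this is a property one can check by iterating $\tau_x$ at most $k$ times. So the whole check is: does there exist a word $x$ looping on $u$ whose transformation $\tau_x$ on $\D_u$ witnesses non-closure under powers? This is emptiness of an NFA obtained by a subset-of-transformations (transition-profile) construction, which has exponentially many states but only needs nondeterministic logarithmic space in its own size, hence \PSPACE\ overall by Savitch. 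I would phrase this cleanly using transition profiles, referencing the ``Transition profiles'' notion from the preliminaries.

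\textbf{Lower bound.} For \PSPACE-hardness I would reduce from a canonical \PSPACE-complete problem about automata — the universality problem for NFAs, or equivalently the nonemptiness-of-complement / intersection-nonemptiness problem for DFAs. The idea is to encode into a single progress DFA the requirement that ``the power operation does not preserve the language'' so that it holds iff the NFA is non-universal. Given an NFA $\A$ over $\Sigma$, one builds an FDFA with trivial leading TS whose single progress DFA $\D$ accepts, roughly, words of the form $w\#$ (with $\#$ a fresh separator) where $w \in L(\A)$, together with all ``long'' words that arise as genuine powers $(w\#)^i$ with $i>1$ being forced to be rejected unless $w\# \cdot w\# \cdots$ stays inside the language — the separator $\#$ guarantees that a power $(w\#)^i$ is reached only by concatenating the block $w\#$ with itself, so the power operation is about repeating one block rather than arbitrary refactorizations. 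With the right bookkeeping, $\D$ is almost saturated iff every block $w$ accepted by $\A$ is such that all its repetitions are also accepted, which (after arranging the acceptance of repeated blocks to depend on $\A$ accepting each copy) reduces to $\A$ accepting all words, i.e. universality of $\A$. I would take care that $\D$ has polynomially many states — this is where determinizing must be avoided, and the trick is that $\D$ only needs to run $\A$ on a \emph{single} block between separators, resetting at each $\#$, so a product of $\A$'s state set with a small counter suffices; nondeterminism of $\A$ is simulated inside $\D$ across the letters of one block by... — and here is the subtlety — one cannot simulate NFA nondeterminism inside a DFA without blow-up, so instead I would reduce from the \emph{emptiness of intersection of many DFAs} $\D_1,\dots,\D_m$ (\PSPACE-complete), encoding a word as $w_1 \# w_2 \# \cdots \# w_m \#^\ast$ and arranging that the block structure of powers forces checking $\D_j$ on $w_j$, with the failure of power-closure corresponding to nonempty intersection.

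\textbf{Main obstacle.} The genuinely delicate step is the hardness reduction: I must engineer a \emph{polynomial-size} progress DFA $\D$ (and possibly a small leading TS) whose almost-saturation status is equivalent to the answer of a \PSPACE-complete problem, and the constraint is that $\D$ is deterministic, so I cannot freely embed nondeterministic guessing. The technically right source problem is therefore one phrased in terms of deterministic devices — intersection-emptiness of polynomially many DFAs, or the reachability/non-reachability problem in a product of DFAs — and the work lies in using the structure of the power operation $x \mapsto x^i$ (repetition of one block) to force the product-style universally-quantified check. Getting the separators and the acceptance conditions on repeated blocks exactly right, so that ``some accepted loop has a rejected power'' coincides with the intended yes-instance and nothing spurious sneaks in from short or ill-formed words, is the part that needs the most care.
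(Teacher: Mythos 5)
Your upper bound is correct and is essentially the paper's argument: nondeterministically guess the word $x$ letter by letter while maintaining the transformation it induces on the states of the progress DFA (and on the leading TS, to certify that $x$ loops on $u$), and check whether the initial state is sent into the final states by this transformation but out of them by some power of it; since the powers of a single transformation on $k$ states become eventually periodic within $k$ steps, the whole check runs in nondeterministic polynomial space, hence \PSPACE by Savitch.

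The hardness direction, however, has a genuine gap. You correctly land on intersection non-emptiness of DFAs $\D_1,\dotsc,\D_p$ as the source problem and on the architecture of $\#$-separated blocks over a trivial leading TS, but you stop exactly where the construction has to be pinned down, and the two ingredients that make it sound are absent. First, the single progress DFA must accept the \emph{complement} of $\#\LangDFA{\D_1}\#\LangDFA{\D_2}\#\dotsm\#\LangDFA{\D_p}$: then $x=\#u$ with $u$ in the intersection is accepted (it is too short to lie in the chained language) while $x^p$ is rejected, which is the required witness of non-almost-saturation. Second, and more importantly, $p$ must be made \emph{prime} (pad the instance with universal DFAs; Bertrand--Chebyshev gives a prime between $p$ and $2p$). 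Without this the reduction is unsound: a rejected power $x^i$ decomposes as $\#u_1\#\dotsm\#u_m$ with $m\cdot i=p$, and for composite $p$ one can have, say, $p=4$ and $x=\#u_1\#u_2$ with $u_1\in\LangDFA{\D_1}\cap\LangDFA{\D_3}$ and $u_2\in\LangDFA{\D_2}\cap\LangDFA{\D_4}$, so that $x$ is accepted and $x^2$ rejected even though $\bigcap_j\LangDFA{\D_j}=\emptyset$. Primality of $p$ together with $i>1$ forces $m=1$, so every witness is a single block $\#u$ with $u$ in the full intersection. These are exactly the points you flag as needing the most care; as written, the proposal does not yet contain a working reduction.
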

\begin{proof}[Proof (sketch)]
    Membership in \PSPACE can be established by an algorithm that guesses a word showing that the FDFA is not almost saturated.
    To show \PSPACE-hardness, we reduce from the DFA intersection problem, which asks whether the intersection of an arbitrary sequence $\D_1,\dotsc,\D_p$ of DFAs is non-empty and is known to be \PSPACE-complete~\cite{Kozen77}.
    We assume that the number $p$ of automata is prime, otherwise we pad the sequence with universal DFAs.
    We introduce a fresh symbol $\#$ and construct the "FDFA" $\F = (\T, \A)$ where $\T$ is trivial and $\A$ is a DFA for the complement of $\# \LangDFA{\D_1} \# \LangDFA{\D_2} \# \dotsi \# \LangDFA{\D_p}$. Then a word $x$ is accepted by $\A$ while $x^i$ is rejected iff $x=\#u$ for $u$ accepted by all the $\D_j$.
\end{proof}%
%
\knowledge{notion}
| UP-language
\knowledgenewrobustcmd{\synL}{\ensuremath{\mathrel{\cmdkl{\approx_P}}}}
\knowledgenewrobustcmd{\synN}{\ensuremath{\mathrel{\cmdkl{\approx_\N}}}}
\knowledgenewrobustcmd{\tpN}{\ensuremath{\mathop{\cmdkl{\tau_\N}}}}
\knowledge{notion}
| accepting
\knowledge{notion}
| rejecting
\knowledge{notion}
| terminal@tp
\knowledgenewrobustcmd{\Ter}{\ensuremath{\mathop{\cmdkl{\op{Ter}}}}}
\knowledge{synonym}
| terminal word
| terminal words
| terminal@word
\knowledgenewrobustcmd{\Ret}{\ensuremath{\mathop{\cmdkl{\op{Ret}}}}}
\knowledge{notion}
| good witness
\knowledge{notion}
| FNFA

\section{Regularity}\label{section:regularity}
We only consider normalized representations in this section, following other works on FDFAs~\cite{AngluinBF18,LiCZL21,LiST23}, and provide proof details in~\ifappendixelse{\cref{section:appregularity}}{the full version}. We believe that the methods that we present in this section  also work for the non-normalized semantics, but this requires some extra details, so we prefer to stick to one setting.

An FDFA or FDWA $\F$ defines a \AP ""UP-language"" (a set of ultimately periodic words) through the representations it accepts.
If this UP-language $L$ is "UP-regular", that is, there is an $\omega$-regular language $L'$ such that $L = \UP{\Sigma}\cap L'$, then $L'$ is unique and $\F$ defines this $\omega$-regular language.
For FDFAs it is known that they define $\omega$-regular languages if they are saturated \cite{AngluinBF18} or almost saturated \cite{LiST23}.
But in general, the "UP-language" of an FDFA needs not to be "UP-regular", which is witnessed e.g. by an FDFA accepting pairs of the form  $(u,ba^i)$ for which the "UP-language" is $\bigcup_{i \in \mathbb{N}}\left(\Sigma^*(ba^i)^\omega\right)$, which is not "UP-regular" \cite[Example~2]{LiST23}.

The main result in this section is that the \emph{regularity problem for FDFAs} is decidable: given an FDFA, decide whether it defines a regular $\omega$-language, that is, whether its "UP-language" is "UP-regular". We show that the problem is \PSPACE-complete. Note that "almost saturation" is \emph{not} a necessary condition, as witnessed by a simple progress DFA accepting $a^i$ for all odd integers $i$, which clearly defines a "UP-regular" language for the trivial "leading TS" but is not almost saturated.

But let us start with the observation that for "FDWAs" there is no decision problem because the "UP-language" is always "UP-regular", which can be shown by the same construction to NBAs that is used for (almost) saturated FDFAs (which goes back to \cite{CalbrixNP93}).

\begin{restatable}{rstlemma}{rstFDWAregular} \label{lemma:FDWAregular}
  For every "FDWA" $\W = (\T,\B)$, the "UP-language" $\{ux^\omega \mid x^\omega \text{ is accepted by } \B_u\}$ is "UP-regular"
\end{restatable}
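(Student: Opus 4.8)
The plan is to mimic the classical Calbrix--Nivat--Podelski construction for turning a $\$$-representation into an NBA, adapting it to the FDWA setting. Given $\W = (\T,\B)$, I want to build an NBA $\N$ over $\Sigma$ whose language $L'$ satisfies $L' \cap \UP{\Sigma} = \{ux^\omega \mid x^\omega \text{ is accepted by } \B_u\}$. The idea is: for each state $q \in \states{\T}$ and each final SCC $S$ of the progress DWA $\B_q$, the set of finite words $v$ such that $v$ loops on $q$ in $\T$ and the run of $\B_q$ on $v^\omega$ ends up cycling in $S$ is a regular language; more precisely, I want a DFA capturing "$v$ leads $\B_q$ from $\iota$ into $S$ and stays in $S$" (a partial run condition) — call this $R_{q,S}$. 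Then the NBA guesses the state $q$ reached in $\T$, reads a prefix $u$ with $\T(u) = q$, guesses $S$, and then repeatedly reads words from $R_{q,S}$, accepting if it does so infinitely often. The subtlety is that a single letter-by-letter reading of $v^\omega$ does not necessarily stay inside $S$ the whole time — only $v^\omega$ eventually cycles in $S$. I handle this exactly as in the DFA-for-$L_\$$ approach: $v^\omega$ being accepted by the weak automaton $\B_q$ is equivalent to: the state $\B_q(v^n)$ stabilizes (the weak/deterministic structure forces $\B_q(v^n)$ to become periodic, and "accepted" means it settles in an accepting SCC); so for $v$ with $\B_q(\iota,v) = p$ we can detect acceptance of $v^\omega$ by iterating $v$ until the state repeats.

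Concretely, the steps I would carry out: (1) Fix $q$ and note $\B_q$ is weak, so SCCs are uniformly accepting or rejecting; for a loop $v$ on $q$, $v^\omega$ is accepted iff, writing $p_0 = \iota$ and $p_{i+1} = \B_q^*(p_i, v)$, the eventually-periodic sequence $p_0, p_1, \dots$ settles in an accepting SCC — equivalently, iff there is some $i \le |\B_q|$ with $p_i = p_j$ for some $j<i$ and $p_i$ in an accepting SCC, or more simply, iff $\B_q^*(p_{|\B_q|}, v^{|\B_q|!})$... — actually cleaner: $v^\omega$ accepted iff the unique state $r$ with $\B_q^*(r, v) = r$ reachable from $\iota$ by a power of $v$ lies in an accepting SCC. (2) For each $q$ and each such "stable accepting state" $r$, build a DFA $\D_{q,r}$ over $\Sigma$ accepting all $v$ that loop on $q$ in $\T$ and for which the $v$-iteration from $\iota$ reaches $r$ and $\B_q^*(r,v) = r$; this is a straightforward product of $\T$ (restricted to the loop-on-$q$ condition, i.e. start and end at $q$) with $\B_q$ and a copy of $\B_q$ tracking the return to $r$. (3) Assemble the NBA: an initial part that runs $\T$ and, upon nondeterministically declaring "I am now at $q$ and will repeat words from $\D_{q,r}$", transitions into a gadget that cycles through $\D_{q,r}$, revisiting its (accepting, looping) checkpoint infinitely often via Büchi acceptance. (4) Verify correctness on ultimately periodic words: $ux^\omega \in L'$ with $u$ leading to $q$ iff $x^\omega$ (or a conjugate/power, using that $\B_q$ accepts $x^\omega$ is invariant under these since acceptance of $x^\omega$ by a weak DBA only depends on $x^\omega$) is accepted by $\B_q$; and conversely any accepted $ux^\omega$ of the target UP-language decomposes this way after choosing an appropriate factorization of the periodic part into blocks from $\D_{q,r}$.

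The main obstacle I expect is the bookkeeping in step (2)--(3): making sure the NBA's "loop gadget" reads exactly $v^\omega$ for a fixed $v$ (or a legitimate concatenation $v_1 v_2 \cdots$ of words each of which individually witnesses cycling in the same accepting SCC with consistent return state), rather than allowing the period to drift. The clean way is to force the gadget, between two Büchi visits, to read one word of $\D_{q,r}$ and return to the same checkpoint, so that the accepted $\omega$-word is a concatenation $u v_1 v_2 \cdots$ where each $v_k \in L(\D_{q,r})$; on ultimately periodic words this is fine because if $ux^\omega$ is in the target language we may take all $v_k = x^{|\B_q|!}$ (a single fixed period), and conversely any such concatenation yields an $\omega$-word whose run in $\B_q$ (after the prefix $u$) visits $r$ infinitely often and stays within $r$'s SCC, hence is accepted since that SCC is accepting. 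Everything else — regularity of the component DFAs, closure under the relevant operations, and the equivalence $L' \cap \UP{\Sigma} = L$ — is routine once the gadget is set up correctly, so I would keep those parts brief and concentrate the write-up on the construction of $\N$ and the correctness argument for ultimately periodic inputs.
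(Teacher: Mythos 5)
Your overall strategy is the same as the paper's: apply the Calbrix--Nivat--Podelski-style NBA construction to the progress automata and use weakness to reduce acceptance of $v^\omega$ to ``reach an accepting state and loop on it''. However, your definition of the block language of $\D_{q,r}$ opens a genuine soundness gap. You require only that \emph{some power} of $v$ reaches $r$ from the initial state (together with $\B_q^*(r,v)=r$). Under that condition two different blocks $v,v'$ accepted by $\D_{q,r}$ need not compose: the run of $\B_q$ on $vv'$ from $\iota$ can escape entirely, so an ultimately periodic word $u(vv')^\omega$ accepted by your gadget need not lie in the target language. Concretely, take $\T$ trivial and $\B_q$ with states $\iota,a,b,r,s$, transitions $\iota\xrightarrow{0}a\xrightarrow{0}r$, $\iota\xrightarrow{1}b\xrightarrow{1}r$, $a\xrightarrow{1}s$, $b\xrightarrow{0}s$, and self-loops on $r$ and on $s$ for both letters, with only $r$ accepting (all SCCs are singletons, so the automaton is weak). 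Then $0^\omega$ and $1^\omega$ are accepted, and both $0$ and $1$ pass your test (their squares reach $r$, and both loop on $r$), so your gadget accepts $(01)^\omega$ by alternating the blocks $0$ and $1$ --- yet every decomposition of $(01)^\omega$ has period a power of $01$ or of $10$, and both $(01)^\omega$ and $(10)^\omega$ fall into the rejecting sink $s$. Hence your claimed ``conversely'' step (that any concatenation of blocks visits $r$ infinitely often) is false, and the construction as written is unsound.

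The repair is exactly the condition in the construction the paper invokes: require that $v$ \emph{itself} reaches $r$, i.e.\ $\B_q(v)=r$, in addition to $\B_q^*(r,v)=r$. Then any concatenation of consecutive blocks again reaches $r$ and loops on $r$, so the factor $y=v_{i+1}\cdots v_j$ extracted from an ultimately periodic accepted word satisfies $\B_q(y^i)=r$ for all $i\ge 1$, whence $y^\omega$ is accepted; completeness is unaffected, since your chosen blocks $v_k=x^{|\B_q|!}$ already reach their stable state after a single block. (A minor secondary point: your step (1) characterization via ``the unique state $r$ with $\B_q^*(r,v)=r$ reachable by a power of $v$'' can fail to exist when the eventual cycle of $p\mapsto\B_q^*(p,v)$ has length greater than one; this does not hurt completeness because you pass to the power $x^{|\B_q|!}$, but it should not be asserted as an equivalence for arbitrary $v$.)
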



%

We now turn to the regularity problem for FDFAs.  We first show that it is sufficient to analyze "loopshift-stable" FDFAs with trivial leading TS (\Cref{lem:stabalize}, Theorem~\ref{thm:regular}). However, this requires to work with \emph{nondeterministic} progress automata because making an FDFA loopshift-stable can cause an exponential blow-up.

\AP
We call $\F=(\T,\N)$ a \emph{family of NFAs} (""FNFA""), where $\T$ is the leading \emph{deterministic} TS and, for each $q \in \states{\T}$, $\N_q$ is an "NFA".
Recall, that we consider the normalized setting, so $(u,v)$ is accepted by $\F$ if $\T(u) = \T(uv)$ and $v$ is accepted by $\N_{\T(u)}$.
FDFAs are simply special cases of FNFAs, where all progress automata are deterministic.
The notion of "loopshift-stability" and other notions defined for FDFA naturally carry over to FNFAs as well.

\begin{restatable}{rstlemma}{rststabilize}
  \label{lem:stabalize}
  Given an 
  "FDFA" or "FNFA" $\F = (\T,\N)$, we can build a "loopshift-stable"
  FNFA $\F'=(\T,\N')$ in polynomial time such that $\F$ and $\F'$ define the same "UP-language".
\end{restatable}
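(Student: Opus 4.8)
The plan is to construct, for each state $q$ of $\T$, a new NFA $\N'_q$ whose language is the \emph{loopshift-closure} of the language accepted by $\N_q$ along loops on $q$. Concretely, recall that loopshift-stability requires that for all normalized $(u,ax)$ the family accepts $(u,ax)$ iff it accepts $(ua,xa)$, and by iterating this, acceptance of a normalized loop $v$ on $u$ should depend only on the conjugacy class of $v$ (restricted to words that loop on $u$). So I would define $\N'_q$ to accept a word $v$ iff there is a cyclic rotation $v = v_1v_2$ such that $v_2 v_1$ is accepted by $\N_q$ \emph{and} both $v$ and $v_2v_1$ loop on $q$ in $\T$ --- more precisely, I only need to guarantee the correct behaviour on normalized pairs, so I can restrict attention to rotations $v = v_1 v_2$ where reading $v_1$ from $q$ returns to some intermediate state $p$ with $\delta^*(p, v_2) = q$, hence $v_2 v_1$ also loops on $q$. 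The key point is that the nondeterministic automaton can \emph{guess} the split point: $\N'_q$ guesses a state $p$ of $\T$ reachable from $q$, simulates $\N_q$ starting ``in the middle'' from the configuration it would be in after reading some word leading $q \to p$, reads a suffix $v_2$ bringing $\T$ from $p$ back to $q$ (verified by running $\T$ in parallel), then nondeterministically jumps to simulating the prefix part $v_1$ that leads $q \to p$, and finally checks that the combined $\N_q$-run $v_2 v_1$ is accepting.

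The main steps I would carry out are: (1) make precise the ``middle-out'' simulation of $\N_q$ — the state space of $\N'_q$ is roughly $\states{\N_q} \times \states{\N_q} \times \states{\T} \times \{\text{phase } 1, 2\}$, where the first $\N_q$-component tracks the run of $\N_q$ on the suffix $v_2$ starting from a guessed state $s$, the second $\N_q$-component tracks the run on the prefix $v_1$ starting from the initial state of $\N_q$, the $\T$-component tracks the leading TS so we can verify that $v_1$ leads $q \to p$ and $v_2$ leads $p \to q$, and the phase bit records whether we are reading $v_2$ (first) or $v_1$ (second); acceptance requires that at the phase switch the $v_1$-run has reached $p$ (consistently with the $\T$-component) and that at the end the $v_2$-run, concatenated with where the $v_1$-run started, i.e. the guessed $s$, closes up into an accepting run — this is enforced by demanding $s$ equals the state the $v_1$-run ends in and the $v_2$-run ends in a final state of $\N_q$; (2) argue polynomial size — the state count is $\bigO(|\N_q|^2 \cdot |\T|)$, so $\F'$ has size polynomial in $\F$ and the construction runs in polynomial time; (3) prove $\NormFDFA{\F'}$ induces the same UP-language as $\NormFDFA{\F}$, by showing every normalized accepted pair of $\F$ is still accepted by $\F'$ (take the trivial split $v_1 = \eps$) and conversely every normalized pair accepted by $\F'$ has the form $(u, v)$ with $v$ a rotation of some $w$ accepted by $\N_{\T(u)}$, and both $v, w$ loop on $\T(u)$, hence $uv^\omega = uw^\omega$ is already in the UP-language of $\F$; (4) verify $\F'$ is loopshift-stable — given normalized $(u, ax)$ accepted by $\F'$, the word $ax$ is, by construction, a rotation of some $\N_{\T(u)}$-accepted word looping on $\T(u)$, and since ``being a rotation of'' is closed under taking further rotations (within the set of words looping on the state), the rotation $xa$ enjoys the same property, so $(ua, xa)$ is accepted by $\F'$; here the assumption baked into our FDFA/FNFA convention — that words leading to the same progress state also lead to the same leading state — is what makes the set of loops-on-$q$ words behave coherently under rotation.

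The main obstacle I anticipate is step (4) together with getting step (1) exactly right: the naive ``accept all rotations of accepted words'' closure is \emph{not} loopshift-stable on the nose unless one is careful about which rotations loop on $q$, and one must check that a single application of the closure already yields a loopshift-stable family rather than needing to iterate (which could blow up). The resolution is that loopshift-stability only constrains normalized pairs, and for a fixed state $q$ the set of words that loop on $q$ is closed under cyclic rotation in the appropriate sense — if $v_1 v_2$ loops on $q$ via the intermediate state $p = \delta^*(q, v_1)$, then $v_2 v_1$ loops on $q$ via $\delta^*(q, v_2 v_1) = \delta^*(p, v_1)$, wait, more carefully: $v_2 v_1$ read from $q$ goes $q \xrightarrow{v_2} \delta^*(q, v_2) \xrightarrow{v_1} \cdot$, and this returns to $q$ precisely when $\delta^*(q, v_2) = p$, which need not hold in general. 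So the construction must explicitly track, via the $\T$-component, that the suffix $v_2$ takes $q$ to $p$ and the prefix $v_1$ takes $q$ to $p$ as well (so that both $v_1 v_2$ and $v_2 v_1$ close up at $q$); this is exactly what phase~$1$ and phase~$2$ with the shared guessed state $p$ enforce, and once the construction carries this data, a single pass produces a family in which membership of a loop-on-$q$ word depends only on its rotation class among loop-on-$q$ words, which is immediately loopshift-stable.
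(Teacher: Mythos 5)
Your high-level architecture (guess a split $v=v_1v_2$, simulate the progress NFA ``middle-out'' from a guessed intermediate state, run $\T$ in parallel to verify the leading-state constraints) is the same as the paper's, but there is a genuine error in \emph{which} progress automaton the rotated word is checked against, and it invalidates both halves of the lemma. The correct closure is: $\F'$ accepts $(u,v_1v_2)$ iff $\F$ accepts the \emph{shifted} pair $(uv_1,\,v_2v_1)$, i.e.\ iff $v_2v_1\in\LangDFA{\N_{\T(uv_1)}}$; normalization of the shifted pair is automatic, since $\T(u)=\T(uv_1v_2)$ implies $\T(uv_1)=\T(uv_1v_2v_1)$, so no extra side conditions are needed. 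You instead check $v_2v_1\in\LangDFA{\N_{\T(u)}}$ and require $v_2v_1$ to loop on $q=\T(u)$. That only tells you that $(u,v_2v_1)$ is an accepted normalized pair of $\F$, which represents $u(v_2v_1)^\omega$ --- a \emph{different} $\omega$-word from $u(v_1v_2)^\omega=uv_1(v_2v_1)^\omega$. The equality $uv^\omega=uw^\omega$ asserted in your step (3) for $w$ a rotation of $v$ is simply false ($(ab)^\omega\neq(ba)^\omega$). Concretely: let $\T$ have two states $q_0,q_1$ with every letter toggling between them, let $\N_{q_0}$ accept exactly $ab$ and $\N_{q_1}$ accept nothing. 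The UP-language of $\F$ is $\{u(ab)^\omega \mid |u| \text{ even}\}$, which does not contain $(ba)^\omega$; but your $\N'_{q_0}$ accepts $ba$ (it is a rotation of $ab$, and both $ab$ and $ba$ loop on $q_0$ since they have even length; moreover $a$ and $b$ lead from $q_0$ to the same state, so even the refined condition in your last paragraph is met). Hence your $\F'$ accepts $(\eps,ba)$ and its UP-language strictly grows.

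The same confusion defeats your loopshift-stability argument in step (4): acceptance of $(ua,xa)$ is decided by $\N'_{\T(ua)}$, not by $\N'_{\T(u)}$, so the property ``$xa$ is a rotation of an $\N_{\T(u)}$-accepted word looping on $\T(u)$'' is not what is needed; closing each $\N'_q$ under rotations in isolation does not relate the progress automata at $\T(u)$ and $\T(ua)$. Evaluating $v_2v_1$ at $\T(uv_1)$ repairs both defects at once: soundness holds because $(uv_1,v_2v_1)$ is a genuine normalized accepted pair of $\F$ representing exactly $u(v_1v_2)^\omega$, and loopshift-stability holds because the set of shifted pairs $\{(uv_1,v_2v_1) \mid v_1v_2=v\}$ reachable from $(u,v)$ is invariant under one further shift. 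The automaton-level realization is then essentially what you describe, except that both phases simulate $\N_{q'}$ for a guessed $q'=\T(uv_1)$: while reading $v_1$ (the first part of the input) one simulates the \emph{second} part of the $\N_{q'}$-run on $v_2v_1$, from the guessed state $p$ to a final state, and while reading $v_2$ one simulates the \emph{first} part, from the initial state of $\N_{q'}$ back to $p$, with the $\T$-component checking that $v_1$ leads $q$ to $q'$ and $v_2$ leads $q'$ back to $q$. (Your step (1) as written has these two half-runs the wrong way around, so that the closing-up condition verifies acceptance of $v_1v_2$ rather than of $v_2v_1$; this should be fixed along with the main issue.)
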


\begin{proof}[Proof (sketch)]
  $\F'$ can nondeterministically guess, for $(u,v)$, the shift $(uv_1,v_2v_1)$; $\T(uv_1)$; and the state $p$ the accepting run of $N_{\T(uv_1)}$ on $v_2v_1$ is in after reading $v_2$, and use its nondeterministic power to validate the guess.
\end{proof}

We now state the main theorems of this section.

\begin{theorem}
  \label{thm:regular}
  The regularity problem for FDFAs is \PSPACE-complete. More precisely,
  the following problems are \PSPACE-complete:
  \begin{enumerate}
    \item Is the "UP-language" of a given "FNFA" $\F=(\T,\N)$ "UP-regular"?
    \item Is the "UP-language" of a given "FDFA" $\F=(\T,\D)$ with a trivial TS $\T$ "UP-regular"?
    \item Is the "UP-language" of a given "FNFA" $\F=(\T,\N)$ with a trivial TS $\T$ and a "loopshift-stable" set $P=\LangDFA{\N}$ "UP-regular"?
    \item Is the "UP-language" of a given "FNFA" $\F=(\T,\N)$ with a trivial TS $\T$ "UP-regular"?
  \end{enumerate}
\end{theorem}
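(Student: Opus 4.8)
The statement bundles four related \PSPACE-completeness claims into one theorem, so the plan is to prove each direction (membership and hardness) once, reusing as much as possible via the reductions already set up by \Cref{lem:stabalize}. The overall strategy is: (a) reduce the general FNFA problem (item 1) to the trivial-leading-TS cases via a prefix-independence argument, (b) reduce the trivial-TS FDFA problem (item 2) to the loopshift-stable trivial-TS FNFA problem (item 3) using \Cref{lem:stabalize}, (c) prove the \PSPACE upper bound for the most general case (item 1), and (d) prove \PSPACE-hardness for the most restrictive case (item 3), which then propagates upward to all the others since they are all generalizations of it. This way each of the two hard parts is done only once.

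First I would handle the \emph{reductions among the four problems}. Items 1 and 4 are literally the same statement, so one of them is just a typo-level restatement and needs no separate work. For reducing item 1 to a trivial-TS problem, note that an FNFA $\F = (\T,\N)$ can be turned into one with trivial leading TS by encoding the leading state into the alphabet or, more carefully, by observing that for each strongly connected component of $\T$ the UP-words contributed are exactly those $ux^\omega$ where $x$ loops on a state of that SCC; one reduces regularity of the whole UP-language to regularity of finitely many ``prefix-independent'' sub-UP-languages, one per leading state $q$, each of which is $\Pow{P_q}$-style with $P_q = \LangDFA{\N_q}$ intersected with the loops on $q$. Since UP-regularity is preserved under finite union (regular $\omega$-languages are closed under union), and the leading TS is fixed and deterministic, this is a polynomial-time many-one reduction. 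Then \Cref{lem:stabalize} converts item 2 (trivial-TS FDFA) into item 3 (trivial-TS loopshift-stable FNFA) in polynomial time, and items 2 and 3 are special cases of item 1, so membership in \PSPACE for item 1 gives it for all four.

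Next, the \emph{\PSPACE upper bound for item 1}. By \Cref{lem:stabalize} we may assume $\F = (\T,\N)$ is loopshift-stable. The key fact is that, for a loopshift-stable FNFA, its UP-language $L$ is UP-regular if and only if $L$ equals the UP-language of the ``power-closure'' (the smallest almost-saturated FNFA above $\F$), which by \cite{LiST23} is UP-regular and for which an NBA of exponential size is constructible (the \cite{CalbrixNP93}-style construction). So regularity is equivalent to: the FNFA already agrees with its own power-saturation on all UP-words, i.e.\ whenever $(u,x)$ is accepted, so is $(u,x^i)$ for all $i>1$ — but that is exactly \emph{almost saturation} when combined with loopshift-stability, and combined with loopshift-stability again gives full saturation. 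Hence for loopshift-stable FNFAs, ``UP-regular'' coincides with ``almost saturated,'' whose complement is decidable in \NPSPACE $=$ \PSPACE by guessing a short counterexample word (same idea as in \Cref{theorem:powerclosednessPSPACEcomplete}): guess $u$, $x$, and an exponent $i$ (written in binary), and verify using polynomial space that $(u,x)$ is accepted while $(u,x^i)$ is not; the nondeterministic reachability checks on the product of $\N_{\T(u)}$ with itself under power, and the counting of the exponent modulo the relevant cycle lengths, all fit in polynomial space. I expect this equivalence — regularity $\Leftrightarrow$ almost saturation for loopshift-stable families — to be the \textbf{main obstacle}: one must argue carefully that a loopshift-stable FNFA whose UP-language is UP-regular must actually be almost saturated, which requires showing that any UP-word $ux^\omega$ in $L$ forces $ux^{i\cdot}$, i.e.\ that the regular $\omega$-language witnessing regularity cannot distinguish $(u,x)$ from $(u,x^i)$ — this follows because a regular $\omega$-language containing $ux^\omega$ contains it under its unique decomposition, and $ux^\omega = u(x^i)^\omega$, so both representations must be accepted once we have loopshift-stability to move between prefix-shifted forms.

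Finally, \emph{\PSPACE-hardness for item 3} (hence all four). I would reduce from the DFA intersection emptiness problem, reusing the construction skeleton from \Cref{theorem:powerclosednessPSPACEcomplete}: given DFAs $\D_1,\dots,\D_p$ with $p$ prime (pad otherwise), build over $\Sigma\cup\{\#\}$ the language $P$ of ``bad'' words together with enough extra words to force loopshift-stability by closing $P$ under conjugation (taking $\conj(P) = \{v_2v_1 \mid v_1v_2\in P\}$, which is still recognized by an NFA of polynomial size — this is where nondeterminism is needed and why the target class is FNFA, not FDFA). The intersection is empty iff the resulting FNFA, which is loopshift-stable by construction, is almost saturated, which by the upper-bound equivalence is iff its UP-language is UP-regular; when the intersection is nonempty one exhibits a word $\#u$ with $u$ accepted by all $\D_j$ such that $(\#u)^\omega\notin L$ but $(\#u^{\text{rep}})$ behaves inconsistently under powering, making the UP-language a non-UP-regular union of the form $\bigcup_i \Sigma^*(\cdot)^\omega$ as in \cite[Example~2]{LiST23}. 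The conjugation-closure step is slightly delicate because it must not accidentally make the family saturated, so one picks the $\#$-delimited shape of words in $P$ so that conjugates stay within a controlled pattern; I would verify this by a direct case analysis on where a cyclic rotation can split a word of the form $\#w_1\#\cdots\#w_p$.
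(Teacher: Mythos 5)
There is a genuine and fatal gap in your upper bound. Your entire \PSPACE membership argument rests on the claim that, for a loopshift-stable FNFA with trivial leading TS, ``the UP-language is UP-regular'' coincides with ``almost saturated.'' This equivalence is false, and the paper itself exhibits the counterexample: the progress DFA accepting $P=\{a^i \mid i \text{ odd}\}$ is loopshift-stable, is \emph{not} almost saturated (it accepts $(\eps,a)$ but rejects $(\eps,a^2)$), yet its UP-language is $\Sigma^*a^\omega$, which is plainly UP-regular. The error in your justification is a conflation of the two semantics: the UP-language is defined \emph{existentially} over accepted pairs, so a regular $\omega$-language witnessing UP-regularity is under no obligation to ``accept'' every pair representing one of its words --- it only needs to agree with the FNFA on which $\omega$-words are represented by \emph{some} accepted pair. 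Because of this, your decision procedure (guess a violation of almost saturation) decides the wrong problem, and the same false equivalence also underlies your hardness argument, where you conclude non-regularity from non-almost-saturation.

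The paper's actual route is substantially more delicate: it characterizes UP-regularity of $L=\Sigma^*\cdot\Pow{P}$ via finiteness of the index of the syntactic congruence $\synL$ (\Cref{lem:regularEquivalent}), reduces that to whether the set $\Ter(P)$ of \emph{terminal words} (words some power of which is accepted but some further power of whose transition profile is rejecting) has finitely many roots (\Cref{lem:fin2fin}), characterizes the infinite case by the existence of a \emph{good witness} transition profile (\Cref{lem:InfiniteGoodness}), and checks for a good witness by reachability in the exponential-size but succinctly represented transition-profile system (\Cref{lem:InPSPACE}). The hardness construction likewise has to be engineered so that non-empty intersection yields infinitely many roots in $\Ter(P)$ and empty intersection yields $\Ter(P)=\emptyset$, which your reuse of the almost-saturation gadget does not guarantee. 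A minor additional point: items~1 and~4 of the statement are not identical (item~4 restricts to a trivial leading TS), though this misreading does not affect the logical structure of your reductions.
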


\begin{proof}[Proof outline:]
  To show hardness, it is enough to show hardness for (2), as hardness for (3) then follows from \cref{lem:stabalize}, and both (2) and (3) are special cases of (1) and (4). We show hardness of (2) in \Cref{thm:regularity-hard}.

  Most of this section is dedicated to establish a decision procedure (which is in \PSPACE) for (3).
  This is done in Lemmas \ref{lem:regularEquivalent} through \ref{lem:InPSPACE}.
  Using \cref{lem:stabalize} reduces (4) to (3), and (2) is a special case of (4). Case (1) can be reduced to case (3) by considering a labeling of the words with the states of the leading TS.
\end{proof}

\noindent\textbf{Prefix-independent languages.}
For the \PSPACE upper bound, as justified by the reduction to (3) of \Cref{thm:regular}, we focus on a "loopshift-stable" FNFA $\F = (\T,\N)$ with trivial $\T$. So there is only one progress NFA $\N = (\Sigma, Q, \delta, \iota, F)$, which accepts some language $P = \LangDFA{\N}$ satisfying $uv \in P \Leftrightarrow vu \in P$ as $\F$ is "loopshift-stable".
Since $\T$ is trivial, the "UP-language" $L$ of $\F$ is "prefix-independent" and is of the form $L = \{u\cdot v^\omega \mid (u,v) \in \NormFDFA{\F}\} = \Sigma^* \cdot \Pow{P}$ (recall that $\Pow{P} = \{v^{\omega} : v \in P\}$). Our goal is to decide whether $L$ is "UP-regular".

As the semantics is existential, it can happen that some representations of  a word $v^\omega \in \Pow{P}$ are not in $P$.
This makes the problem non-trivial even in the "prefix-independent" case as witnessed by the language $P = \{a^iba^j \mid i+j \ge 1\}$, which consists of all words with exactly one $b$ and at least one $a$ over the alphabet $\Sigma = \{a,b\}$.
Clearly, $P$ is regular, but $L = \Sigma^* \cdot \{\left(a^iba^j\right)^\omega \mid i+j \ge 1\}$ is not "UP-regular".

\AP
We now define a congruence relation $\synL$: for $x, y \in \Sigma^*$,
\[ x \intro*\synL y \text{ iff for all } v\in\Sigma^* \text{ holds } (xv)^{\omega} \in \Pow{P} \Longleftrightarrow (yv)^{\omega} \in \Pow{P}. \]
As $L = \Sigma^* \cdot \Pow{P}$ is "prefix-independent", $\synL$ is the syntactic congruence of $L$ defined in~\cite{Arnold85}, and also corresponds to the definition of the progress congruence of the syntactic FORC \cite{MalerS97} and to the periodic progress congruence from \cite{AngluinF16} (where the latter two are only defined for regular $\omega$-languages). It follows from results in \cite{CalbrixNP93,AngluinBF18} that $L$ is "UP-regular" if, and only if, $\synL$ has finite index.

\begin{restatable}{rstlemma}{rstregularEquivalent}
  \label{lem:regularEquivalent}
  Let $\F=(\T,\N)$ be a "loopshift-stable" FNFA with trivial TS $\T$; let $P=\LangDFA{\N}$ and let $L = \Sigma^* \cdot \Pow{P}$.
  Then the following claims are equivalent:
  (1) $\synL$ has finite index; and (2) the "UP-language" $L$ of $\F$ is "UP-regular".
\end{restatable}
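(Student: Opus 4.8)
I would route both implications through the language of $\omega$-periods $K := \{v \in \Sigma^+ \mid v^\omega \in \Pow{P}\}$ and the classical Calbrix--Nivat correspondence between $L$ and $L_\$ := \{u\$v \mid uv^\omega \in L\}$. Two facts would drive the argument. The first is immediate from the definitions of $\synL$ and of $K$: $x \synL y$ holds exactly when $xv \in K \Leftrightarrow yv \in K$ for all $v \in \Sigma^*$, so $\synL$ is precisely the Myhill--Nerode congruence of $K$ and therefore has finite index if and only if $K$ is regular. The second fact, which is where "loopshift-stability" of $P$ enters, is that $uv^\omega \in L$ if and only if $v \in K$, for all $u \in \Sigma^*$ and $v \in \Sigma^+$; equivalently $L_\$ = \Sigma^*\$\,K$.

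To establish the second fact, the ``$\Leftarrow$'' direction is trivial, so the work is in ``$\Rightarrow$''. Assume $uv^\omega \in L$, say $uv^\omega = wz^\omega$ with $z \in P = \LangDFA{\N}$. I would set $p = \wordroot{v}$ and $q = \wordroot{z}$. Since $up^\omega = wq^\omega$ and $p,q$ are primitive, they are conjugate, so we may write $p = p_1p_2$ and $q = p_2p_1$. Primitivity of $q = \wordroot{z}$ gives $z = q^k$ for some $k \ge 1$ (by the standard fact on primitive roots), and rotating $z = (p_2p_1)^k$ to the left by $|p_2|$ symbols produces the conjugate $(p_1p_2)^k = p^k$. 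A "loopshift-stable" $P$ is closed under conjugation, hence $p^k \in P$, and since $(p^k)^\omega = p^\omega = v^\omega$ we obtain $v^\omega \in \Pow{P}$, i.e.\ $v \in K$.

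Given these two facts, both implications reduce to known results. If $\synL$ has finite index, then $K$ is regular by the first fact, hence $L_\$ = \Sigma^*\$K$ is regular by the second; the construction of Calbrix, Nivat, and Podelski~\cite{CalbrixNP93} then turns a DFA for $L_\$$ into a nondeterministic B\"uchi automaton whose accepted "ultimately periodic" words are exactly $\{uv^\omega \mid u\$v \in L_\$\}$, which coincides with $L$ because $L \subseteq \UP{\Sigma}$; thus $L$ is "UP-regular". Conversely, if $L$ is "UP-regular", then $L_\$$ is regular by the easy direction of~\cite{CalbrixNP93} (closure of regular $\omega$-languages under Boolean operations), so $K = (u_0\$)^{-1}L_\$$ is regular for any fixed $u_0 \in \Sigma^*$, and the first fact yields that $\synL$ has finite index.

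The main obstacle will be the second driving fact, namely that membership of an "ultimately periodic" word in $L = \Sigma^* \cdot \Pow{P}$ depends only on its period. The delicate points there are purely combinatorial: that a primitive root divides the word it is a root of, that two primitive periods of the same "ultimately periodic" word are conjugate, and that the conjugate one picks is exactly the one that lands inside $P$. A minor point to check is that the Calbrix--Nivat construction may be applied to an arbitrary regular $L_\$$; in our situation $L_\$ = \Sigma^*\$K$ is moreover saturated (again by the second fact), so the ``union semantics'' of that construction causes no trouble.
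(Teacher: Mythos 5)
Your plan is correct, and it also makes explicit a step that the paper's own proof leaves implicit. For (1)$\Rightarrow$(2) the paper builds an FDFA with trivial leading TS whose progress DFA is the transition system of $\synL$ (a class accepting iff its members $x$ satisfy $x^\omega\in\Pow{P}$), asserts that this FDFA accepts $(u,x)$ iff $ux^\omega\in L$, and invokes \cite[Theorem~5.7]{AngluinBF18}; the asserted equivalence is exactly your second driving fact, and your combinatorial argument for it (primitive periods of equal ultimately periodic words are conjugate, $P$ is closed under conjugation because it is loopshift-stable with trivial leading TS, hence membership of $uv^\omega$ in $\Sigma^*\cdot\Pow{P}$ depends only on $v^\omega$) is sound and is precisely where loopshift-stability is consumed. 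Your detour through $K$ and $L_\$=\Sigma^*\$K$ rather than through the transition system of $\synL$ directly is otherwise equivalent for this direction, and your observation that $L_\$$ is fully saturated (so the Calbrix--Nivat/Angluin--Boker--Fisman construction applies) is the right thing to check. For (2)$\Rightarrow$(1) the two proofs genuinely diverge: the paper notes that in the prefix-independent case $\synL$ is the syntactic congruence of Arnold and cites his finiteness theorem \cite{Arnold85}, whereas you derive regularity of $L_\$$ from \cite{CalbrixNP93}, recover $K$ as the quotient $\$^{-1}L_\$$, and conclude by Myhill--Nerode since $\synL$ is the Nerode congruence of $K$. Both are valid; yours is more elementary and self-contained, reusing the $L_\$$ machinery already needed for the forward direction, while the paper's is shorter at the price of an additional external theorem. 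One small inaccuracy in your write-up: regularity of $L_\$$ for a regular $\omega$-language is not merely ``closure under Boolean operations'' but the separate result of \cite{CalbrixNP93} (proved via the transition structure of a deterministic automaton for the language); citing that paper for it is nevertheless exactly right.
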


\noindent\textbf{Transition profiles.}
\AP
By \Cref{lem:regularEquivalent}, to determine whether or not the UP-language of $\F$ is UP-regular, we only need to check whether $\synL$ has finite index.
To this end, we introduce the ""transition profile"" of a finite word $x \in \Sigma^+$, which captures the behavior of $x$ in a TS or automaton, in this case the progress NFA $\N$.
Formally, $\tau$ is given as a mapping $\tau : Q \to Q$ if $\N$ is deterministic and $\tau : Q \to 2^Q$ if it is nondeterministic, with $q \mapsto \delta^*(q, x)$, and we write $\intro*\tpN$ for the function that assigns to a finite word its transition profile in $\N$.
There are at most $|Q|^{|Q|}$ and $2^{|Q|^2}$ different mappings for deterministic and nondeterministic automata, respectively.
\AP
We now refine $\synL$ using the "transition profile" by introducing
\[x \intro*\synN y \text{ if, and only if, }  x \synL y \text{ and } \tpN(x) = \tpN(y)\ .\]
Clearly, $\synL$ has finite index if, and only if, $\synN$ has finite index.

"Transition profiles" are useful, because they offer sufficient criteria for a word $x$ to be in the power language: we can directly check from $\tpN(x)$ whether or not there is some power $j>0$ such that $x^j \in P$.
This, of course, entails $x^\omega \in \Pow{P}$.
We call such transition profiles \AP ""accepting"" and transition profiles, where no power of $x$ is accepted, ""rejecting"". Note that $\tpN(x)$ is called "accepting" if \textit{some power} of $x$ is accepted, not necessarily $x$ itself. Correspondingly,  $\tpN(x)$ is "rejecting" if \emph{all powers} of $x$ are rejected.

Note that there can be words $x$ such that $x^\omega \in \Pow{P}$ but no power of $x$ is in $P$.
For instance, let 
$P = \{b^iab^j : i+j \geq 0\}$
be a loopshift-stable regular language:
clearly $(ab\cdot ab)^{\omega} \in \Pow{P}$ but $(ab \cdot ab)^i \notin P$ for all $i > 0$.

Recall that the root $\wordroot{x}$ is the shortest word $u$ such that $u^{\omega}  = x^\omega$.
To get a feeling for the properties of transition profiles, for a word $x$ with $x=\wordroot{x}$, $x^\omega \in \Pow{P}$ can only hold if some power of $x$ is in $P$, which we can read from $\tpN(x)$.

\AP
We call a transition profile $\tau$ ""terminal@@tp"" if it is "accepting", but $\tau^i$ is "rejecting" for some power $i$.
(Note that $\tpN(x^i)$ is the function obtained from iterating $\tpN(x)$ $i$ times. However, we can take the power of a transition profile directly without knowing $x$, and even for transition profiles that do not correspond to a word.)

\AP
We call a word $x$ such that $\tpN(x)$ is "terminal@@tp", a ""terminal word"" and write $\Ter(P)$ for the set of terminal words in $P$. So $x$ is a "terminal word" if some power $x^j$ is in $P$, and there is $i > 1$ such that $x^{ik} \notin P$ for all $k \ge 1$. Note that if $x$ is a terminal word, then so is $\wordroot{x}$.
The following lemma expresses finite index of $\synL$ in terms of a property of $\Ter(P)$ that we use in our decision procedure.

\begin{restatable}{rstlemma}{rstfinfin}
  \label{lem:fin2fin}
  If the NFA $\N$ accepts a "loopshift-stable" language $P$, then $\synN$ (and thus $\synL$) have finite index if, and only if, $\Ter(P)$ has finitely many roots.
\end{restatable}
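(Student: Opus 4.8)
The plan is to prove both implications, using two facts already available: there are only finitely many transition profiles of $\N$, so $\synN$ has finite index if and only if $\synL$ does; and for every nonempty $z$, $z^\omega \in \Pow{P}$ holds precisely when some power of $\wordroot{z}$ lies in $P$, i.e.\ when $\tpN(\wordroot{z})$ is "accepting". The first thing I would do is pin down the one situation where membership of $z^\omega$ in $\Pow{P}$ is \emph{not} visible from $\tpN(z)$ alone: if $\tpN(z)$ is "accepting" then so is $\tpN(\wordroot{z})$ and hence $z^\omega \in \Pow{P}$; the only remaining way to have $z^\omega \in \Pow{P}$ is $\tpN(z)$ "rejecting" but $\tpN(\wordroot{z})$ "accepting", which, writing $z = r^a$ with $r = \wordroot{z}$, forces $a \ge 2$ and makes $r$ a "terminal word" that is also a root, with $\tpN(r)^a = \tpN(z)$ "rejecting". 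So the working principle is: $z^\omega \in \Pow{P}$ iff $\tpN(z)$ is "accepting", or $z$ is a power $r^a$ ($a \ge 2$) of a terminal root $r$ — in which case automatically $z^\omega = r^\omega \in \Pow{P}$.

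For the direction ``$\Ter(P)$ has finitely many roots $\Rightarrow \synL$ has finite index'', let $R = \{r_1,\dots,r_N\}$ be the finite set of roots of terminal words. I would attach to each word $x$ the finite datum consisting of $\tpN(x)$ together with, for every $r_i \in R$, the information whether $x$ is a prefix of $r_i^\omega$ and, if so, the residue $|x| \bmod |r_i|$. Only finitely many such data occur, so it suffices to show that two words $x,y$ with the same datum are $\synL$-equivalent. Fixing $v$, I would argue by cases: if neither $xv$ nor $yv$ is a power $r_i^a$ with $a \ge 2$ of a terminal root, then by the working principle both memberships are decided by whether $\tpN(xv) = \tpN(x)\,\tpN(v) = \tpN(y)\,\tpN(v) = \tpN(yv)$ is "accepting", so they agree; and if, say, $xv = r_i^a$ with $a \ge 2$, then $x$ is a prefix of $r_i^\omega$ and $v$ is exactly the corresponding continuation of $x$ along $r_i^\omega$, so the equal datum of $y$ makes $yv$ a power of $r_i$ as well, whence $(xv)^\omega = r_i^\omega = (yv)^\omega \in \Pow{P}$; the mixed case is symmetric.

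For the converse, assume $\Ter(P)$ has infinitely many roots (recall that a "terminal word" has a terminal root). By the pigeonhole principle over transition profiles, infinitely many of these roots share a single profile $\tau$, which is necessarily a terminal profile, so $\tau$ is "accepting" while $\tau^m$ is "rejecting" for some $m \ge 2$. I would then show that any two distinct such roots $r$ and $r'$ are not $\synL$-equivalent, giving infinitely many $\synL$-classes. Since $\tpN(r) = \tau$ is "accepting", we have $r^\omega \in \Pow{P}$, hence for every $n \ge 2$, taking $v = r^{n-1}$ yields $(r \cdot r^{n-1})^\omega = r^\omega \in \Pow{P}$, so $r \synL r'$ would imply $(r' r^{n-1})^\omega \in \Pow{P}$ for all $n \ge 2$. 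Now $r'$, being a root different from $r$, is not a power of $r$, and a Fine--Wilf argument shows that $r' r^{n-1}$ is primitive (equal to its own root) for all large $n$: otherwise a nontrivial period would, applied to the long block $r^{n-1}$, be forced to be a multiple of $|r|$, and then reading off the last block would make $r'$ a power of $r$. For such $n$ we get $(r' r^{n-1})^\omega \in \Pow{P}$ iff $\tpN(r' r^{n-1}) = \tau\cdot\tau^{n-1} = \tau^n$ is "accepting"; so $\tau^n$ would be "accepting" for all large $n$, and choosing $n$ a large multiple of $m$ contradicts that $\tau^m$ is "rejecting".

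I expect the main obstacle to be the Fine--Wilf combinatorics establishing that $r' r^{n-1}$ is primitive for all sufficiently large $n$ (unless $r'$ is a power of $r$), together with the careful bookkeeping around the single point where the existential semantics ``some power of the root lies in $P$'' diverges from ``the word itself has an accepting profile'' — which is exactly the phenomenon captured by terminal words and is what makes the finite datum in the first direction sufficient.
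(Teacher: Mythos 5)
Your proof is correct. The forward direction is essentially the paper's: the paper characterizes $\synN$-equivalence by the condition that, for all $z$, either $\tpN(xz)$ is accepting or $(xz)^\omega$ equals $u^\omega$ for some terminal root $u$, and then asserts that with finitely many terminal roots this is a regular property; your finite datum (the transition profile together with, for each terminal root $r_i$, the prefix-of-$r_i^\omega$ flag and the residue $|x|\bmod|r_i|$) simply makes that regularity explicit, and your case analysis correctly isolates the one situation in which membership of $(xv)^\omega$ in $\Pow{P}$ cannot be read off the profile. The backward direction takes a genuinely different combinatorial route. The paper first replaces the infinitely many terminal roots sharing a profile by equal-length powers and then applies its Lemma~\ref{lem:different}: for distinct equal-length words $x\neq y$ whose profiles are powers of a rejecting profile, and a prime $p>2|x|$, the word $x\cdot y^{p-1}$ is primitive with a rejecting profile, which separates all $n$ chosen words simultaneously. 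You instead separate each pair of distinct terminal roots $r\neq r'$ with common profile $\tau$ directly, using $v=r^{n-1}$ for $n$ a large multiple of the rejecting power $m$, together with a Fine--Wilf argument that $r'r^{n-1}$ is primitive for large $n$. Both arguments rest on the same phenomenon---mixing two distinct roots with a common terminal profile yields a primitive word whose profile is a rejecting power, hence outside $\Pow{P}$, while the pure power stays inside---and yours trades the length-equalization and the prime trick for a slightly more delicate primitivity argument; when writing it out you do need the final alignment step showing that a short period of $r'r^{n-1}$ forces the period word to equal $r$ itself and not merely a conjugate of $r$, but that follows by reading the suffix $r^{n-1}$ off the period blocks, so the sketch goes through.
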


\begin{proof}[Proof (sketch).]
  We show that, if there are infinitely many roots in $\Ter(P)$, then we can construct from them an arbitrary number of words that are distinguished by $\synL$; $\synL$ therefore has infinite index.

  Conversely, if there are finitely many roots $r_1,\ldots,r_n$ in $\Ter(P)$, then two words $x,y$ are identified equivalent by $\synL$ if they define the same transition profile $\tpN(x) = \tpN(y)$ and for all $z \in \Sigma^*$ $\tpN(xz)$ is accepting
  or there is a root $r_i \in \Ter(P)$ such that $(xz)^{\omega} = r_i^{\omega}$ iff there is a root
  $r_j \in \Ter(P)$ such that $(yz)^{\omega} = r_j^{\omega}$ where $1 \leq i, j \leq n$.

  These are all regular properties, and $\synL$ is therefore finite.
\end{proof}

We now show that infinitely many roots of $\Ter(P)$ are witnessed by transition profiles with specific properties, called good witnesses, as defined below.

We say that a word $x$ visits $\tau_g$ first if (a) $\tpN(x)= \tau_g$ holds and (b) no true prefix $y$ of $x$ satisfies $\tpN(y) = \tau_g$.
If there is a word $x$ that visits $\tau_g$ first, we say that $u$ recurs to $\tau_g$ if $u \neq \varepsilon$ and (a) $\tpN(xu)= \tau_g$ holds and (b) no true non-empty prefix $v$ of $u$ satisfies $\tpN(xv) = \tau_g$.

\AP
We say that $\tau_g$ is a ""good witness"" if it is "terminal@@tp" and one of the following holds:
\begin{enumerate}
  \item there are infinitely many words $x$ that visit $\tau_g$ first, \emph{or}
  \item there is a word $x$ that visits $\tau_g$ first and a word $u$ that recurs on $\tau_g$ that have different roots.
\end{enumerate}

\begin{restatable}{rstlemma}{rstInfiniteGoodness}
  \label{lem:InfiniteGoodness}
  Let $P = \LangDFA{\N}$. There is a "good witness" $\tau_g$ if, and only if, $\Ter(P)$ has infinitely many roots.
\end{restatable}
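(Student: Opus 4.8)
The plan is to prove both directions of the equivalence between the existence of a good witness and $\Ter(P)$ having infinitely many roots, using a pigeonhole/Ramsey-style argument on transition profiles. Since there are only finitely many transition profiles (at most $2^{|Q|^2}$), and since by definition the root of a terminal word is again a terminal word, it suffices to understand how the finitely many terminal profiles can be realized by infinitely many distinct roots.

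For the direction ``good witness implies infinitely many roots'', I would argue as follows. Suppose $\tau_g$ is a good witness. If case~(1) holds, there are infinitely many words $x$ with $\tpN(x) = \tau_g$ that visit $\tau_g$ first. I claim that infinitely many of these have pairwise distinct roots: if only finitely many roots $r_1,\dots,r_m$ occurred among them, then each such $x$ would satisfy $x^\omega = r_i^\omega$ for some $i$, hence $x \in \{r_i^j : j \ge 1\}$ for some $i$; but a fixed word $r_i$ has, for each length $\ell$, at most one power of that length, so only finitely many of the infinitely many $x$ could be a power of a given $r_i$, contradicting infinitude. Hence $\Ter(P)$ contains infinitely many roots (all these $x$ are terminal since $\tpN(x)=\tau_g$ is terminal, and their roots are terminal too). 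If case~(2) holds, let $x$ visit $\tau_g$ first and $u$ recur on $\tau_g$ with $\wordroot{x} \neq \wordroot{u}$. Then $xu, xuu, xu^3, \dots$ all have transition profile $\tau_g$ (using $\tpN(xu) = \tpN(x) = \tau_g$ and composing, since recurrence gives $\tpN(xu^k) = \tau_g$ for all $k$), so all are terminal words. I would then show these words $xu^k$ have infinitely many distinct roots: if only finitely many roots occurred, some root $r$ would be shared by $xu^{k}$ and $xu^{k'}$ with $k<k'$, forcing $(xu^k)^\omega = (xu^{k'})^\omega$ and hence (comparing the ultimately periodic tails) $u^\omega = $ (a conjugate power), which by a Fine--Wilf argument forces $\wordroot{xu^k}$, $\wordroot{u}$, and $\wordroot{x}$ to be powers of a common word --- contradicting $\wordroot{x}\neq\wordroot{u}$ once we also use that $x$ and $u$ being ``fit together'' around a common profile constrains them; more carefully, I would directly observe that if $xu^k$ and $xu^{k'}$ share a root then so do $x u^\omega$-tails, and periodicity of $xu^\omega$ with two incommensurate periods $|u|$ and the period coming from $x$ gives the contradiction. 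This is the step I expect to require the most care.

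For the converse ``infinitely many roots implies a good witness'', I would use a pigeonhole on the finitely many terminal profiles. Since $\Ter(P)$ has infinitely many roots, and each root $r$ is itself a terminal word with $\tpN(r)$ terminal, infinitely many roots share a single terminal profile $\tau_g$. For each such root $r$, look at the prefix of $r$ at which $\tau_g$ is first visited along $r$ --- call it $x_r$, which visits $\tau_g$ first, and write $r = x_r s_r$. If infinitely many distinct words arise as the $x_r$, then case~(1) of ``good witness'' holds for $\tau_g$ and we are done. Otherwise only finitely many words visit $\tau_g$ first (among those that are prefixes of our roots), so infinitely many of our roots $r = x s_r$ share the \emph{same} first-visiting prefix $x$. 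Now decompose each tail $s_r$ according to successive returns to $\tau_g$: since $\tpN(r) = \tpN(x) = \tau_g$, reading $s_r$ after $x$ returns to $\tau_g$ at the end, and we may factor $s_r = u_1 u_2 \cdots u_t$ where each $u_i$ recurs on $\tau_g$ (each $u_i$ nonempty, ending exactly at a return, with no internal return). Because the $r = x s_r$ have infinitely many distinct roots but only finitely many profiles and a fixed $x$, I would argue (again by pigeonhole, this time on which recurrence-words $u_i$ appear, of which there are boundedly many if all recurrences produced words with the same root as $x$) that some recurrence word $u$ must have $\wordroot{u} \neq \wordroot{x}$: otherwise every $u_i$ and $x$ are powers of a common primitive word, forcing each $r$ itself to be a power of that word, hence all roots equal, contradicting infinitude. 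This realizes case~(2) for $\tau_g$, so $\tau_g$ is a good witness.

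The main obstacle is the bookkeeping in the converse direction: controlling, via the finitely-many-profiles pigeonhole, the decomposition of tails into recurrence blocks and extracting either infinitely many first-visits or a single recurrence word with a ``wrong'' root. The combinatorics-on-words lemmas needed (primitivity, Fine--Wilf, the fact that two ultimately periodic words with a common prefix-free structure agree iff their roots do) are standard but must be applied carefully to the conjugates that arise from loopshift-stability. I would isolate a small auxiliary claim --- ``if $x$ and all of $u_1,\dots,u_t$ are powers of a common primitive word $p$, then $xu_1\cdots u_t$ is a power of $p$'' --- and use it repeatedly to keep the argument clean.
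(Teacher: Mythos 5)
Your overall strategy is the same as the paper's: pigeonhole on the finitely many terminal transition profiles, decompose words with profile $\tau_g$ into a first-visiting prefix followed by recurrence blocks, and in the forward direction generate infinitely many roots either from the first-visitors or from words of the form $xu^k$. The converse direction as you organize it (if every recurrence block and $x$ were powers of one primitive word, every root $r=xu_1\cdots u_t$ would be that primitive word, contradicting infinitude) is sound and in fact slightly slicker than the paper's, which first passes through an auxiliary recharacterization of good witnesses into subcases before reaching the same conclusion. Your case~(2) of the forward direction is only sketched, but it can be closed: from $(xu^k)^\omega=(xu^{k'})^\omega$ with $k<k'$ one gets $w=u^{k'-k}w$ for $w=(xu^k)^\omega$, hence $w=u^\omega$ and $xu^\omega=u^\omega$, which by Fine--Wilf forces $\wordroot{x}=\wordroot{u}$; the paper instead builds an explicit subsequence $x_{j+1}=x_ju^{\ell_j}$ whose $\omega$-powers are forced to differ at controlled positions, which avoids the word-combinatorics lemma.

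There is, however, one step that fails as written: in case~(1) of the direction ``good witness $\Rightarrow$ infinitely many roots'' you argue that only finitely many of the first-visitors could be powers of a fixed root $r_i$ because $r_i$ ``has at most one power of each length.'' That inference is invalid: $r_i$ has infinitely many powers $r_i, r_i^2, r_i^3,\dots$, one for each multiple of $|r_i|$, so a priori infinitely many of your words could all be powers of the same $r_i$. The correct (and needed) observation, which is the one the paper uses, is that two distinct words that both visit $\tau_g$ \emph{first} cannot be prefix-comparable --- a proper prefix of a first-visitor never has profile $\tau_g$ --- whereas any two distinct powers of a common root are prefix-comparable. Hence distinct first-visitors have pairwise distinct roots, and since roots of terminal words are terminal, infinitely many first-visitors yield infinitely many roots in $\Ter(P)$. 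With that repair your argument goes through.
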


\begin{proof}[Proof (sketch).]
  If $\Ter(P)$ has infinitely many roots, then infinitely many of them refer to one witness. We show that they cannot all be constructed from one word $x$ that visits $\tau_g$ first and one word $u$ that recurs on $\tau_g$ \emph{and} has the same root as $\tau_g$, so that one of the two cases must hold.

  Reversely, if $\tau_g$ is a "good witness" we distinguish case (1), where we argue that the infinitely many words that visit $\tau_g$ first must have different roots (and roots of "terminal words" are "terminal@@tp").
  For the other case, we build infinitely many roots with the terminal transition profile $\tau_g$ from a word $x$ that visits $\tau_g$ first and a word $u$ that recurs on $\tau_g$ and has a different root than $x$.
\end{proof}

Checking if there is a "good witness" can be done on the succinctly defined TS whose state space are the transition profiles that tracks the transition profiles of finite words.

\begin{restatable}{rstlemma}
  {rstIsItGood}
  \label{lem:InPSPACE}
  For a given NFA $\N$, we can check if there is a "good witness" in \PSPACE.
\end{restatable}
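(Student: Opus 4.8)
The plan is to work with the transition profile TS, whose states are the $2^{|Q|^2}$ nondeterministic transition profiles of $\N$ over $\Sigma$, with transitions $\tau \xrightarrow{a} \tau \circ \tpN(a)$, and the initial state being the profile $\tpN(\eps) = \id$. This TS is exponential in size but each state is describable in polynomial space, and we can compute successors, test whether a given profile is accepting, rejecting, or terminal, and take powers of a profile, all in polynomial space (in fact polynomial time in the size of a profile). Since \PSPACE is closed under nondeterminism and we only ever need to store a constant number of profiles plus a polynomial counter, it suffices to show that the existence of a good witness can be checked by a nondeterministic procedure that manipulates a bounded number of profiles. By \Cref{lem:InfiniteGoodness}, detecting a good witness is exactly what we need.

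First I would observe that whether a profile $\tau_g$ is \emph{reachable at all} in the profile TS and whether it is "terminal@@tp" are both \PSPACE-checkable: reachability is nondeterministic reachability in the succinct TS (guess a path of profiles step by step, reusing space), and terminality is a direct computation on $\tau_g$ (compute $\tau_g^i$ for $i$ up to the number of profiles and check whether some power is rejecting, while $\tau_g$ itself is accepting). Next I would handle the two disjuncts of the definition of good witness. For case (1) --- infinitely many words visiting $\tau_g$ first --- the standard pumping characterization applies: there are infinitely many words $x$ with $\tpN(x) = \tau_g$ and no proper prefix mapping to $\tau_g$ if and only if, in the sub-TS obtained by deleting $\tau_g$ and all its outgoing edges (equivalently, making $\tau_g$ a sink as soon as it is first entered), there is a state $\tau'$ that is reachable from the initial profile, lies on a cycle, and from which $\tau_g$ is reachable by a nonempty path not revisiting $\tau_g$. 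All three conditions are nondeterministic-reachability / cycle-detection queries on the succinct profile TS, hence in \PSPACE.

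For case (2) I would guess a word $x$ visiting $\tau_g$ first (tracked by its profile path, with the "first-visit" condition enforced as above) and a nonempty word $u$ recurring on $\tau_g$ (a nonempty loop at $\tau_g$ that does not revisit $\tau_g$ internally), and then verify that $x$ and $u$ have \emph{different roots}. The subtlety here, and the step I expect to be the main obstacle, is checking $\wordroot{x} \ne \wordroot{u}$ without storing $x$ and $u$ explicitly (they may be exponentially long). The way around this is to note that $\wordroot{x} \ne \wordroot{u}$ iff $x^\omega \ne u^\omega$, and to reduce this to a reachability/non-reachability question: I would run a synchronized search comparing the two $\omega$-words letter by letter, using the fact that if $|x|$ and $|u|$ are at most exponential then a mismatch, if it exists, occurs within a bounded (exponential) prefix; more cleanly, I would track the pair of "rotational positions" inside $x$ and inside $u$ together with the guessed letters, which lives in a space of size polynomially related to $|Q|^{|Q|}$-many configurations and so admits an \NPSPACE $=$ \PSPACE search for a mismatch. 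One must be slightly careful to bound $|x|$ and $|u|$: it suffices to take the shortest such $x$ and $u$ realizing the guessed profile behavior, and shortest-path bounds in the succinct TS give singly-exponential length bounds, so all counters fit in polynomial space. Combining the two cases, guessing $\tau_g$ (polynomial space), checking terminality, and checking whichever disjunct holds yields an \NPSPACE algorithm, and by Savitch's theorem this is in \PSPACE, establishing the lemma.
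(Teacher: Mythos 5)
Your overall strategy matches the paper's: work in the succinctly represented transition-profile TS, guess $\tau_g$, check terminality by iterating powers of $\tau_g$, and reduce the two disjuncts of the good-witness definition to reachability/cycle-detection queries solvable in \NL{} in the exponential TS, hence in \PSPACE{}. Your treatment of case (1) and of terminality is essentially the paper's argument and is fine.

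The gap is in case (2), and it sits exactly where you yourself flag the "main obstacle": verifying $\wordroot{x}\neq\wordroot{u}$ for a guessed first-visiting word $x$ and a guessed recurring word $u$. These words may be exponentially long and are produced by nondeterministic guessing, so they cannot be stored and cannot be replayed consistently. Comparing $x^\omega$ and $u^\omega$ letter by letter at position $i$ requires access to $x[i \bmod |x|]$ and $u[i \bmod |u|]$; when $|x|\neq|u|$ the two indices wrap around at different times, so a single synchronized pass that guesses both words does not have both letters available at the moment it needs them, and "tracking rotational positions together with the guessed letters" does not recover them --- the positions fit in polynomial space, but the letters at those positions are information that was guessed and discarded. (The same issue arises if you reformulate the test as $xu\neq ux$.) The paper sidesteps this with an equivalent reformulation of case (2) into three subcases: (2a) two \emph{distinct} words visit $\tau_g$ first and some word recurs; (2b) some word visits first and two \emph{distinct} words recur; (2c) the first-visitor and the recurrer are each \emph{unique} but have different roots. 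In (2a)/(2b) the two words being compared are guessed in lockstep at the same rate, so their current letters are simultaneously available and distinctness is a plain reachability condition on a product of two copies of the profile TS. In (2c), uniqueness lets the shared-root test be expressed purely on profiles: $x$ and $u$ share a root iff there is a profile $\tau$, realized by some word, such that $\tau_g$ equals two different powers of $\tau$ --- again a reachability question. Without this (or some comparable device that avoids letter-level comparison of two independently guessed exponential words), your case (2) verification does not go through in polynomial space.
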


\begin{proof}[Proof (sketch)]
  We observe that tracing the transition profiles is done by a TS $\T'$ of size exponential in $\N$, but succinctly represented by $\N$ (and in $\tau_g$ for checking that $\tau_g$ is "terminal@@tp").
  Checking the conditions of a good transition profile are a number of reachability problems in this transition profile, and all of them are in NL the size of $\T'$.
\end{proof}

This finishes the proof for the \PSPACE upper bound. We now show the lower bound for case (2) of \Cref{thm:regular}.

\begin{theorem} \label{thm:regularity-hard}
  Deciding for a DFA $\D$ whether $\Ter(P)$ is finite for $P=\LangDFA{\D}$ and deciding whether $\Ter(P)$ has finitely many roots are \PSPACE-hard. Furthermore, deciding if the UP-language of an FDFA with trivial leading TS is UP-regular, is \PSPACE-hard.
\end{theorem}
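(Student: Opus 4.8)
We reduce from the \PSPACE-complete DFA intersection problem~\cite{Kozen77}: given DFAs $\D_1,\dots,\D_p$ over a common alphabet $\Sigma$, decide whether $\bigcap_{j=1}^{p}\LangDFA{\D_j}\neq\emptyset$; as in the proof of \Cref{theorem:powerclosednessPSPACEcomplete} we may assume $p$ prime.

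The idea rests on the identity $(\#w)^{p}=\#w\#w\cdots\#w$ for a fresh separator $\#\notin\Sigma$: the block language $M:=\#\LangDFA{\D_1}\#\LangDFA{\D_2}\#\cdots\#\LangDFA{\D_p}$ is recognised by a polynomial-size \emph{deterministic} automaton (the separators make the block boundaries trackable deterministically), and $(\#w)^{p}\in M$ precisely when $w$ lies in the intersection. Using $M$ and one further fresh symbol --- which serves only to turn a single common word $w$ into an infinite supply of pairwise distinct primitive witnesses $x_{0},x_{1},\dots$ --- we build, in polynomial time, a DFA $\D$ for a language $P$ with the property that $x_{k}\in\Ter(P)$ for all $k$ exactly when $w\in\bigcap_{j}\LangDFA{\D_j}$: the witness's $p$-th power enters $M\subseteq P$, making $\tpN(x_{k})$ "accepting", and --- using that $p$ is prime, so that no other power of $x_{k}$ can enter $M$ --- $\tpN(x_{k})^{2}$ is "rejecting". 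At the same time, $\D$ contains a \emph{masking} component whose purpose is to keep the words of $M$ other than the $p$-th powers from becoming "terminal words": those primitive words that realise the full $p$-block pattern of $M$ are arranged to have \emph{all} their powers in $P$, so their "transition profiles" admit no "rejecting" power. Without the masking, $M$ alone would contribute infinitely many (pairwise incomparable) terminal roots, and a $\Pow{P}$ that is not $\omega$-regular, no matter what the intersection is; with it, $\Pow{P}$ stays under control, and the primality of $p$ is what makes the masking compatible with the witnesses.

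Granting this, the two directions are short. If $\bigcap_{j}\LangDFA{\D_j}\neq\emptyset$, the witnesses $x_{0},x_{1},\dots$ are pairwise distinct primitive "terminal words" with pairwise distinct roots, so $\Ter(P)$ --- and its set of roots --- is infinite; moreover the $x_{k}$ can be chosen pairwise $\synL$-inequivalent (one distinguishes $x_{k}$ from $x_{k'}$ by a suitable test word), so $\synL$ has infinite index and, by the discussion preceding \Cref{lem:regularEquivalent} (from \cite{CalbrixNP93,AngluinBF18}), the "UP-language" of the "family" $\F=(\T,\D)$ with trivial "leading TS" $\T$ is not "UP-regular". If $\bigcap_{j}\LangDFA{\D_j}=\emptyset$, then by the construction the "terminal words" of $P$ reduce to the roots of a fixed regular set together with boundedly many of their powers, so $\Ter(P)$ is finite --- in particular it has finitely many roots --- and $\Pow{P}$ is $\omega$-regular, whence $\synL$ has finite index and the "UP-language" is "UP-regular".

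The main obstacle is the design and verification of the masking component of $\D$: it must keep $\D$ polynomial-size, it must neutralise \emph{every} primitive word that carries the $p$-block pattern of $M$ (of which there are infinitely many, with incomparable roots, as soon as some $\LangDFA{\D_j}$ is infinite), and it must do so without disturbing the witnesses --- a balance that hinges on $p$ being prime and on a careful divisibility analysis of the sets $\{m\mid x^{m}\in P\}$ over all primitive words $x$. The remaining parts of \Cref{thm:regular} follow from this as explained in its proof outline.
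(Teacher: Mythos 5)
Your proposal gets the right reduction source (DFA intersection with a prime number $p$ of automata), the right role for primality, and it correctly identifies the central difficulty: if $P$ simply contains the block language $M=\#\LangDFA{\D_1}\#\cdots\#\LangDFA{\D_p}$, then every primitive $p$-block word of $M$ is a spurious terminal root, so $\Ter(P)$ would be infinite no matter what the intersection is. But the device you propose to fix this --- a \emph{masking component} that places \emph{all powers of every primitive word realising the $p$-block pattern} into $P$ --- is exactly where the construction breaks down, and you leave it unconstructed. That set is (essentially) $\pow(M_{\mathrm{prim}})$: to decide whether a word with $kp$ blocks belongs to it, an automaton must check that the word is periodic with period equal to its first $p$ blocks, i.e., compare unboundedly long blocks for equality. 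No polynomial-size DFA can do this, and the obvious over-approximation (accept everything with a multiple of $p$ blocks) destroys your witnesses, since then $(\#w)^{2p}\in P$ and $\tpN(\#w)^2$ is no longer rejecting. So the ``careful divisibility analysis'' you defer is not a technicality; as sketched, the reduction cannot be completed.

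The paper's construction inverts your logic precisely to avoid any periodicity check. It accepts a word if (i) its block count $i$ satisfies $p\nmid i$, or (ii) the block count is a multiple of $p$ and the word lies in ${L_p}^*\cdot L'\cdot {L_p}^*$, where $L'$ demands that in some aligned group of $p$ blocks the $j$-th block is \emph{rejected} by $\D_j$ for some $j$. Both conditions need only a mod-$p$ block counter and a sequential run of the $\D_j$, so the DFA is small. If the intersection contains $u$, the witnesses $\#^k u$ are accepted at power $1$ (one block, $p\nmid 1$) and rejected at every multiple of the $p$-th power (all blocks pass all automata), hence terminal, and they are infinitely many distinct roots. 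If the intersection is empty, every block fails some $\D_j$, and a CRT/alignment argument (using $\gcd(i,p)=1$ when $p\nmid i$) shows that every root has either all powers accepted or all powers rejected, so there are no terminal roots. Finally, note that your appeal to \Cref{lem:regularEquivalent} also needs the progress language to be loopshift-stable; the paper handles this explicitly by passing to the loopshift closure $P'$ via \Cref{lem:stabalize} and checking that $\Ter(P')$ still behaves correctly, a step missing from your sketch.
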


\begin{proof}
  Assume that we are given $p$ DFAs $\D_1, \D_2, \cdots, \D_p$ that do not accept the empty word; we assume that $p$ is prime (and otherwise pad) and that $\#\notin \Sigma$.
  We define $L_1=\#^+\Sigma^+\#^*$, $L_p={L_1}^p$, and $L_\cap = \bigcap_{i=1}^p \LangDFA{\D_i}$.
  Deciding if $L_\cap = \emptyset$ is the \PSPACE-hard problem we reduce from.

  We build two DFAs $\D$ and $\D'$ using the states of $\D_1,\ldots,\D_p$ plus $p+1$ extra states.
  $\D'$ accepts a word $w$ if it is in $L_p$ and of the form $w=\#^+w_1\#^+w_2\ldots\#^+w_p\#^*$ s.t.\ $w_i\in \Sigma^+$ holds for all $i \leq p$ \emph{and} $w_i \notin \LangDFA{\D_i}$ holds for some $i\leq p$. We call its language $L'$.

  $\D$ accepts a word $w$ if $w\in {L_p}^*\cdot L'\cdot {L_p}^*$ \emph{or} $w \in {L_1}^i$ and $p$ does not divide $i$.

  If $L_\cap \not= \emptyset$, then every word $v\in \#^+L_\cap$ is "terminal@@word" for $P=\LangDFA{\D}$, because $v^p$ defines a "rejecting" transition profile.
  These words are all roots, and the $\#^+$ part ensures that there are infinitely many of them.

  For the case $L_\cap = \emptyset$ we assume towards a contradiction that $\Ter(P)$ contains a root $u$. Then $u \in {L_1}^i\#^*$ for some $i$.
  If $p$ divides $i$, then either all powers of $u$ are accepted or no power of $u$ is accepted (contradiction). If $p$ does not divide $i$, then $u^j$ is accepted if $p$ divides $j$ (because all $i$ words in $\Sigma^+$ are rejected by some $\D_k$ due to $L_\cap = \emptyset$), and $u^j$ is accepted if $p$ does not divide $j$, because then $p$ does not divide $ij$ and $u^j \in {L_1}^{ij}$ (contradiction).

  For showing the second part of the theorem, we use the DFA $\D$ as progress automaton for an FDFA with trivial leading TS, and then use \Cref{lem:regularEquivalent} and \Cref{lem:fin2fin}. These lemmas require the language that is accepted to be "loopshift-stable". While $\D$ does not recognize a "loopshift-stable" regular language $P$, it is not hard to see that the "loopshift-stable" language $P' = \{xy \mid x,y \in (\Sigma\cup\{\#\})^*, yx \in P\}$ we obtain by the construction from \cref{lem:stabalize} also satisfies that $\Ter(P')=\emptyset$ if $L_\cap = \emptyset$. Since $\Ter(P) \subseteq \Ter(P')$, we also get that $\Ter(P')$ contains infinitely many roots if $L_\cap \not= \emptyset$.
  Furthermore $P'$ and $P$ define the same "UP-language", and thus deciding if this "UP-language" is "UP-regular" is the same as deciding if $\Ter(P)$ has finitely many roots.
\end{proof}
%
%
\section{Comparison}
\label{section:comparison}
In this section, we analyze the influence that the properties introduced in~\cref{sec:saturation} can have on the succinctness of the various models.
Moreover, we also consider the transformation between models with certain properties.
The results of this section are presented visually in the overview in~\cref{fig:overview}.
We sketch the families used to obtain lower bounds only roughly, deferring formal definitions and proofs~\ifappendixelse{to the appendix}{the full version} for the sake of readability.

"Saturated FDWAs" may be exponentially more succinct than "saturated FDFAs".
Intuitively, this comes from the fact that by~\cref{lemma:fdfasaturationcharacterisation} FDFAs have to be "loopshift-stable", which implies that their "progress DFAs" exhibit a monoidal structure.
We have described this structure in more detail, and shown how it makes deciding "power-stability" tractable in~\cref{lemma:rotationinvarianceproperty}.
However it also means that if the "leading TS" is trivial, then in the "progress automaton" two words $u,v$ may only lead to the same state if they cannot be distinguished in arbitrary products, that is, for all words $x,w$, either $xuw$ and $xvw$ are both accepted by the progress DFA, or both are rejected.
This insight allows us to obtain the following lower bound with standard techniques.

\begin{restatable}{rstlemma}{lowerboundCombinedToSatFDFA}
    For each $n > 0$, there is a language $L_n$ that can be recognized by a "saturated FDWA" of "size@@family" $(1,n+2)$, and by an "almost saturated FDFA" of "size@@family" $(1,n+2)$, but for every "saturated" "FDFA" of "size@@family" $(1,m)$ that recognizes $L_n$, we have that $m \geq n^n$.
  \label{lemma:lowerbound:combinedToSatFDFA}
\end{restatable}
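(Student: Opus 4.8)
The plan is to exhibit an explicit family $L_n$ for which the weak semantics of an FDWA (or the power-closure alone, which is what almost saturation provides) lets a small progress automaton do the bookkeeping, whereas a saturated FDFA — forced by \cref{lemma:fdfasaturationcharacterisation} to be loopshift-stable and power-stable — must carry enough state to identify a word up to arbitrary products. A natural candidate is to take $\Sigma = \{a_1,\dots,a_n\}$ (or $\{1,\dots,n\}$ together with a separator) and let $L_n$ be the prefix-independent UP-language whose periods are exactly those words in which all $n$ letters occur, or some similarly ``global'' condition; concretely something like $L_n = \Sigma^* \cdot \{v^\omega \mid \text{every letter of } \Sigma \text{ appears in } v\}$, living over an alphabet of size $n$ with $n+2$ states available for the progress automaton (the $+2$ absorbing a sink and an accepting trap). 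The key point is that ``some letter-set is reached'' is a reachability/weak condition, so an FDWA needs only to track, along a single lasso, which letters have been seen — but crucially it can do this with a DWA of linear size because it is allowed to read $v^\omega$, i.e. it need not remember the entire subset, only climb a chain; and an almost saturated FDFA can also get away with linear size because it is not required to be loopshift-stable, so it may, say, only accept when it reaches the end of a distinguished ``all letters seen in order $1\dots n$'' pattern and then rely on power-closure.

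The first step is to construct the small saturated FDWA of size $(1,n+2)$ recognizing $L_n$: a single progress DWA with states indexed by the ``next letter still to be seen'' together with an accepting absorbing SCC and a rejecting sink, checking that $v^\omega$ eventually visits the accepting SCC. Second, I would build the almost saturated FDFA of size $(1,n+2)$ recognizing the same $L_n$: here the progress DFA accepts a finite word iff it witnesses the required global property in a power-closed way — the point is only to verify almost saturation ($(u,x)\in\NormFDWA\F$ implies $(u,x^i)\in\NormFDWA\F$), which is routine since repeating $x$ only adds occurrences of letters. Third, and this is the substance, I would prove the lower bound $m \ge n^n$ for any saturated FDFA. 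By \cref{lemma:fdfasaturationcharacterisation} the progress DFA $\D$ of such an FDFA is loopshift-stable and power-stable, hence (as the paragraph preceding the lemma spells out) two words $u,v$ can reach the same state of $\D$ only if $xuw$ and $xvw$ are accepted or rejected together for all $x,w$ — i.e. states of $\D$ refine the syntactic congruence of $P = \LangDFA\D$, which must equal (up to loopshift-closure) the period language of $L_n$. I would then exhibit $n^n$ words that are pairwise distinguished by this congruence: a standard choice is to encode a function $f : [n] \to [n]$ by a word $w_f$ over a suitable encoding so that a test word $z_g$ can be appended making $w_f z_g$ (a rotation of) a period of $L_n$ iff $f = g$, or more simply index $n^n$ behaviours by which letter must appear next in each of $n$ ``slots''. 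A cleaner route that often gives exactly $n^n$ is a Moore-style fooling argument: for each $f:[n]\to[n]$ take $u_f = a_{f(1)} a_{f(2)} \cdots a_{f(n)}$ over an $n$-letter alphabet and design $L_n$ so that $u_f u_g^\omega$-type periods land in $L_n$ precisely on a diagonal condition that separates all $f$; each $u_f$ must then reach a distinct state, giving $m \ge n^n$.

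The main obstacle I anticipate is the third step, and within it two subtleties: first, making the lower-bound language $L_n$ simultaneously cheap for FDWAs/almost-saturated FDFAs (forcing its weak/power-closed description to be linear in $n$) yet expensive for saturated FDFAs (forcing the full product-indistinguishability equivalence to have $\ge n^n$ classes) — these pull in opposite directions, and getting the bookkeeping to collapse under $\omega$-semantics but not under finite-word semantics with loopshift+power closure is exactly the delicate design choice. Second, one must be careful that the $n^n$ distinguished words genuinely survive the closure operations: because a saturated FDFA's progress language is loopshift-stable and power-stable, the $n^n$ fooling words must be pairwise inequivalent under the induced congruence $\synL$ of $L_n$ (conjugation- and power-closed), not merely under raw Myhill–Nerode, so the separating test words $z$ have to be chosen so that the separation is robust under rotation and repetition. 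Once the language and the fooling set are fixed, the remaining arguments (the two upper-bound constructions and the counting) are routine; I would relegate the explicit automata and the verification that $\synL$ has $\ge n^n$ classes to the full version, keeping in the body only the definition of $L_n$ and the statement that its conjugation/power-closed period congruence separates all $u_f$.
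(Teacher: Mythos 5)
Your overall architecture matches the paper's: encode functions $f:\{1,\dots,n\}\to\{1,\dots,n\}$, let the small FDWA and almost saturated FDFA track only the image of $1$, and use loopshift-stability (via \cref{lemma:fdfasaturationcharacterisation}) to force a saturated FDFA to memorize the entire function. However, there is a genuine gap: the content of the lemma lives almost entirely in the choice of $L_n$, and you never pin it down. The one concrete candidate you write out --- periods in which every letter of an $n$-letter alphabet occurs --- does not give the claimed bound: the conjugation- and power-closed right congruence of that period language identifies words with the same letter set, so a saturated FDFA with roughly $2^n$ progress states recognizes it, far below $n^n$. Your second sketch (a diagonal condition on $u_f u_g^\omega$-type periods) is a wish rather than a construction, and you yourself flag the tension between cheap-for-FDWA and expensive-for-saturated-FDFA as the delicate design choice without resolving it; resolving it is the proof.

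For comparison, the paper's construction is: take a constant-size alphabet $\{\sigma,\tau,\gamma,\#\}$ where $\sigma,\tau,\gamma$ generate the full transformation monoid on $\{1,\dots,n\}$, and let $L_n$ consist of the words with infinitely many infixes $\#u\#$ such that $f_u(1)=1$. The marker $\#$ is what makes the upper bounds linear: the progress automaton waits for $\#$ and then follows only the current image of $1$, using $n$ states plus an accepting sink; the sink makes the FDFA almost saturated, and the weak semantics makes the FDWA saturated. For the lower bound, if $x_f$ and $x_g$ with $f\neq g$ reached the same progress state, one picks $k$ with $f(k)\neq g(k)$, observes that $\sigma^{k-1}x_f\sigma^{n+1-f(k)}\#$ is accepted (the composed map fixes $1$), and rotates it to $x_f\sigma^{n+1-f(k)}\#\sigma^{k-1}$; loopshift-stability forces acceptance of the analogous rotation built from $x_g$, whose underlying map does not fix $1$ --- a contradiction. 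The point your proposal anticipates but does not realize is precisely that a loopshift-stable progress DFA cannot wait for $\#$ and must therefore remember all of $f$, while the marker lets the other two models avoid doing so.
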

\begin{proof}[Proof (sketch)]
    We represent mappings $\{1,\dotsc,n\}\rightarrow\{1,\dotsc,n\}$ as words over an alphabet of constant size.
    The language $L_n$ contains a word $w$ if it has infinitely many infixes $\#u\#$ such that $1$ is a fixpoint of the mapping represented by $u$.
    A "saturated FDWA" can wait for an occurrence of $\#$ and then keep track of the number that $1$ is mapped to, which is possible with $n+2$ states.
    As the progress automaton of this FDWA has only a single accepting state which is a sink, it is "almost saturated" when viewed as "FDFA".
    By~\cref{lemma:fdfasaturationcharacterisation}, a "saturated FDFA" must be "loopshift-stable", and therefore cannot wait for an occurrence of $\#$ before following a single number.
    Thus, it has to memorize the full mapping encoded by the word read so far, for which it needs at least $n^n$ states.
\end{proof}

The size of an "almost saturated FDFA" is, in general, incomparable to the size of a "saturated FDWA", and there may be an exponential blowup in either direction, which the following lemma formalizes.

\begin{restatable}{rstlemma}{incomparableSatFDWAalsatFDFA}
  \label{lemma:incomparableSatFDWAalsatFDFA} 
  For all $n > 0$, there are languages $L_n$ and $L_n'$
  such that
  \begin{enumerate}
      \item\label{satFDWAsmall:alsatFDFAbig}
      $L_n$ is accepted by a "saturated FDWA" of "size@@family" $(1,2n+1)$, while for every "almost saturated FDFA" for $L_n$ with trivial "leading TS" has a progress automaton of size $2^{\frac{n}{2}}$
      \item\label{alsatFDFAsmall:FDWAsmall:satFDWAbig}
      $L_n'$ can be recognized by an "almost saturated FDFA" of size $(1,n+3)$, and by an "FDWA" of size $(1,n+3)$, but every "saturated FDWA" with trivial "leading TS" has at least $2^n$ states
      \item\label{alsatFDFAcomplementbig}
      in every "almost saturated FDFA" with trivial "leading TS" that accepts the complement of $L_n'$, the progress automaton has at least $\frac{2^n}{2n}$ states.
  \end{enumerate}
\end{restatable}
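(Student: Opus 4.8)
For all three parts the plan is the same: exhibit an explicit family of prefix-independent $\omega$-languages, give the small upper-bound constructions directly, and reduce each lower bound to a fooling-set argument on a regular language (a progress DFA) or a "weak" $\omega$-automaton. I would work throughout with the following normal forms, all immediate from \Cref{lemma:fdfasaturationcharacterisation} and the definitions of \Cref{sec:saturation}. With a trivial "leading TS": an "almost saturated FDFA" is a regular \emph{power-closed} progress language $P$ and accepts $\Sigma^*\cdot\Pow{P}$; an "FDWA" with "weak" "progress automaton" $\B$ accepts $\Sigma^*\cdot\Pow{Q}$ where $Q=\{v\in\Sigma^+\mid\text{some conjugate of }v\text{ has its }\omega\text{-power in }\LangDBA{\B}\}$; and a "saturated FDWA" is moreover "loopshift-stable", which forces $Q$ to be closed under conjugation (closure under powers being automatic in the $\omega$-semantics).

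\emph{Claim~\ref{satFDWAsmall:alsatFDFAbig}.} I would take a constant-size alphabet whose words are sequences of blocks separated by a marker, each block encoding a partial map on a ground set of size $\lfloor n/2\rfloor$, and let $L_n$ collect the $\omega$-words that eventually cycle through a period in which the block following a distinguished occurrence of the marker is compatible with a value carried from a fixed anchor position of the period. A "weak" "progress automaton" for $L_n$ only has to walk to the anchor, store one element of $\{1,\dots,n\}$ together with a one-bit phase flag, and compare later; since its accepting states form a sink, this gives an FDWA of "size@@family" $(1,2n+1)$ that, read as an "FDFA", is "almost saturated". For the lower bound, fix any regular power-closed $P'$ accepting the same UP-language. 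For each subset $S$ of the ground set I would construct a primitive period $v_S$ with $v_S^\omega\in L_n$ (so some power $v_S^{k_S}\in P'$) and continuations $z_S$ for which $v_S^{k_S}z_S\in P'$ exactly when $S$ passes the compatibility test; arranging $v_S^{k_S}z_{S'}\in P'\Leftrightarrow S=S'$ and applying the Myhill--Nerode theorem to $P'$ yields $|P'|\ge 2^{n/2}$.

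\emph{Claims~\ref{alsatFDFAsmall:FDWAsmall:satFDWAbig} and~\ref{alsatFDFAcomplementbig}.} Here the guiding analogy is the proof of \Cref{lemma:lowerbound:combinedToSatFDFA}: a model that is forced to be "loopshift-stable" cannot use the first occurrence of a marker as a reference point and must instead track, from an arbitrary starting position inside a block, all the information accumulated so far. I would design $L_n'$ over blocks encoding subsets of $\{1,\dots,n\}$ (as sequences of subset operations) so that (i) the condition defining $L_n'$ is checkable by scanning a single block from a marker, giving an "almost saturated FDFA" with power-closed progress DFA of $n+3$ states and an "FDWA" of "size@@family" $(1,n+3)$ whose progress automaton jumps to an accepting sink after that scan; (ii) a "loopshift-stable" "weak" progress automaton must distinguish the $2^n$ possible accumulated subsets, which a standard fooling set turns into the $2^n$ bound of Claim~\ref{alsatFDFAsmall:FDWAsmall:satFDWAbig}; and (iii) in the complement the roles of accepting and rejecting blocks are exchanged so that anchoring no longer helps, and any regular power-closed progress DFA $Q$ whose $\omega$-powers cover the periodic part of $\overline{L_n'}$ must separate $2^n$ residuals, dividing by $2n$ absorbing the slack from having to present each separating period together with its conjugates and a bounded power.

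The upper-bound constructions are routine. The main obstacle, present in all three parts, is the robustness of the lower bounds: since the acceptance semantics is existential over the representations $(u,v)$ of an $\omega$-word, a progress language is determined only up to closure under conjugation and powers, so a plain separation of words by an automaton is not enough. The fooling sets must be engineered --- by taking the period witnesses primitive and laying out the markers so that the relevant conjugate is unique, and so that no power creates a spurious collision --- so that for $S\neq S'$ no conjugate or power of the $S$-witness can be confused with any conjugate or power of the $S'$-witness in whatever (unknown) regular power-closed or loopshift-stable language the automaton happens to implement. Verifying this, together with fixing the encodings that make (i)--(iii) hold simultaneously, is the delicate part of the proof.
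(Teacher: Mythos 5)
There is a genuine gap: the lemma is an existence statement about concrete language families, and your text never actually exhibits them. Every construction is described at the level of intent (``I would take a constant-size alphabet whose words are sequences of blocks\ldots'', ``I would design $L_n'$ over blocks encoding subsets\ldots''), and you explicitly defer ``fixing the encodings that make (i)--(iii) hold simultaneously'' and ``verifying this'' as ``the delicate part of the proof''. But that delicate part \emph{is} the proof. In particular, you correctly identify the central obstacle --- under the existential semantics the adversary's progress language is only pinned down up to conjugates and powers of each period, so a plain Myhill--Nerode separation of words does not suffice --- and then you do not overcome it; you merely postulate that the fooling sets ``must be engineered'' so that no conjugate or power of one witness collides with another. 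The paper's proof lives entirely in resolving exactly this: for claim~1 it takes $L_n$ over the alphabet of non-empty subsets of $\{1,\dots,2n\}$ (``some number occurs finitely often'') and shows that if two $n$-element sets $X\neq Y$ reach the same progress state then their complements must not, because otherwise \emph{both} conjugates $X\bar Y$ and $\bar YX$ would sit in empty residuals while $(X\bar Y)^\omega\in L_n$; the exponent $2^{n/2}$ arises precisely from this pairing ($k\ge 2^n/k$). Your $2^{n/2}$ instead comes from choosing a ground set of size $\lfloor n/2\rfloor$ and claiming a clean $2^{n/2}$-element fooling set, which suggests the conjugate/power robustness issue has not actually been confronted. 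For claims~2 and~3 the paper uses $L_n'=$ ``infinitely many infixes $0u0$ with $|u|=n-1$'' and a cycle-splicing argument on words $u\tilde u$ (with $\tilde u$ the bitwise complement) that merges two loops through a common state into a single accepted period containing a forbidden infix; nothing in your sketch supplies an argument of comparable force, and the $2^n/(2n)$ in claim~3 is again only asserted.

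Your general framing is sound and matches the paper's strategy in outline: prefix-independent languages, chain-shaped weak progress automata for the upper bounds, and residual-counting for the lower bounds; your normal forms (power-closed $P$ with UP-language $\Sigma^*\cdot\Pow{P}$ for almost saturated FDFAs with trivial leading TS, conjugation-closure forced by loopshift-stability for saturated FDWAs) are correct and would be usable. To close the gap you need to (a) write down $L_n$ and $L_n'$ explicitly, (b) verify the $(1,2n+1)$ and $(1,n+3)$ upper bounds on concrete automata, and (c) replace ``the fooling sets must be engineered'' by an actual argument that survives the adversary accepting only a single representation of each ultimately periodic word --- e.g.\ the complement-pairing trick for claim~1 and a loop-combination argument for claims~2 and~3.
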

\begin{proof}[Proof (sketch)]
    The alphabet of $L_n$ consists of non-empty subsets of $\{1,\dots,2n\}$.
    A word is in $L_n$ if some number $i$ is member of only finitely many symbols, meaning $L_n$ is "prefix-independent".
    Due to its limit behavior, a "saturated FDWA" can test on the loop for an occurrence of each number in sequence, leading to a progress automaton with $2n+1$ states.
    But one can show that for all $X,Y \subseteq \{1,\dots,2n\}$ with $X \not= Y$ and $|X| = |Y| = n$, either $X$ and $Y$, or their complements must lead to different states from the initial state of the progress automaton of any "almost saturated FDFA" for $L_n$.

    The language $L_n'$ is defined over the alphabet $\{0,1\}$, and contains all words with infinitely many infixes $0u0$ such that $|u| = n-1$, which is clearly "prefix-independent".
    When using a trivial "leading TS", an "FDWA" or "almost saturated FDFA" can simply guess the start of an infix $0u0$ and verify that $|u| = n-1$, which can be done with $n+3$ states.
    This is not possible in a "saturated FDWA" since it has to be invariant under loopshifts.
    Similarly, in the progress automaton of any "almost saturated FDFA" for the complement of $L_n'$, distinct words of length $n$ must lead to different states.
\end{proof}

The following bounds were already known in the literature (c.f.~\cite[Section~5]{Klarlund94} and~\cite[Theorem~2]{AngluinF16}).
We include them for the purpose of completeness, and provide families of languages achieving them in a systematic way.

\begin{restatable}{rstlemma}{expgapsatFDFAsynFDFAfullysatFDFA}
    \label{lemma:expgapsatFDFAsynFDFAfullysatFDFA}
    For all $n > 0$ there are languages $L_n, L_n'$ such that
    \begin{enumerate}
        \item $L_n$ is recognized by a "saturated FDFA" of size $(n, 2n)$, but the "syntactic FDFA" for $L_n$ is of size $(1, n^n)$, and
        \item the "syntactic FDFA" for $L_n'$ is of size $(n+1, 2n + 1)$ and every "fully saturated FDFA" for $L_n'$ has a progress DFA of size at least $n^n$.
    \end{enumerate}
\end{restatable}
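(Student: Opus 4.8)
The plan is to exhibit, for each of the two parts, an explicit family of languages together with a small FDFA of the stated form (a saturated one in part~1, the syntactic one in part~2) and then argue a matching lower bound for the target class (syntactic in part~1, fully saturated in part~2). The underlying phenomenon for both parts is the same as in \Cref{lemma:lowerbound:combinedToSatFDFA}: a saturated FDFA may use a rich \emph{leading} transition system, so the progress DFAs only have to distinguish the few loop behaviours that are still possible after the prefix has been ``consumed'' by the leading TS; but the syntactic FDFA (with leading TS $\T_{\sim_L}$) and the fully saturated FDFA (whose progress DFAs must agree for all $\sim_L$-equivalent leading words) are forced to merge leading states, and then a single progress DFA has to carry all the information that the leading TS was hiding, blowing it up exponentially.

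For part~1, I would take $L_n$ to be a language over an alphabet of constant size that encodes mappings $f\colon\{1,\dots,n\}\to\{1,\dots,n\}$ as blocks separated by a marker $\#$, analogous to the $L_n$ in \Cref{lemma:lowerbound:combinedToSatFDFA}: a word is in $L_n$ iff it ends in a periodic block and the mapping composed over one period fixes $1$ (or some fixed condition of that flavour). A saturated FDFA can use its leading TS with $n$ states to record ``the current image of $1$ so far'', resetting at each $\#$, so that each progress DFA only needs to check a bounded local condition on the period, giving size $(n,2n)$; one checks loopshift-stability and power-stability directly to see it is saturated. For the lower bound, the prefix congruence $\sim_{L_n}$ is trivial (the language is prefix-independent by design), so the syntactic FDFA has a trivial leading TS and a single progress DFA, which must distinguish all $n^n$ mappings because two distinct mappings can be separated by an appropriate continuation into a period — a standard fooling-set / Myhill–Nerode argument on the periodic progress congruence $\approx_P$.

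For part~2, I would take $L_n'$ to be a language whose Myhill–Nerode right congruence $\sim_{L_n'}$ genuinely has $n+1$ classes (so the syntactic FDFA legitimately uses leading TS of size $n+1$ and small progress DFAs, size $(n+1,2n+1)$) — for instance $L_n'$ can count a bounded prefix feature mod $n+1$ while the periodic acceptance condition is simple relative to the leading state. The point is that a \emph{fully saturated} FDFA must assign \emph{the same} minimal progress DFA to all leading words that are $\sim_{L_n'}$-equivalent (as the excerpt notes just before \Cref{sec:saturation}: ``in those the minimal progress DFAs for all states reached on $\sim_L$-equivalent words coincide''). So even though the syntactic FDFA gets to use a different progress DFA per leading state, a fully saturated one is stuck with one progress DFA that must handle the union of all these behaviours; by encoding $n^n$-many distinct such behaviours into the single $\sim_{L_n'}$-class (this is where one wants the leading TS to \emph{collapse} in $\T_{\sim_L}$ even though $\T_{\sim_L}$ itself is small), a fooling-set argument forces a progress DFA with at least $n^n$ states. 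The main obstacle — and the part that needs the most care — is designing $L_n'$ so that simultaneously (a) $\sim_{L_n'}$ has only $n+1$ classes and the per-state progress DFAs in the \emph{syntactic} FDFA stay of size $2n+1$, yet (b) a single progress DFA forced to be consistent across all leading words of one $\sim_{L_n'}$-class must separate $n^n$ period behaviours; threading this needle is exactly the known construction of \cite[Section~5]{Klarlund94} and \cite[Theorem~2]{AngluinF16}, and I would reconstruct it in a uniform way from a ``function-composition'' gadget, then verify the two fooling sets and the saturation/syntacticity side conditions.
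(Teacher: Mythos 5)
Your part~1 is essentially the paper's proof: the same function-composition language $L_n$ (infinitely many $\#$-delimited blocks whose mapping fixes $1$), the same $(n,2n)$ saturated FDFA whose leading TS tracks the current image of $1$, and the same observation that prefix-independence trivializes $\sim_{L_n}$ so that the single syntactic progress DFA must separate all $n^n$ mappings. That half is fine (just make the acceptance condition precise as ``infinitely many blocks fix $1$'' rather than ``ends in a periodic block'').

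Part~2 has a genuine gap, and it is not merely that you defer the construction of $L_n'$ to the literature: the mechanism you propose for the lower bound cannot work. You attribute the blow-up to the fact that a fully saturated FDFA must use the same minimal progress DFA for all $\sim_{L_n'}$-equivalent leading words, and you want to ``encode $n^n$ behaviours into a single $\sim_{L_n'}$-class.'' But the syntactic FDFA \emph{already} uses exactly one progress DFA per $\sim_{L_n'}$-class, namely the minimal one for the normalized pairs at that class. So any fooling set consisting of normalized period behaviours within one class forces the \emph{syntactic} progress DFA to be just as large, contradicting your own requirement~(a) that the syntactic FDFA have progress DFAs of size $2n+1$. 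The needle cannot be threaded this way. The actual source of the separation is the difference between the normalized and the unrestricted semantics: a fully saturated FDFA must classify \emph{every} pair $(u,v)$ consistently, including periods $v$ that do not loop on the leading state of $u$, whereas the syntactic FDFA is only constrained on periods that do loop. The paper's $L_n'$ exploits exactly this: it adds a fresh letter $\sharp$ and extra words so that $\sim_{L_n'}$ has $n+1$ classes tracking the image of $1$; then each syntactic progress DFA only has to handle the (few) periods $x\#y$ with $m_x(i)=1$ and $m_y(1)=i$ that actually loop on state $i$, which takes $2n+1$ states, while a fully saturated FDFA must correctly classify arbitrary period words encoding arbitrary mappings (non-looping ones included), and two words with $m_x\neq m_y$ can then be separated by a continuation, forcing $n^n$ states. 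Without identifying the normalized-versus-all-pairs distinction as the driver, your argument for part~2 does not go through.
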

\begin{proof}[Proof (sketch)]
We consider words of the form $u_1\#u_2\#\dots$ where each $u_i$ encodes a mapping $m_{u_i}$ from $\{1,\dots,n\}$ to itself, akin to~\cref{lemma:lowerbound:combinedToSatFDFA}.
Such a word is in $L_n$ if infinitely many $u_i$ represent a mapping such that $1$ is a fixpoint.
A small "saturated FDFA" $(\T, \D)$ tracks in its leading TS $\T$ the number that $1$ is sent to under the mapping encoded by the word read since the last $\#$.
This leads to $n$ progress DFAs of size $2n$ each, and $\D_i$ accepts words of the form $x\#y$ where $m_x$ maps $i$ to $1$ and $m_y$ maps $1$ to $i$.
The "syntactic FDFA" on the other hand has a trivial leading TS and in its progress DFA, any words encoding distinct mappings have to lead to different states.
For the second claim, we slightly modify $L_n$ to force the leading TS of the "syntactic FDFA" to be roughly of the shape of $\T$, which means the each progress DFA is of linear size.
A fully saturated FDFA for $L'$, however, cannot simply ignore words if they do not loop on a state of the leading TS, which forces it to track all $n^n$ mappings.
\end{proof}
%
%
\section{Conclusion}
\label{section:conclusion}
We have determined the complexity of several decision problems related to the saturation of FDFAs and FDWAs.
We showed that this has practical implications using polynomial-time (full) saturation check to provide a polynomial-time active learner for fully saturated FDFAs, as well as a polynomial-time passive learner capable of learning in the limit the syntactic FDFA for each regular $\omega$-language from polynomial data.
These are the first algorithms of their kind for representations of the entire class of $\omega$-regular languages.
Beyond these direct applications, we believe that our proofs offer deeper insights into the structure of FDFAs, particularly regarding the reasons for non-saturation and non-regularity, which may prove useful for further results and algorithms for $\omega$-languages represented by families of automata.%
\bibliography{references}
\ifappendix{%
  \newpage
  \appendix
  \section{Full proof of \PTIME-decidability of (full) saturation}
\label{section:appendixsaturation}
\newcommand{\X}{\ensuremath{\mathcal{X}}}
In this section we give full proofs for the \PTIME decidability of (full) saturation for FDFAs and saturation for FDWAs.
\AP
In order not to have to state all the lemmas twice, once for saturation and once for full saturation, we give a general formulation of the lemmas with respect to a ""reference set"" $\X \subseteq \Sigma^* \times \Sigma^+$, which either consists of all pairs or only the normalized ones for the "family" under consideration.
Note that the normalized pairs depend on $\F$, so when we say that $\X$ is a reference set, it is always w.r.t.\ some $\F$. This $\F$ should always be clear from the context.
The following properties are used in the sequel and can be shown by a simple case distinction. Recall that we identify each state of a transition system/automaton with the length-lexicographically minimal word leading to that state.
\begin{lemma}
    Let $\X$ be a "reference set" for some family $\F = (\T, \D)$, then
    \begin{enumerate}
        \item $(\T(u), x) \in \X$ if and only if $(u, \D_u(x)) \in \X$ if and only if $(u,x) \in \X$
        \item $(u,x) \in \X$ implies $(u,x^i) \in \X$ for all $i > 0$
        \item $(u, ax) \in \X$ implies $(ua, xa) \in \X$
        \item $(u,x), (u,y) \in \X$ implies $(u,xy) \in \X$
        \item for each $u$, there is a DFA of size $|\T|$ that accepts precisely those $x$ such that $(u,x) \in \X$.
    \end{enumerate}
    \label{lem:refset}
\end{lemma}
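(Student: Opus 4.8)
The plan is to prove all five items by a single case distinction on the admissible forms of the "reference set": either $\X = \Sigma^*\times\Sigma^+$ (all representations), or $\X$ is the set of all "normalized" pairs for $\F$, i.e.\ $\X = \{(u,x)\in\Sigma^*\times\Sigma^+ : \T(ux) = \T(u)\}$. In the first case, items (2)--(4) are immediate because $x^i$, $xa$ and $xy$ are non-empty, so the conclusion pairs lie in $\X$ by definition; item (1) is equally direct once one notes that $\D_u(x)\neq\eps$ whenever $x\neq\eps$ (the only convention needed here is that a "progress automaton" does not return to its initial state on a non-empty word, which one may assume, or enforce by splitting off a fresh initial state without affecting the accepted pairs or the standing assumption); and for item (5) the copy of $\T$ with every state made final recognizes $\Sigma^*$, which agrees with $\{x : (u,x)\in\X\}$ on $\Sigma^+$.

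The substance is in the normalized case, where I would view $\T$ as a right congruence and use repeatedly that $\T(\T(v)) = \T(v)$ and $\T(vz) = \T(\T(v)\,z)$ for all finite words $v,z$, together with the standing assumption on $\F$: since $x$ and $\D_u(x)$ reach the same state in $\D_u$, the words $ux$ and $u\,\D_u(x)$ reach the same state in $\T$, hence $\T(u\,\D_u(x)) = \T(ux)$. For item (1), each of the three memberships then unwinds to the single condition $\T(ux)=\T(u)$: $(\T(u),x)\in\X$ reads $\T(\T(u)\,x) = \T(\T(u))$, i.e.\ $\T(ux)=\T(u)$; $(u,\D_u(x))\in\X$ reads $\T(u\,\D_u(x)) = \T(u)$, i.e.\ again $\T(ux)=\T(u)$ by the standing assumption; and $(u,x)\in\X$ reads $\T(ux)=\T(u)$ by definition. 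Items (2)--(4) are then one- or two-line computations: (2) by induction on $i$, using $\T(ux^{i}) = \T(\T(ux^{i-1})\,x) = \T(\T(u)\,x) = \T(ux) = \T(u)$; (3) via $\T(ua\cdot xa) = \T(\T(uax)\,a) = \T(\T(u)\,a) = \T(ua)$; and (4) via $\T(u\,xy) = \T(\T(ux)\,y) = \T(\T(u)\,y) = \T(uy) = \T(u)$.

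For item (5) in the normalized case I would take the DFA obtained from $\T$ by keeping its transition function, moving the initial state to $\T(u)$, and declaring $\{\T(u)\}$ the only final state; it has exactly $|\T|$ states and accepts a word $z$ iff $\T(\T(u)\,z)=\T(u)$ iff $\T(uz)=\T(u)$, which coincides with $\{x : (u,x)\in\X\}$ on $\Sigma^+$ (the empty word is accepted as well, but every use of the statement involves a non-empty second component). I expect the only genuine obstacle to be keeping the identification of states with their length-lexicographically minimal words straight, and correctly invoking the standing assumption that progress-automaton equivalence refines the leading-TS equivalence after the prefix (this is exactly what makes item (1) go through); once that bookkeeping is fixed, each item collapses to one or two rewriting steps with $\T$.
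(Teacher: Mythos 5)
Your proof is correct and follows essentially the same route as the paper's: the same case distinction on whether $\X$ is universal or the set of normalized pairs, the same rewriting of each membership to the single condition $\T(ux)=\T(u)$ via right-congruence steps and the standing assumption on progress automata, and the same DFA construction for item~(5). The only (welcome) difference is that you explicitly flag the edge case $\D_u(x)=\eps$ in item~(1) for the universal reference set, which the paper silently subsumes under ``trivially hold.''
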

\begin{proof}
    Surely, if $\X$ is $\Sigma^* \times \Sigma^+$, the properties trivially hold.
    Otherwise, we have that $(u,x) \in \X$ if and only if $ux \sim_\T u$, or equivalently $\T(ux) = \T(u)$.
    \begin{enumerate}
        \item 
        Obviously, $\D_u(x)$ and $x$ lead to the same state in $\D_u$ and so by our assumption on the progress automata, $u\D_u(x) \sim_\T ux$.
        Because $\D_u = \D_{\T(u)}$ and $\sim_\T$ is a right congruence, we obtain $u\D_{\T(u)}(x) \sim_\T \T(u)\D_{\T(u)}(x) \sim_\T u\D_u(x) \sim_\T ux$, from which the equivalence immediately follows.
        \item
        For all $i > 0$, we have $ux^i = uxx^{i-1} \sim_\T ux^{i-1} \sim_\T \dots \sim_\T ux \sim_\T u$.
        \item 
        If $u\cdot ax \sim_\T u$, then since $\sim_\T$ is a right congruence, surely also $ua \cdot xa = u \cdot ax \cdot a \sim_\T u\cdot a$.
        \item 
        If $ux\sim_\T u$ and $uy\sim_\T u$, then because $\sim_\T$ is a right congruence surely also $uxy\sim_\T uy \sim_\T u$.
        \item
        If $\X$ is universal, such a DFA is universal.
        Otherwise, we obtain it for some $u$ from $\T$ by setting $\T(u)$ as the initial state and $\{\T(u)\}$ as final states.
    \end{enumerate}
\end{proof}

A "family" $\F = (\T, \D)$ and a reference set $\X$ give rise to an equivalence relation on representations.
We define that $(u,x) \intro*\redequiv^\X_\F (v,y)$ if $\{(u,x),(v,y)\} \subseteq X$ implies either both $(u,x)$ and $(v,y)$ are accepted by $\F$ or both are rejected by $\F$.
Conversely, if $(u,x) \not\redequiv^\X_\F (v,y)$, then we must have that both are in $\X$ and $\F$ accepts one but not the other.
Note in particular, that if $\X$ is not universal, then any non-normalized representation is $\redequiv^\X_\F$-equivalent to all representations.
We can now use this relation to define the notions of loopshift- and power-stability as well as that of saturation in a way that is agnostic to whether we consider only normalized or all representations.

\AP
$\F$ is called ""saturated for $\X$"" if for all representations $(u,x),(v,y) \in \X$ with $ux^\omega = vy^\omega$ holds $(u,x) \redequiv^\X_\F (v,y)$.
From this, we can recover the notions used in the main part.
Specifically, $\F$ is "saturated" if $\F$ is saturated for the set of "normalized" representations and "fully saturated" if $\F$ is saturated for all representations.
\AP
Further, we call $\F$ is called ""loopshift-stable for $\X$"" if for all $(u,ax) \in \X$ holds $(u, ax) \redequiv^\X_\F (ua, xa)$.
Finally, $\F$ is ""power-stable for $\X$"" if for all $(u,x) \in \X$ holds $(u,x^i) \redequiv^\X_\F (u,x^j)$ for all $i, j > 0$.%

\begin{lemma}
  Let $\F = (\T, \D)$ be an "FDFA" and $\X$ be a "reference set".
  $\F$ is "saturated for $\X$" if and only $\F$ it is
  "loopshift-stable for $\X$" and "power-stable for $\X$".%
  \label{app:lemma:fdfacharacterisation}
\end{lemma}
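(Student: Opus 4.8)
The plan is to prove the two directions separately, with the forward direction being routine and the backward direction carrying the combinatorial content.

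For the ``only if'' direction, suppose $\F$ is saturated for $\X$. Then loopshift-stability for $\X$ and power-stability for $\X$ are immediate special cases: whenever $(u,ax)\in\X$, also $(ua,xa)\in\X$ by item~(3) of \Cref{lem:refset}, and $u(ax)^\omega = ua\,(xa)^\omega$; whenever $(u,x)\in\X$, all $(u,x^i)\in\X$ by item~(2) of \Cref{lem:refset}, and $ux^\omega = u(x^i)^\omega$. In both cases the relevant pairs lie in $\X$ and represent the same ultimately periodic word, so saturation for $\X$ gives $\redequiv^\X_\F$.

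For the ``if'' direction, first observe that $\redequiv^\X_\F$, restricted to pairs that lie in $\X$, is an equivalence relation, since on $\X$ it is just ``both accepted or both rejected''; in particular it is transitive there. It therefore suffices, given $(u,x),(v,y)\in\X$ with $w:=ux^\omega=vy^\omega$, to build a chain of $\X$-pairs from $(u,x)$ to $(v,y)$ whose consecutive links are related by a loopshift or by a power/deduplication step. Without loss of generality $|u|\le|v|$; since $u$ and $v$ are both prefixes of $w$, we have $v=us$ for some finite word $s$. The chain uses three moves: (i) by power-stability for $\X$ replace $x$ by $x^k$ for $k$ large enough that $|s|\le k|x|$ (all intermediate $(u,x^i)$ are in $\X$ by item~(2) of \Cref{lem:refset}); (ii) apply $|s|$ loopshifts to $(u,x^k)$, each step staying in $\X$ by item~(3) of \Cref{lem:refset} and preserving $\redequiv^\X_\F$ by loopshift-stability for $\X$, arriving at $(v,\ell)$, where $\ell$ is the cyclic rotation of $x^k$ obtained by moving its first $|s|$ letters onto the prefix — those letters spell exactly $s$ because $x^\omega=sy^\omega$, and one checks $v\ell^\omega=w$; (iii) since $(v,\ell)$ and $(v,y)$ both represent $w$ and share the prefix $v$, cancellation gives $\ell^\omega=y^\omega$, so $\ell$ and $y$ are both powers of their common root $r$, say $\ell=r^a$ and $y=r^b$, whence power-stability for $\X$ applied to the base $\ell=r^a$ gives $(v,\ell)\redequiv^\X_\F(v,r^{ab})$ and applied to the base $y=r^b$ gives $(v,y)\redequiv^\X_\F(v,r^{ab})$. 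Concatenating the three moves and using transitivity of $\redequiv^\X_\F$ on $\X$ yields $(u,x)\redequiv^\X_\F(v,y)$, i.e.\ saturation for $\X$.

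The main obstacle is move (iii). It is tempting to just ``deduplicate'' each pair down to its primitive root, but $(v,\wordroot{\ell})$ need not lie in $\X$ even when $(v,\ell)$ does: in the normalized case $\wordroot{\ell}$ may fail to loop on $\T(v)$ although some power of it does. So power-stability for $\X$ cannot be invoked with that base, and one must instead pass through the common power $r^{ab}$, which is reachable by power-stability from \emph{both} $\ell=r^a$ and $y=r^b$ while keeping every intermediate pair in $\X$ (again by item~(2) of \Cref{lem:refset}). A secondary point needing care is the bookkeeping in move (ii): choosing $k$ so that $|s|\le k|x|$ ensures the $|s|$ loopshifts do not wrap around the loop, and one should check that the first $|s|$ letters of $x^k$ are indeed $s$.
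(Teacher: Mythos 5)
Your proof is correct and follows essentially the same route as the paper's: both directions use items (2) and (3) of \cref{lem:refset} to keep pairs in $\X$, and the converse aligns the prefixes by loopshifting and then connects the two loops (which share an $\omega$-power, hence a common root) through a common power via two applications of power-stability. Your explicit remark that one must pass through $r^{ab}$ rather than deduplicate to the root—because the root need not yield a pair in $\X$—is exactly the point the paper's chain handles by going from $(u,y_1y_0)$ up to $(u,(y_1y_0)^j)=(u,x^i)$ and back down to $(u,x)$.
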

\begin{proof}
  For the direction from left to right, assume that $\F$ is saturated for $\X$ and let $(u,ax), (v,y) \in \X$.
  Then by applying~\cref{lem:refset}, we obtain $(v,y^i) \in \X$ for all $i > 1$ from (2) and $(ua,xa)\in\X$ from (3).
  Since additionally, $u(ax)^\omega = ua(xa)^\omega$ and $vy^\omega = v(y^i)^\omega$ for $i > 0$, we  immediately get that $\F$ it is "loopshift-stable for $\X$" and "power-stable for $\X$" from the fact that $\F$ is "saturated for $\X$".

  For the opposite direction, consider $(u,x),(v,y) \in \X$ such that $ux^\omega = vy^\omega$.
  Assume without loss of generality that $|u| \geq |v|$, else we just swap them.
  Pick $m = (|u| - |v|) \bmod |y|$ and decompose $y = y_0y_1$ with $|y_0| = m$.
  Then we can write $u = vy^ky_0$ and $x^i = (y_1y_0)^j$ for some $k \geq 0, i, j > 0$.
  It now follows from the properties of $\F$, that 
  \begin{align*}
    (v,y) & \redequiv^\X_\F (vy^{k+1},y) & & \text{loopshift-stable for $\X$ and \cref{lem:refset}~(3)}\\
    & \redequiv^\X_\F  (vy^ky_0y_1, y_0y_1)    &  & \text{$y = y_0y_1$}                 \\
    & \redequiv^\X_\F  (vy^ky_0, y_1y_0)      &  & \text{loopshift-stable for $\X$ and \cref{lem:refset}~(3)} \\
    & \redequiv^\X_\F  (vy^ky_0, (y_1y_0)^j)  &  & \text{power-stable for $\X$ and \cref{lem:refset}~(2)} \\
    & \redequiv^\X_\F  (u, x^i) & & u =vy^ky_0 \text{ and } x^i = (y_1y_0)^j\\
    & \redequiv^\X_\F  (u, x)  &  & \text{power-stable for $\X$ and $(u,x) \in \X$}
  \end{align*}
  Thus, we have $(u,x) \redequiv^\X_\F (v,y)$, and the claim is established.
\end{proof}

\begin{lemma}
    Let $\F = (\T, \D)$ be an "FDFA" of "size@@family" $(n,k)$ and $\X$ be a "reference set".
    It is possible to decide in polynomial time whether $\F$ is "loopshift-stable for $\X$".
    If not, one obtains in the same time a counterexample $(u,ax) \in \X$ of polynomial length such that $(u, ax) \not\redequiv^\X_\F (ua,xa)$.%
  \label{app:lemma:rotationinvariancedecidable}
\end{lemma}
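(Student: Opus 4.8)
The plan is to reduce the question of loopshift-stability for $\X$ to finitely many emptiness tests on small product automata, one for each pair consisting of a state of $\T$ and a letter of $\Sigma$. First I would use \cref{lem:refset}~(1) to observe that whether $(u,ax)\in\X$, and whether $\F$ accepts $(u,ax)$, depend only on the state $\T(u)$, and that $\T(ua)=\delta(\T(u),a)$; hence it suffices to let $u$ range over the at most $n$ states of $\T$. Next, by \cref{lem:refset}~(3), $(u,ax)\in\X$ already implies $(ua,xa)\in\X$, so $(u,ax)\not\redequiv^\X_\F(ua,xa)$ holds exactly when $(u,ax)\in\X$ and $\F$ accepts precisely one of $(u,ax)$ and $(ua,xa)$, i.e.\ exactly one of $ax\in\LangDFA{\D_u}$ and $xa\in\LangDFA{\D_{ua}}$ holds.

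For a fixed state $u$ and letter $a$, I would build three DFAs over the input word $x$: an automaton $A_1$ for $\{x : ax\in\LangDFA{\D_u}\}$, obtained from $\D_u$ by advancing its initial state one $a$-step (size at most $k$); an automaton $A_3$ for $\{x : xa\in\LangDFA{\D_{ua}}\}$, obtained from $\D_{ua}$ by keeping the transitions and declaring a state final iff its $a$-successor is final (size at most $k$); and an automaton $A_2$ for $\{x : (u,ax)\in\X\}$, obtained by advancing the initial state one $a$-step in the DFA of size at most $n$ for $\{z : (u,z)\in\X\}$ supplied by \cref{lem:refset}~(5). The product of $A_1$, $A_2$ and $A_3$ is then a DFA of size at most $nk^2$ whose language is $\{x : (u,ax)\in\X \text{ and } x\in\LangDFA{A_1}\mathbin{\triangle}\LangDFA{A_3}\}$, that is, exactly the set of words $x$ witnessing a failure of loopshift-stability via the state $u$ and letter $a$. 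Consequently $\F$ is "loopshift-stable for $\X$" iff all $n\cdot|\Sigma|$ of these product automata have empty language, which is a reachability check in a graph of polynomial size and therefore decidable in polynomial time overall.

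If one of the tests succeeds, the same reachability computation returns a witness $x$ of length at most $nk^2$, and then $(u,ax)$ --- with $u$ the length-lexicographically minimal word for its state, of length below $n$ --- is a counterexample in $\X$ of polynomial length with $(u,ax)\not\redequiv^\X_\F(ua,xa)$, as required. I do not expect a serious obstacle here; the only points requiring care are the uniform handling of the "reference set" $\X$ (for $\X=\Sigma^*\times\Sigma^+$ the automaton $A_2$ is simply universal and the argument degenerates to the non-normalized case, whereas for the normalized $\X$ it genuinely uses the leading TS via \cref{lem:refset}~(5)), and the treatment of the trailing letter in $xa$, which is absorbed into the right-quotient construction for $A_3$ and does not increase the state count.
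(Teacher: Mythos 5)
Your proof is correct and follows essentially the same route as the paper's: for each state $u$ of $\T$ and letter $a\in\Sigma$, build polynomial-size product DFAs capturing ``$ax$ accepted'' versus ``$xa$ accepted'' (together with membership of the pair in $\X$ via \cref{lem:refset}~(5)), and reduce loopshift-stability to emptiness of a symmetric difference, which also yields a short counterexample. The only (harmless) variation is that you fold in the $\X$-constraint once on the left-hand side and invoke \cref{lem:refset}~(3) to cover the right-hand side, whereas the paper intersects each side with its own $\X$-automaton; if anything, your formulation is slightly tighter in guaranteeing that the returned pair $(u,ax)$ itself lies in $\X$.
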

\begin{proof}
  $\F$ is not "loopshift-stable for $\X$" if there are $(u,ax), (ua,xa) \in \X$ with $(u,ax) \not\redequiv^\X_\F (ua,xa)$.
  To verify whether such a counterexample exists, we consider all $u \in \states{\T}$ and $a \in \Sigma$.
  As guaranteed by (5) of~\cref{lem:refset}, we assume that $\C_u$ is a DFA of size at most $|\T|$ accepting all $x \in \Sigma^+$ such that $(u,x) \in \X$.
  We now construct
  \begin{itemize}
    \item $\A_{u,a}$ accepting all words $w \in \Sigma^+$ such that
      $aw \in \LangDFA{\D_u} \cap \LangDFA{\C_u}$
    \item $\B_{u,a}$ which accepts a word $w \in \Sigma^+$ if $wa \in
      \LangDFA{\D_{ua}}\cap \LangDFA{\C_{ua}}$.
  \end{itemize}
  $\A_{u,a}$ can be obtained by first building the product of the two automata for the intersection and then making the $a$-successor of the original initial state as the new initial state.
  Similarly, for $\B_{u,a}$ one first builds the product of the two automata for the intersection, and subsequently marks precisely those states as accepting that have a transition on $a$ into a final state.
  A word $x$ in the symmetric difference of $\LangDFA{\A_{u,a}}$ and $\LangDFA{\B_{u,a}}$ now witnesses that $\A$ is not loopshift-stable for $\X$.
  
  As the constructed DFAs are obtained from the transition system resulting from the product of $\T$ and a progress DFA, their size is clearly polynomial, which means that testing the symmetric difference for emptiness can be done in polynomial time.
  Since we have to consider $|\T|$ choices for $u$ and $|\Sigma|$ choices for $a$, so the overall runtime of the procedure is clearly polynomial.
  Finally, since non-emptiness is witnessed by a word $x$ of linear length, and $|u|$ is at most $|\T|$ (since $\states{\T}$ is prefix-closed), the a counterexample of polynomial length is guaranteed to exist.
\end{proof}

Deciding whether the language accepted by a DFA is "power-stable@@for" is, in general, a difficult problem (which can be shown by a reduction similar to the one in ~\cref{lemma:powerclosednessPSPACEmembership}).
However we can show that if $\F$ is "loopshift-stable@@for", then each "progress DFA" $\D_u$ of $\F$ has a transition structure that is monoidal in the sense made precise by the following lemma.

\begin{lemma}
  Let $\X$ be a "reference set" and $\F = (\T, \D)$ be an "FDFA" that is "loopshift-stable for $\X$" in which each "progress DFA" is minimal.
  For all $(u,x_1), \dots, (u,x_n), (u,y_1),\dots,(u,y_n) \in \X$ with $\D_u(x_i) = \D_u(y_i)$ for $1 \leq i \leq n$, it holds that $\D_u(x_1 \dotsc x_n) = \D_u(y_1 \dotsc y_n)$.%
  \label{app:lemma:rotationinvarianceproperty}
\end{lemma}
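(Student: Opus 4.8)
The plan is to reduce the general claim about products $x_1\dots x_n$ to the single-factor case by induction on $n$, so the heart of the matter is the case $n=2$: if $(u,x),(u,x'),(u,y),(u,y') \in \X$ with $\D_u(x)=\D_u(x')$ and $\D_u(y)=\D_u(y')$, then $\D_u(xy)=\D_u(x'y')$. Once this is established, the general statement follows by peeling off one factor at a time, using that by \Cref{lem:refset}~(4) the product $x_1\dots x_{n-1}$ is again a word $z$ with $(u,z)\in\X$ (and similarly for the $y_i$), so we may apply the induction hypothesis to $(u,z),(u,z'),(u,x_n),(u,y_n)$ after first noting that $\D_u(z)=\D_u(z')$ by induction.

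For the $n=2$ step I would first reduce further to swapping out only \emph{one} factor: it suffices to show $\D_u(xy)=\D_u(x'y)$ whenever $\D_u(x)=\D_u(x')$ (and symmetrically on the right, $\D_u(xy)=\D_u(xy')$ when $\D_u(y)=\D_u(y')$), and then chain the two. The right-hand swap is immediate since $\D_u$ is deterministic: $\D_u(xy) = \delta^*(\D_u(x), y)$ depends on $y$ only through $\D_u(x)$. So the real content is the \emph{left}-hand swap, and this is where loopshift-stability enters. The idea is: since $\D_u$ is minimal, the states $\D_u(x)$ and $\D_u(x')$ being equal means that for every continuation $w$, $xw \in \LangDFA{\D_u} \Leftrightarrow x'w \in \LangDFA{\D_u}$; but that alone does not give $\D_u(xy)=\D_u(x'y)$ without also controlling $\X$-membership. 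Here we use the crucial hypothesis on progress automata stated in the preliminaries: if $x$ and $x'$ lead to the same state of $\D_u$, then $ux$ and $ux'$ lead to the same state of $\T$, hence $(u,x)\in\X \Leftrightarrow (u,x')\in\X$, and moreover $(u,xw)\in\X \Leftrightarrow (u,x'w)\in\X$ for all $w$. Combining this with loopshift-stability applied via \Cref{app:lemma:fdfacharacterisation}-style reasoning (loopshifts move a prefix of a normalized pair into the loop while preserving $\redequiv^\X_\F$), one shows that $xy$ and $x'y$ are indistinguishable by $\D_u$ on all continuations that keep us inside $\X$; minimality of $\D_u$, together with the fact that all states of $\D_u$ reachable by $\X$-words are separated only by such continuations, then forces $\D_u(xy) = \D_u(x'y)$.

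I expect the main obstacle to be this last point: converting ``indistinguishable on all $\X$-relevant continuations'' into genuine equality of states in the minimal $\D_u$. Minimality of $\D_u$ separates states by \emph{arbitrary} continuations, not just $\X$-ones, so one needs the structural fact --- flagged in the proof sketch of \Cref{lemma:rotationinvarianceproperty} and guaranteed by the assumption on progress DFAs --- that once a state $q$ of $\D_u$ is reached by some word looping on $u$, \emph{every} word reaching $q$ loops on $u$. This means the ``$\X$-relevant'' states of $\D_u$ form a sub-transition-system closed under all continuations that stay $\X$-relevant, and within this part $\X$-continuations suffice to separate states. Making this closure argument precise, and carefully threading the loopshift-stability witness equalities (each an application of \Cref{lem:refset}~(3) and the definition of $\redequiv^\X_\F$) through a concrete decomposition $xy \to$ loop-shifted form $\to x'y$, will be the delicate bookkeeping; the rest is routine induction and determinism.
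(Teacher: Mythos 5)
There is a genuine gap: you have the two factor-swaps exactly backwards, and the one you dismiss as ``immediate'' is where the entire content of the lemma lives. Replacing the \emph{first} factor is the trivial direction: if $\D_u(x)=\D_u(x')$ then $\D_u(xy)=\delta^*(\D_u(x),y)=\delta^*(\D_u(x'),y)=\D_u(x'y)$ by determinism alone, with no minimality and no loopshift-stability. Replacing the \emph{second} factor is not immediate: from $\D_u(y)=\D_u(y')$, i.e.\ $\delta^*(\iota,y)=\delta^*(\iota,y')$, one cannot conclude $\delta^*(q,y)=\delta^*(q,y')$ for the non-initial state $q=\D_u(x)$ in an arbitrary DFA (take a DFA in which two letters both loop on the initial state but act differently on some other state). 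That second swap is precisely the monoidal property that loopshift-stability has to be invoked to establish, and your plan never supplies an argument for it: all the machinery you assemble (loopshifts, the closure of the $\X$-relevant part of $\D_u$, minimality) is aimed at the first swap, which does not need any of it.

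The repair is essentially the paper's proof: always swap at the \emph{front} of the loop, where determinism applies, and use loopshift-stability to rotate each subsequent factor into front position. Concretely, for any continuation $z$ one chains $(u,x_1x_2\cdots x_nz)\redequiv^\X_\F(u,y_1x_2\cdots x_nz)$ (front swap, by determinism), then $\redequiv^\X_\F(uy_1,x_2\cdots x_nzy_1)$ (loopshift, using \cref{lem:refset} to stay inside $\X$ and the fact that $\T(uy_1)=\T(u)$ so the progress DFA is unchanged), then front-swaps $x_2$ for $y_2$, and so on, finally shifting all the $y_i$ back into the loop. This yields agreement of acceptance on every continuation $z$; combined with the assumption that $\D_u$ accepts only words $w$ with $(u,w)\in\X$ (so that continuations leaving $\X$ are uniformly rejected on both sides), the left quotients of $x_1\cdots x_n$ and $y_1\cdots y_n$ with respect to $\LangDFA{\D_u}$ coincide, and minimality of $\D_u$ gives equality of the reached states. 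Your closing observations about the closure of the $\X$-relevant states and about the final minimality step are on target; it is the central swap step that is misassigned, and the induction wrapper, while harmless, adds nothing once the rotation argument is in place.
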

\begin{proof}
  Since each right congruences representing the transition system of a progress DFA refines the right congruence of the leading transition system, we can safely assume that $\F$ accepts only representations from $\X$.
  Now let $(u,x_1), \dots, (u,x_n), (u,y_1),\dots,(u,y_n) \in \X$ satisfy the conditions of the statement, i.e.~$\D_u(x_i) = \D_u(y_i)$ for all $1 \leq i \leq n$.
  First, observe that $(u,x_1), (u,y_1) \in \X$ implies $(u,x_1y_1) \in \X$ by (4) of~\cref{lem:refset}.
  More generally, iterated application yields that for any $w$ which is the concatenation of an arbitrary number of $x_i$ and $y_j$ holds that $(u,w) \in \X$.
  Moreover, the same holds if we shift an arbitrary number of these $x_i$ or $y_i$ into the prefix, as guaranteed by property three of~\cref{lem:refset}.
  
  Now let $z \in \Sigma^*$ be such that $(u,x_1\dots x_nz) \in \X$, then
  \begin{align*}
  (u, x_1\dots x_nz) & \redequiv^\X_\F (u,y_1x_2\dots x_nz) & & \D_u(x_1) = \D_u(y_1)\\
  & \redequiv^\X_\F (uy_1,x_2\dots x_nzy_1) && \text{loopshift-stable for $\X$}\\
  & \redequiv^\X_\F (uy_1,x_2\dots x_nzy_1) && \T(uy_1) = \T(u)\\
  & \redequiv^\X_\F (uy_1, y_2x_3\dots x_nzy_1) && \D_u(x_2) = \D_u(y_2)\\
 & \redequiv^\X_\F \dots \\
 & \redequiv^\X_\F (uy_1\dots y_{n-1}, y_nzy_1y_2\dots y_{n-1})\\
 & \redequiv^\X_\F (uy_1\dots y_{n-2}, y_{n-1}y_nzy_1\dots y_{n-2}) \\
 & \redequiv^\X_\F \dots \\
 & \redequiv^\X_\F (u, y_1\dots y_n z).
  \end{align*}
  We can shift the prefix back into the loop in the end, since we know that $(u,y_1\dots y_nz) \in \X$, and therefore by property 3 of~\cref{lem:refset} also $(uy_1\dots y_i,y_{i+1}\dots y_nzy_1\dots y_i) \in \X$ for all $i < n$.
  So for all $z \in \Sigma^*$ holds that $(u,x_1\dots x_nz) \notin \X$ or $x_1\dots x_nz \in \LangDFA{\D_u}$ if and only if $y_1\dots y_nz \in \LangDFA{\D_u}$.
  Since we assumed $\D_u$ to only accept words $w$ such that $(u,w) \in \X$, that must mean the left quotients of $x_1\dots x_n$ and $y_1\dots y_n$ with regard to $\LangDFA{\D_u}$ coincide.
  Since $\D_u$ is minimal by assumption, this immediately gives us $\D_u(x_1\dots x_n) = \D_u(y_1\dots y_n)$.
\end{proof}

A key implication of this technical lemma is that the powers of all words reaching the same state in a "progress DFA" $\D_u$ behave the same.
Specifically, for $x, y \in \Sigma^+$ with $\D_u(x) = \D_u(y)$, it holds that $\D_u(x^i) = \D_u(y^i)$ for all powers $i \geq 1$.
This implies that for checking "power-stability" it suffices to consider one representative for each state of $\D_u$, which results in an efficient decision procedure.

\begin{lemma}
    Let $\X$ be a "reference set" and $\F = (\T, \D)$ be an "FDFA" which is "loopshift-stable for $\X$".
    For any $(u,x) \in \X$ and $y \in \Sigma^*$ with $\D_u(x) = \D_u(y)$, we have that $\D_u(x^i) = \D_u(y^i)$ for all $i > 0$.
    In particular, this implies that $\F$ is "power-stable for $\X$" if and only if for all $u \in \states{\T}, x \in \states{\D_u}$ with $(u,x) \in \X$ holds $(u,x^i) \redequiv^\X_\F (u,x^j)$ for all $0 < i < j \leq |\D_u|$.%
  \label{app:lemma:powerclosednessdecidable}
\end{lemma}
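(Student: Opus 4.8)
The plan is to derive both parts of the lemma from the monoidal transition structure of the progress DFAs (\cref{app:lemma:rotationinvarianceproperty}) together with the elementary closure properties of the reference set $\X$ (\cref{lem:refset}); as in \cref{app:lemma:rotationinvarianceproperty}, I work under the assumption that the progress DFAs are minimal. \textbf{The first claim.} Fix $(u,x)\in\X$ and $y\in\Sigma^{*}$ with $\D_u(x)=\D_u(y)$. If $y=\eps$, then $\D_u(x)$ is the initial state of $\D_u$, so $\D_u(x^{i})=\D_u(x)=\D_u(\eps)=\D_u(y^{i})$ for all $i>0$ by an immediate induction. If $y\neq\eps$, then $y\in\Sigma^{+}$, and property~(1) of \cref{lem:refset} turns $(u,x)\in\X$ together with $\D_u(x)=\D_u(y)$ into $(u,y)\in\X$; instantiating \cref{app:lemma:rotationinvarianceproperty} with $x_{1}=\dots=x_{i}=x$ and $y_{1}=\dots=y_{i}=y$ then yields $\D_u(x^{i})=\D_u(y^{i})$ for every $i>0$.

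\textbf{The equivalence.} The direction from power-stability to the bounded condition is immediate, since the latter is a special case of the former. For the converse, assume the bounded condition and let $(u,x)\in\X$ be arbitrary; the goal is $(u,x^{i})\redequiv^\X_\F(u,x^{j})$ for all $i,j>0$. By property~(2) of \cref{lem:refset} every $(u,x^{i})$ lies in $\X$, so on these pairs $\redequiv^\X_\F$ reduces to ``$\F$ accepts both or rejects both'', which depends only on the state $\D_u(x^{i})$; it therefore suffices to show that all states in $\{\D_u(x^{i})\mid i\ge 1\}$ have the same acceptance status. If $\D_u(x)$ is the initial state of $\D_u$ this set is a singleton and we are done. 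Otherwise let $r\in\states{\D_u}$ be the (non-empty) word representing the state $\D_u(x)$; then $(u,r)\in\X$ by property~(1) of \cref{lem:refset}, and the first claim gives $\D_u(x^{i})=\D_u(r^{i})$ for all $i>0$, so it is enough to argue the same for $r$ in place of $x$. The sequence $\D_u(r),\D_u(r^{2}),\D_u(r^{3}),\dots$ is the forward orbit of $\D_u(r)$ under the map on states of $\D_u$ that sends a state $q$ to the state reached from $q$ by reading $r$; a standard pigeonhole/tail-and-cycle argument shows that this orbit already exhausts the states it ever visits within its first $|\D_u|$ terms, i.e.\ $\{\D_u(r^{i})\mid i\ge 1\}=\{\D_u(r^{i})\mid 1\le i\le|\D_u|\}$. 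Since $r\in\states{\D_u}$ and $(u,r)\in\X$, the bounded condition applies to $(u,r)$ and forces $\D_u(r^{1}),\dots,\D_u(r^{|\D_u|})$, and hence all of $\{\D_u(r^{i})\mid i\ge 1\}$, to share the same acceptance status; transferring this back along $\D_u(x^{i})=\D_u(r^{i})$ finishes the argument.

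\textbf{Main obstacle.} There is little depth here beyond \cref{app:lemma:rotationinvarianceproperty}; the real work is the bookkeeping forced by a possibly non-universal reference set $\X$. The relation $\redequiv^\X_\F$ behaves as an equivalence — in particular, is transitive — only as long as one stays inside $\X$, which is why property~(2) of \cref{lem:refset} is invoked to keep all of $(u,x^{i})$ and $(u,r^{i})$ inside $\X$; moreover the representative of a $\D_u$-state can be $\eps$, which never pairs with $u$ into $\X$ and so must be singled out (and is then trivial). The only genuinely quantitative ingredient is the $|\D_u|$-bound on where the orbit of $\D_u$-states under ``append $r$'' stabilizes, which is exactly the bound $j\le|\D_u|$ appearing in the statement.
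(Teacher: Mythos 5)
Your proof is correct and follows essentially the same route as the paper's: the first claim by applying \cref{app:lemma:rotationinvarianceproperty} to $i$ copies of $(u,x)$ and $(u,y)$, and the equivalence by replacing $x$ with the state representative in $\D_u$ and bounding the exponents via a pigeonhole argument on the orbit of states under repeated reading of that representative. The only (trivially repairable) imprecision is that you apply the bounded condition directly to $(u,r)$ although it is stated for $u\in\states{\T}$; the paper inserts the step $(u,x^i)\redequiv^\X_\F(\T(u),x^i)$ via property~(1) of \cref{lem:refset}, which your argument should also do explicitly.
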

\begin{proof}
    Let us begin with the first claim, which we then use to show the equivalence in the second claim.
    Let $(u,x) \in \X$ and $y \in \Sigma^*$ with $\D_u(x) = \D_u(y)$.
    For $i \geq 1$, we can apply \cref{app:lemma:rotationinvarianceproperty} to $i$-copies of $(u,x)$ and $(u,y)$, giving us that $\D_u(x^i) = \D_u(y^i)$, which establishes the first claim.\\
    For the equivalence from the second claim, we show that if $(u, x^i) \not\redequiv^\X_\F (u,x^j)$ for $0 < i < j$, then already $(\T(u),\D_u(x)^{i'}) \not\redequiv^\X_\F (\T(u),\D_u(x)^{j'})$ for some $0 < i' < j' \leq |\D_u|$.
    Consider a representation $(u,x) \in \X$ witnessing that $\F$ is not "power-stable for $\X$", meaning $(u, x^i) \not\redequiv^\X_\F (u,x^j)$ for $0 < i < j$.
    Let $\hat{u} = \T(u)$, then $\T(u) = \T(\hat{u})$ giving us $(u, x^i) \redequiv^\X_\F (\hat{u}, x^i)$ by (1) of~\cref{lem:refset}.
    Moreover, for $\hat{x} = \D_u(x)$, we have $\D_u(x) = \D_{\hat{u}}(x) = \D_{\hat{u}}(\hat{x}) = \D_u(\hat{x})$.
    Applying the first claim that we already proved, we get that $(u,x^i) \redequiv^\X_\F (\hat{u}, x^i) \redequiv^\X_\F (\hat{u}, \hat{x}^i)$.
    
    To see that $i,j \leq |\D_u|$, consider the sequence $(q_m)_{m \in \mathbb{N}}$ defined by $q_m = \D_u(\hat{x}^m)$.
    By the pigeonhole principle, there must be a repetition among the first $|\D_u| + 1$ states.
    So there exist $0 \leq k < l \leq |\D_u|$ with $q_k = q_l$, and moreover the sequence repeats cyclically afterwards.
    A standard pumping argument shows that we can cut out repetitions of this sequence, and thereby bound the exponents $i,j$ by the size of $\D_u$, concluding the proof.
\end{proof}

If we first check "loopshift-stability" through~\cref{app:lemma:rotationinvariancedecidable}, and subsequently test for "power-stability" through~\cref{app:lemma:powerclosednessdecidable}, then by~\cref{app:lemma:fdfacharacterisation} we obtain a decision procedure for saturation.
\begin{theorem}
    For a "reference set" $\X$ and a given "FDFA" $\F$, one can decide in polynomial time whether $\F$ is "saturated for $\X$".
    If not, there exist $(u,x),(v,y) \in \X$ of polynomial length such that $ux^\omega = vy^\omega$ and $(u,x) \not\redequiv^\X_\F (v,y)$.
  \label{app:theorem:ptimesaturationFDFA}
\end{theorem}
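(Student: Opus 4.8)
The plan is to present this statement as the natural assembly of \Cref{app:lemma:fdfacharacterisation,app:lemma:rotationinvariancedecidable,app:lemma:powerclosednessdecidable} into a single two-phase polynomial-time algorithm that also extracts the required witness, and which works uniformly for both the normalized and the non-normalized "reference set" $\X$. In the first phase I would run the procedure from \Cref{app:lemma:rotationinvariancedecidable} to test whether $\F$ is "loopshift-stable for $\X$". If it is not, that procedure returns in polynomial time a pair $(u,ax) \in \X$ of polynomial length with $(u,ax) \not\redequiv^\X_\F (ua,xa)$; since $(ua,xa) \in \X$ by \Cref{lem:refset}~(3) and $u(ax)^\omega = ua(xa)^\omega$, this pair already witnesses failure of "saturated for $\X$", so we output it and stop.

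If $\F$ is "loopshift-stable for $\X$", I would proceed to the second phase and test "power-stability for $\X$" via \Cref{app:lemma:powerclosednessdecidable}: it suffices to check, for each state $u \in \states{\T}$, each state-representative $x \in \states{\D_u}$ with $(u,x) \in \X$, and all $0 < i < j \le |\D_u|$, whether $(u,x^i) \redequiv^\X_\F (u,x^j)$. Each such check reduces to running $\D_u$ and the size-$|\T|$ DFA for $\X$-membership from \Cref{lem:refset}~(5) on the words $x^i$ and $x^j$ and comparing the two outcomes. There are only polynomially many triples $(u,x,(i,j))$, and $|x^i| \le |\D_u|^2$, so the whole phase is polynomial. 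If all checks succeed, then $\F$ is "power-stable for $\X$", hence "saturated for $\X$" by \Cref{app:lemma:fdfacharacterisation}, and we answer ``yes''. If some check fails, then $(u,x^i),(u,x^j) \in \X$ by \Cref{lem:refset}~(2), $u(x^i)^\omega = u(x^j)^\omega$, and $(u,x^i) \not\redequiv^\X_\F (u,x^j)$; since $|u| \le |\T|$ and $|x|,i,j \le |\D_u|$, this is a polynomial-length witness of non-saturation for $\X$, which we output. Correctness is then immediate from the three cited lemmas together with \Cref{app:lemma:fdfacharacterisation}.

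The main obstacle is a bookkeeping subtlety rather than a conceptual one: \Cref{app:lemma:powerclosednessdecidable} relies internally on \Cref{app:lemma:rotationinvarianceproperty}, which is stated for FDFAs whose "progress DFAs" are minimal, and minimizing a progress DFA can in principle merge states reaching different states of $\T$, thereby violating the standing assumption that $\D_u(x)=\D_u(y)$ implies $\T(ux)=\T(uy)$. I would handle this by observing that everything used in the second phase is a statement about which concrete words of the form $x^i$ lie in $\LangDFA{\D_u}$, a property that depends only on the residual $x^{-i}\LangDFA{\D_u}$, and that $\X$ depends only on $\T$. Hence replacing each $\D_u$ by its minimization yields an FDFA with the same accepted representations and the same $\X$, and therefore the same relation $\redequiv^\X_\F$, the same loopshift- and power-stability status, and the same saturation status; the minimized automata are used only to bound the number and length of representatives to be tested, while any witness produced is a valid witness for the original $\F$ by the same residual-invariance. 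The one slightly delicate point that I would verify carefully is that, after this replacement, the internal argument of \Cref{app:lemma:rotationinvarianceproperty} goes through as stated (equivalently, that its conclusion is needed only up to equality of residuals), so that the uniform ``minimize first, then test'' recipe is sound; a minor loose end to mention is that the representative of the class of the initial state is $\eps \notin \Sigma^+$, but that class is excluded automatically since $(u,\eps)\notin\X$.
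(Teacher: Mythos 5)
Your two-phase algorithm --- the loopshift-stability test of \Cref{app:lemma:rotationinvariancedecidable} followed, on success, by the power-stability test of \Cref{app:lemma:powerclosednessdecidable}, glued together by \Cref{app:lemma:fdfacharacterisation}, with witnesses and length bounds extracted exactly as you describe --- is precisely the paper's proof. The minimality issue you raise in your last paragraph is a genuine imprecision in the paper, but your repair (plain minimization of each $\D_u$, accepting that the refinement assumption may be lost) does not quite work as stated: \Cref{lem:refset}~(1), which the power-stability test uses to replace an arbitrary $x$ by its state representative $\hat x=\D_u(x)$ while preserving membership in $\X$, is itself proved \emph{from} the refinement assumption, as is the fact that all words reaching a given state of $\D_u$ agree on whether they loop on $u$; after plain minimization a bad pair $(u,x)\in\X$ could land in a state whose chosen representative $r$ satisfies $(u,r)\notin\X$, and the test would skip it. The intended reading is that ``minimal'' means minimal among progress DFAs satisfying the standing refinement assumption (i.e., the product of the minimal DFA with $\T$); the concluding step of \Cref{app:lemma:rotationinvarianceproperty} (equal residuals imply equal states) still goes through there because in the normalized case the two concatenations being compared both loop on $u$ and hence agree in the $\T$-component, while in the universal case the refinement assumption plays no role and plain minimization is harmless.
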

\begin{proof}
    From \cref{app:lemma:fdfacharacterisation}, it follows that $\F = (\T, \D)$ is saturated for the "reference set" $\X$ if and only if $\F$ is both "loopshift-stable for $\X$" and "power-stable for $\X$".
    By \cref{app:lemma:rotationinvariancedecidable}, we know that it is decidable in polynomial time whether $\F$ is loopshift-stable.
    If not, the lemma gives a counterexample of polynomial length witnessing that $\F$ is not saturated for $\X$.
    Otherwise, we may assume that $\F$ is loopshift-stable for $\X$, which allows us to apply \cref{app:lemma:powerclosednessdecidable} to test whether $\F$ is also power-stable for $\X$.
    
    Regarding the length of a counterexample, if the first test fails, we obtain a counterexample $(u, ax) \not\redequiv^\X_\F (ua,xa)$ with $|u| \leq n, |x| \leq n^2k^2$, which satisfies the bounds.
    Similarly, the second test yields a counterexample consisting of $u \in \states{\T}, x\in\states{\D_u}$ and indices $0 <i < j \leq |\D_u|$ such that $(u,x^i) \not\redequiv^\X_\F (u,x^j)$.
    As $|u| \leq |\states{\T}|$ and $|x| \leq |\states{\D_u}|$ guarantees that a counterexample of polynomial length exists.
    
    As we already argued, testing for loopshift-stability is possible in polynomial time.
    If this check returns a positive answer, we consider $n$ choices for $u$ and at most $k$ possibilities for $x$.
    By (5) of~\cref{lem:refset}, verifying if $(u,x) \in \X$ is linear in $|u|$ and $|x|$.
    What remains is to compute the sequence of states reached by $\D_u$ for the first $k + 1$ powers of $x$, and to check whether both a final and a non-final state appear in the sequence.
    This is clearly possible in polynomial time, and the claim is established.
\end{proof}

In contrast to FDFAs, the mode of acceptance by itself ensures that all repetitions of the looping part are treated the same in an FDWA.
Indeed, since $(u,x)$ is in $\ReprFDWA{\W}$ precisely if $x^\omega$ is accepted by the "progress DFA" for $u$, which immediately ensures that no repetition of $x$ can lead to a different acceptance status.
So for deciding whether an FDWA is saturated, what remains is to check is that shifting parts of the loop into the prefix does not change acceptance.

\begin{theorem}
  For a "reference set" $\X$ and a given "FDWA" $\W$, one can decide in polynomial time polynomial whether $\W$ is "saturated for $\X$".
  \label{app:theorem:ptimesaturationFDWA}
\end{theorem}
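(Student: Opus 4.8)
The plan is to reuse the two-step strategy from the FDFA case (\cref{app:theorem:ptimesaturationFDFA}), observing that for FDWAs the second step is free, so that only a loopshift-stability check remains, which I then adapt to the $\omega$-semantics of the progress automata.

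First I would note that the characterization of \cref{app:lemma:fdfacharacterisation} applies verbatim to FDWAs: its proof only manipulates representations through the abstract relation $\redequiv^\X_\W$ and the structural properties of reference sets collected in \cref{lem:refset}, neither of which refers to the progress automata being deterministic \emph{finite} rather than deterministic \emph{weak} automata. Hence $\W$ is saturated for $\X$ if, and only if, it is loopshift-stable for $\X$ and power-stable for $\X$. Now power-stability for $\X$ holds for \emph{every} FDWA: for $(u,x)\in\X$ we have $(u,x^i)\in\X$ for all $i>0$ by \cref{lem:refset}~(2), and $\W$ accepts $(u,x^i)$ precisely when $x^\omega=(x^i)^\omega$ is accepted by the weak automaton $\B_u$, which does not depend on $i$; thus $(u,x^i)\redequiv^\X_\W(u,x^j)$ for all $i,j>0$. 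It therefore remains to decide loopshift-stability for $\X$ in polynomial time, which I would do by adapting \cref{app:lemma:rotationinvariancedecidable}.

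Fix a state $u\in\states{\T}$ and a letter $a\in\Sigma$. Using $(ax)^\omega=a\,(xa)^\omega$ and determinism, $\W$ accepts $(u,ax)$ iff the DWA $\B_u$ started in the state $q_a:=\delta_{\B_u}(\iota_{\B_u},a)$ accepts $(xa)^\omega$, while $\W$ accepts $(ua,xa)$ iff $\B_{ua}$ started in its initial state accepts $(xa)^\omega$. Writing $y:=xa$, the pair $(u,a)$ witnesses a failure of loopshift-stability for $\X$ precisely when there is a word $y$ (which, for the normalized reference set, must be of the form $xa$ with $\delta^*_{\T}(\T(u),ax)=\T(u)$) on whose periodic extension $y^\omega$ the two weak automata $\B_u$ from $q_a$ and $\B_{ua}$ from its initial state disagree. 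The key fact I would use about weak automata is that a DWA started in $q$ accepts $y^\omega$ iff $\delta^*(q,y^N)$ is an accepting state for any $N$ at least the number of its states; more conveniently, one may always pass from $y$ to a suitable power $\hat y=y^\ell$ so that $\delta^*_{\B_u}(q_a,\hat y)$ and $\delta^*_{\B_{ua}}(\iota_{\B_{ua}},\hat y)$ are fixpoints of the $\hat y$-transformation. This does not change $\hat y^\omega=y^\omega$, and it preserves the normalization constraint since $(u,ax)\in\X$ implies $(u,(ax)^\ell)\in\X$.

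Consequently, a failure for $(u,a)$ exists iff there are states $s$ of $\B_u$ and $t$ of $\B_{ua}$ of opposite acceptance status and a word $z$ such that $\delta^*_{\B_u}(q_a,z)=\delta^*_{\B_u}(s,z)=s$, $\delta^*_{\B_{ua}}(\iota_{\B_{ua}},z)=\delta^*_{\B_{ua}}(t,z)=t$, and (for the normalized $\X$) $z$ ends in $a$ with the $\T$-run from $\T(u)$ along $az$ returning to $\T(u)$. This is a plain reachability question in the product transition system of two copies of $\B_u$, two copies of $\B_{ua}$, and $\T$ (the latter augmented with a flag marking the period-closing occurrence of $a$); that product has polynomial size, and iterating over the $|\states{\T}|\cdot|\Sigma|$ pairs $(u,a)$ and the at most quadratically many pairs $(s,t)$ keeps the total running time polynomial. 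I expect the main obstacle to be exactly this last encoding step: faithfully expressing weak $\omega$-acceptance of $y^\omega$ — which is a property of the limit SCC of the run and must be localized to a bounded power of $y$, hence to a lasso/reachability condition — while simultaneously respecting the normalization constraint on the period $y$, whose end may fall at any occurrence of $a$.
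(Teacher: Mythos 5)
Your proposal is correct, and it follows the strategy that the paper's main-body sketch announces (power-stability is automatic for FDWAs, so only loopshift-invariance needs to be checked via a product/reachability construction), but the way you discharge the remaining check differs from the paper's full proof. You route everything through \cref{app:lemma:fdfacharacterisation} --- correctly observing that its proof only manipulates $\redequiv^\X_\W$ and the properties of \cref{lem:refset}, never the finite-word semantics of the progress automata --- and therefore only have to test \emph{single-letter} loopshifts, comparing $\B_u$ started in $\delta_{\B_u}(\iota_{\B_u},a)$ against $\B_{ua}$ started in its initial state on a common lasso $z$ that is simultaneously idempotent in both automata. The paper instead characterizes witnesses of non-saturation directly, via a triple of progress states $p,q\in\states{\B_u}$ and $r\in\states{\B_v}$ with $v=uv_p$, words $x,y$ satisfying five reachability/looping conditions, and the idempotent power $e=k!$; this handles arbitrary block shifts in one step but needs more bookkeeping. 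Your version is more modular and quantifies over fewer state tuples, at the cost of having to re-justify the characterization lemma for FDWAs (which you do correctly). One detail to repair in a full write-up: your normalization constraint on the lasso is stated as ``the $\T$-run from $\T(u)$ along $az$ returns to $\T(u)$'' for $z=x'a$, which asks for $\T(uax'a)=\T(u)$; the condition actually needed is $\T(uax')=\T(u)$, i.e.\ $(u,ax')\in\X$ --- the period closes \emph{before} the trailing $a$, not after it. You flag exactly this bookkeeping as the delicate point yourself, and it is easily fixed by recording in the product the $\T$-state reached just before the last letter of $z$, so this is an imprecision in the sketch rather than a genuine gap.
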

\begin{proof}
In the latter part of this proof, we show that a witness for $\W = (\T, \B)$ not being saturated corresponds to $u \in \states{\T}, p,q \in \states{\B_u}$ and $r \in \states{\B_v}$ where $v = uv_p$ for some $v_p$ that reaches $p$ in $\D_u$, such that there exist $x, y$ with the following properties:
  \begin{enumerate}
      \item $(u,xy) \in \X$ (whence $(ux,yx) \in \X$ by (3) of~\cref{lem:refset})
      \item $x$ reaches $p$ in $\B_u$ and leads from $q$ to $p$ in $\B_u$
      \item $y$ leads from $p$ to $q$ in $\B_u$ and reaches $r$ in $\B_v$
      \item $xy$ loops on $r$ in $\B_v$.
      \item $p,q$ are accepting in $\B_u$ if and only $r$ is not accepting in $\B_v$.
  \end{enumerate}
  So an algorithm testing for saturation can iterate over all possible choices for the variables and test if such words exist.
  To do this, it first filters out choices for the variables such that (1) cannot be satisfied, which can be done using the DFA guaranteed by (5) of~\cref{lem:refset}.
  Then, it build DFAs $\A$ for words $x$ satisfying (2), and $\B$ for words $y$ satisfying (3).
  Finally, we can construct an NFA $\C$ as the product between a DFA for words looping on $r$ in $\B_v$, and the concatenation of $\A$ and $\B$.
  Now the words accepted by $\C$ correspond precisely to witnesses of non-saturation of $\W$.

  Filtering out choices for the variables that cannot satisfy (1) is easy and can be done in time $nk$ by testing if the DFA for words looping on $u$ accepts some $x$ that reaches $p$ in $\B_u$.
  Similarly, property (5) can easily be checked.
  The DFAs $\A$ and $\B$ have at most $k^2$ states, so $|\C| \leq k^5$.
  As we iterate over $n$ choices for $u$ and at most $k$ choices for $p,q,r$ each, the overall runtime of the algorithm is at most $nk^8$, which is clearly polynomial in $n$ and $k$.
  Finally, we can bound the maximum length of a counterexample for $\W$ being saturated by $n$ and the size of $\C$.
  
  We now show that counterexamples to the saturation of $\W = (\T, \B)$ are indeed of the claimed form, which concludes the proof.
  So let $(u,x) \in \X$, then every state of $\B_u$, either loops on some $x^i$ or it reaches on some $x^j$ a state that loops on $x^i$.
  Since $\B_u$ has at most $k$ states, it must be that $j \leq k$.
  If we pick $e = k!$, then $x^e$ reaches some state $q$ and $x^e$ also loops on $q$ since every possible (elementary) cycle length is at most $|\B_u$ and divides $e$.

  Consider $(u,x),(v,y) \in \X$ with $ux^\omega = vy^\omega$.
  By the previous argument, $x^e$ reaches and loops on some $p$ in $\B_u$, while $y^e$ reaches and loops on some $r$ in $\B_v$.
  We revisit the proof of~\cref{app:lemma:fdfacharacterisation} to find a decomposition $u = vy^my_0$ and $x^i = (y_1y_0)^j$ for $m \geq 0, i,j > 0$.
  Since $x^e$ ($y^e$) reaches and loops on $p$ ($r$) in $\B_u$ ($\B_v$), this holds true for any multiple, so in particular $x^{e\cdot i}$ ($y^{e\cdot j}$).
  For $q = \delta^*_u(p, y_1)$, we now have the following situation:
  \begin{center}
      \begin{tikzpicture}[node distance=18mm]
          \node[] (ue) {$\B_u:\; \eps$};
          \node[right=of ue] (p) {$p$};
          \node[right=of p] (q) {$q$};
          \node[right=of q,xshift=.1\textwidth] (ve) {$\B_v:\; \eps$};
          \node[right=of ve] (r) {$r$};
          
          \draw[->]
          (ue) edge node[anchor=south] {$x^{e\cdot i}$} (p)
          (p) edge node[anchor=south] {$y_1$} (q)
          (q) edge[loop right] node {$(y_0y_1)^{e\cdot j}$} (q)
          (q) edge[bend left=15] node[anchor=north] {$(y_0y_1)^{e \cdot (j-1)}y_0$} (p);
          \draw[->]
          (ve) edge node[anchor=south] {$y^{e\cdot j}$} (r)
          (r) edge[loop right] node {$y^{e\cdot j}$} (r);
      \end{tikzpicture}
  \end{center}
  We now define
  \begin{align*}
      \hat{x} &= x^{e\cdot i}y_1 = (y_1y_0)^{e\cdot j}y_1 = y_1(y_0y_1)^{e\cdot j} = y_1y^{e\cdot j} \quad \text{ and}\\
      \hat{y} &= y^{e\cdot j}y_0 = (y_0y_1)^{e\cdot j}y_0 = y_0(y_1y_0)^{e\cdot j} = y_0x^{e\cdot i}.
  \end{align*}
  Our goal is now to show that $(u,x)\redequiv^\X_\W (u,\hat{x}\hat{y})$ and $(v,y)\redequiv^\X_\W (uy_1,\hat{y}\hat{x})$, which means an algorithm testing for saturation only has to consider witnesses of this kind.
  By simple rearrangements, we obtain the following equalities.
  \begin{alignat*}{2}
          u\cdot \hat{x}\cdot\hat{y} &= u \cdot y_1y^{e\cdot j} \cdot y_0x^{e\cdot (i-1)} & 
          \qquad uy_1 \cdot\hat{y}\cdot\hat{x} &= uy_1\cdot (y_0y_1)^{e\cdot(j-1)} y_0\cdot (y_1y_0)^{e\cdot j}y_1\\
          &= u\cdot y_1(y_0y_1)^{e\cdot j} y_0(y_1y_0)^{e\cdot(j-1)} &&= uy_1\cdot (y_0y_1)^{e\cdot j - 1 + e \cdot j + 1}\\
          &= u\cdot (y_1y_0)^{e\cdot j + 1 + e \cdot j - 1} &&= uy_1\cdot y^{2e\cdot j}\\
          &= u\cdot (y_1y_0)^{2\cdot e \cdot j}\\
          &= u\cdot x^{2e\cdot i}
  \end{alignat*}
  Since $(u,x) \in \X$, we obtain by (2) of~\cref{lem:refset} that also $(u,\hat{x}\hat{y}) = (u,x^{2e\cdot i}) \in \X$.
  Similarly, it follows from (2) and (3) of the same Lemma that $(uy_1, \hat{y}\hat{x}) = (uy_1, y^{2e\cdot j}) \in \X$.
  Moreover, since exponentiation of the loop naturally preserves acceptance in an FDWA, we obtain the desired $\redequiv^\X_\F$-equivalences.
  Finally, note that (1) of~\cref{lem:refset} guarantees that we may assume $|u| \leq n$ as we could replace it with a shorter word reaching the same state in $\T$.
\end{proof}

  \subsection{Full proofs of~\cref{sec:learning}}\label{section:appendixlearning}

Formally, an active learner gets two functions as input, an $L$-membership oracle and an $L$-equivalence oracle for a target language $L \in \L$. For a regular language $L$, an $L$-membership oracle is a function $\moracle:\Sigma^* \rightarrow \{0,1\}$ such that $\moracle(u) = 0$ iff $u \notin L$. And an $L$-equivalence oracle for DFAs is a function $\eoracle(\A)$ that takes DFAs as input, and returns $\top$ if $\LangDFA\A = L$, and otherwise returns a counter-example $u \in \Sigma^*$ such that $u \in L$ iff $u \notin \LangDFA\A$. Similarly, for a regular $\omega$-language $L$, an \AP $L$-""membership oracle"" is a function $\moracle: \UP{\Sigma} \rightarrow \{0,1\}$ such that $\moracle(u,v) = 0$ iff $uv^\omega \notin L$. An \AP $L$-""equivalence oracle"" for fully saturated FDFAs is a function $\eoracle(\F)$ that takes fully saturated FDFAs as input, and returns $\top$ if $\LangFDFA\F = L$, and otherwise returns a counter-example $(u,v) \in \UP{\Sigma}$ such that $uv^\omega \in L$ iff $uv^\omega \notin \LangFDFA\F$.

\activelearner*
\begin{proof}
  Our algorithm uses a polynomial time active learner for DFAs in order to learn a DFA for the language $L_\$ := \{u\$v \mid uv^\omega \in L\}$ over the alphabet $\Sigma \cup \{\$\}$ (as mentioned in the related work, this approach is also used in \cite{FarzanCCTW08} for building an NBA learner). Any DFA $\A$ over $\Sigma \cup \{\$\}$ can easily be turned into an FDFA $\F_\A$ with $\ReprFDFA{\F_\A} = \{(u,v) \in \Sigma^* \times \Sigma^+ \mid u\$v \in \LangDFA\A\}$ such that the leading transition system and the progress DFAs of $\F_\A$ all have size at most $|\A|$: the leading transition system $\T$ is the transition system of $\A$ restricted to $\Sigma$ and to the states reachable without the special letter $\$$. And for each $q$ in $\T$, the progress DFA $\D(q)$ is $\A$ with the $\$$-successor of $q$ as initial state and the alphabet restricted to $\Sigma$. 
  
  Vice versa, any FDFA $\F$ can be turned into a DFA $\A_\F$ with $\LangDFA{\A_\F} = \{u\$v \mid (u,v) \in \ReprFDFA\F\}$ by just taking the disjoint union the leading transition system and the progress DFAs, and inserting $\$$-transitions from each $q$ of the leading transition system to the initial state of its progress DFA.  This implies that the minimal DFA for $L_\$$ is as most as big as the minimal fully saturated FDFA for $L$.

  We now describe our "active learner" for fully saturated FDFAs.   We have to build an algorithm that gets as input for some regular $\omega$-language $L \subseteq \Sigma^\omega$
  \begin{itemize}
  \item an $L$-"membership oracle" $\moracle(u,v)$ for ultimately periodic words, and 
  \item an $L$-"equivalence oracle" $\eoracle(\F)$ for fully saturated FDFAs.
  \end{itemize}
Our algorithm runs a polynomial time active learning algorithm for DFAs over the alphabet $\Sigma \cup \{\$\}$. For this it implements an $L_\$$-membership oracle $\moracle_\$(x)$ for finite words, and an $L_\$$-equivalence oracle $\eoracle_\$(\A)$ for DFAs as follows:
  \begin{itemize}
  \item $\moracle_\$(x)$ returns $\moracle(u,v)$ if $x= u\$v$ with $u \in \Sigma^*$ and $v \in \Sigma^+$, and otherwise $\moracle_\$(x)$ returns $0$.
  \item $\eoracle_\$(\A)$ first checks if $\F_\A$ that is constructed as described above, is fully saturated (\cref{theorem:ptimesaturationFDFA}).
    \begin{itemize}
    \item If $\F_\A$ is not fully saturated, then the algorithm for the full saturation test returns counter-examples $(u_1,v_1)$, $(u_2,v_2)$ with $u_1v_1^\omega = u_2v_2^\omega$ but $(u_1,v_1) \in \ReprFDFA{\F_\A}$ and $(u_2,v_2) \notin \ReprFDFA{\F_\A}$. Our function $\eoracle_\$(\A)$ returns $u_1\$v_1$ if $\moracle(u_1,v_1) = 0$, and $u_2\$v_2$ otherwise. 
    \item If $\F_\A$ is fully saturated, then $\eoracle_\$(\A)$ checks if $\eoracle(\F_\A)$ returns $\top$. If yes,  $\eoracle_\$(\A)$ also returns $\top$. Otherwise, $\eoracle(\F_\A)$ has returned some pair $(u,v)$ and $\eoracle_\$(\A)$ returns $u\$v$.
  \end{itemize}
  \end{itemize}
  Note that the equivalence oracle $\eoracle_\$(\A)$ does not care if $\A$ accepts any words with no $\$$ or with more than one $\$$. Such words are filtered out in the construction of $\F_\A$.
  From the definition of the oracles $\moracle_\$(u)$ and $\eoracle_\$(\A)$, it follows that these are indeed oracles for $L_\$$, as long as $\F_\A$ does not accept the target language. 
    When the DFA learner terminates because the last call $\eoracle_\$(\A)$ returned $\top$, then we know that $\eoracle(\F_\A)$ has returned $\top$, so our FDFA learner can output $\F_\A$.

    It remains to argue that the algorithm runs in time polynomial in the size of the minimal fully saturated FDFA $\F_L$ for the $\omega$-language $L$, and in the length of the longest counter-example returned by $\eoracle(\F)$ during the run of the algorithm.

  The running time of the active DFA learner is polynomial in the size of the minimal DFA $\A_{L_\$}$ for $L_\$$, and in the length of the longest counter-examples returned by $\eoracle_\$(\A)$.
  From the translation between DFAs and FDFAs described at the beginning of the proof we get that $\A_{L_\$}$ has size at most $|\F_L|$. 
  The counter-examples returned by $\eoracle_\$(\A)$ are either counter-examples $u\$v$ where $(u,v)$ was a counter-example returned by $\eoracle(\F_\A)$, or counter-examples resulting from the full saturation check on $\F_\A$. The latter counter-examples are polynomial in the size of $\F_\A$ by \cref{theorem:ptimesaturationFDFA}, and hence polynomial in $\F_L$ because the DFAs used in $\eoracle(\A)$ by the active DFA learner are polynomial in the size of $\A_{L_\$}$. Furthermore, the running time of $\eoracle_\$(\A)$ is polynomial in $\A$ because the full saturation check is polynomial by \cref{theorem:ptimesaturationFDFA}.

  Hence, the execution of the active DFA learner takes time polynomial in $\F_L$ and in the length of the longest counter-example returned by $\eoracle(\F)$.
\end{proof}

\passivelearner*
\begin{proof}
  The "passive learner" works as follows:
  \begin{enumerate}
  \item Construct an FDFA $\F$ from the given sample $S$ using techniques known from passive learning of DFAs.
  \item  Check whether $\F$ is a syntactic FDFA that is consistent with $S$ and return $\F$ if yes. 
  \item Otherwise, return a default FDFA that accepts precisely all representations of the ultimately periodic words in $S_+$.
  \end{enumerate}
  In Step~2 of this algorithm, checking whether $\F$ is consistent with $S$ can be done by simply running all the examples on $\F$. The test whether $\F$ is a syntactic FDFA can then be done by checking whether $\F$ is saturated. We guarantee in step 1 that there is no smaller leading congruence for the given sample.  Hence, if $\F$ is saturated and consistent with $S$, then it is a syntactic FDFA.

  Concerning step 3, constructing a syntactic FDFA for precisely the ultimately periodic words in $S_+$ is straight-forward, so we do not detail the construction here.

  It remains to describe step 1, for which it is in principle known how to do it. For example, the generic algorithm GleRC from \cite{BohnL24} for inferring right-congruences from samples of finite words can be used for this. To be self-contained, we describe a possible method in the following, together with an argument that each syntactic FDFA can be learned in the limit from polynomial data.
  
  Let $L \subseteq \Sigma^\omega$ be a regular $\omega$-language. The leading congruence of the syntactic FDFA for $L$ is the transition system of the canonical right congruence $\sim_L$ over $\Sigma^*$, defined by
  \[
  u \sim_L u' :\Leftrightarrow (\forall w \in \Sigma^\omega:\; uw \in L \Leftrightarrow u'w \in L).
  \]
We say that an $L$-sample $\sim_L$-separates two words $u,u' \in \Sigma^*$ if the sample contains examples $(uv,x)$ and $(u'v,x)$ such that $uvx^\omega \in L$ iff $u'vx^\omega \notin L$. 
  
Let $R_{\sim_L} = \{u_1, \ldots, u_n\}$ be the set of length-lexicographic (llex) least representatives of the $\sim_L$-classes. Then there is an $L$-sample $S_{\sim_L}$ of size polynomial in the number of classes of $\sim_L$ that contains examples that $\sim_L$-separate all $u_i,u_j$ with $i \not=j \in \{1, \ldots, n\}$, and all $u_ia,u_j$ with $u_ia \not\sim_L u_j$ for all $i,j \in \{1, \ldots, n\}$ and $a \in \Sigma$. The following algorithm constructs the transition system of $\sim_L$ for each $L$-sample $S$ that contains all examples from $S_{\sim_L}$:
\begin{itemize}
\item Start with $R := \{\varepsilon\}$ and while there exists $v \in \Sigma^*$ such that $v$ is $\sim_L$-separated by $S$ from all $u \in R$, add the llex-smallest such $u$ to $R$.
\item Use the elements of $R$ as states, and for each $u \in R$ and $a \in \Sigma$, add a $a$-transition from $u$ to $u'$ for the least $u' \in R$ such that $ua$ and $u'$ are not $\sim_L$-separated by $S$.
\end{itemize}
This finishes the construction of the leading transition system. Note that the algorithm guarantees that there cannot be a smaller leading congruence for the given sample because all the elements of $R$ are separated by examples.

For each $u \in R_{\sim_L}$, the progress DFA is the minimal DFA accepting those $x \in \Sigma^+$ with $u \sim_L ux$ and $ux^\omega \in L$. The transition system of this DFA corresponds to the right-congruence $\approx_L^u$ defined by
\[
x \approx_L^u y \; :\Leftrightarrow \; x \sim_L y
\text{ and } \forall z \in \Sigma^*: u \sim_L uxz \Rightarrow (u(xz)^\omega \in L \Leftrightarrow u(yz)^\omega \in L).
\]
The final states are the classes of those $x$ with $u \sim_L ux$ and $ux^\omega \in L$. 
We use the convention that $ux^\omega \notin L$ if $x$ is empty. This corresponds to the initial state of the progress DFAs being non-accepting.

Now we can proceed in the same way as for $\sim_L$. We say that an $L$-sample $\approx_L^u$-separates $x,y \in \Sigma^*$ if the sample contains examples $(u,xz)$ and $(u,yz)$ such that $u(xz)^\omega \in L$ iff $u(yz)^\omega \notin L$. 
Let $u \in R_{\sim_L}$ and let $P_{\approx_L^u} = \{x_1, \ldots, x_m\}$ be the set of the llex-least representatives of the $\approx_L^u$-classes. Then there is an  $L$-sample $S_{\approx_L^u}$ of size polynomial in the number of classes of $\approx_L^u$ that contains examples that $\approx_L^u$-separate all $x_i,x_j$ with $i \not=j \in \{1, \ldots, m\}$, and all $x_ia,x_j$ with $x_ia \not\approx_L^u x_j$ for all $i,j \in \{1, \ldots, m\}$ and $a \in \Sigma$. For defining the accepting states of the progress DFA,  $S_{\approx_L^u}$ also contains the examples $(u,x_i)$ for all $i \in \{1,\ldots,m\}$ such that $ux_i \sim_L u$ and $u(x_i)^\omega \in L$.
Now the progress DFA for $u$ can be inferred by the same type of algorithm as for the leading transition system:
\begin{itemize}
\item Start with $P := \{\varepsilon\}$ and while there exists $x \in \Sigma^*$ such that $x$ is $\approx_L^u$-separated by $S$ from all $x \in P$, add the llex-smallest such $x$ to $P$.
\item Use the elements of $P$ as states, and for each $x \in P$ and $a \in \Sigma$, add a $a$-transition from $x$ to $y$ for the least $y \in P$ such that $xa$ and $y$ are not $\approx_L^u$-separated by $S$.
\end{itemize}
For each state $u$ of the leading transition system, this algorithm infers the progress DFA for $u$ for each $L$-sample that contains all examples of $S_{\approx_L^u}$.
The characteristic sample for the syntactic FDFA is obtained by taking $S_{\sim_L}$ and all the $S_{\approx_L^u}$.
\end{proof}

  \subsection{Full proofs of~\cref{section:almostsaturation}}\label{section:appendixalmostsaturation}
In this section, we provide full proofs for showing that deciding almost saturation is \PSPACE-complete.

\almostsaturationPSPACEcomplete*

We show membership in \PSPACE first by using standard techniques.

\begin{lemma}
  Deciding whether a given "FDFA" $\F = (\T, \D)$ is "almost saturated" is in \PSPACE.%
  \label{lemma:powerclosednessPSPACEmembership}
\end{lemma}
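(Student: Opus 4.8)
The plan is to exhibit a nondeterministic polynomial-space procedure that accepts exactly when $\F$ is \emph{not} "almost saturated"; since $\PSPACE = \textsf{NPSPACE}$ by Savitch's theorem and $\PSPACE$ is closed under complement, membership of almost saturation in $\PSPACE$ follows.

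First I would unfold the definition: $\F$ is not almost saturated iff there is a state $q$ of the leading TS $\T$, a word $x$ that loops on $q$ (so that, for $u$ the canonical word of $q$, both $(u,x)$ and every $(u,x^i)$ are "normalized"), and an exponent $i>1$ with $x \in \LangDFA{\D_u}$ but $x^i \notin \LangDFA{\D_u}$. The only information about $x$ on which this test depends is its \emph{transition profile} in $\D_u$, i.e.\ the function $\tau_x : \states{\D_u} \to \states{\D_u}$ sending $p$ to $\delta_{\D_u}^*(p,x)$ (together with the state $x$ reaches in $\T$ from $q$, which decides whether $x$ loops): writing $\iota$ and $F$ for the initial state and final set of $\D_u$, transition profiles multiply, so $x^j \in \LangDFA{\D_u}$ iff $\tau_x^{\,j}(\iota)\in F$, where $\tau_x^{\,j}$ is the $j$-fold composition.

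The procedure then runs as follows. It nondeterministically picks a state $q$ of $\T$ and then guesses $x$ letter by letter, maintaining only two items of polynomial size: the current state of $\T$ reached from $q$, and the transition profile in $\D_u$ of the prefix of $x$ read so far (a table with $|\states{\D_u}|$ entries). Both are updated locally on each guessed letter $a$ (for the profile, $\tau \mapsto \tau'$ with $\tau'(p) = \delta_{\D_u}(\tau(p),a)$). When the procedure nondeterministically declares $x$ finished, holding profile $\tau$, it checks in polynomial time: (i) the current $\T$-state is again $q$ (so $x$ loops on $q$); (ii) $\tau(\iota)\in F$ (so $x$ is accepted); and (iii) there is some $i$ with $2 \le i \le |\states{\D_u}|+1$ such that $\tau^i(\iota)\notin F$. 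It accepts iff all three hold. An accepting run produces a normalized $(u,x)$ in the recognized language together with a normalized, rejected $(u,x^i)$, contradicting almost saturation; conversely, if almost saturation fails, witnessed by $(u,x)$ and exponent $j$, then replaying the letters of $x$ brings the $\T$-component back to $q=\T(u)$ and yields $\tau=\tau_x$ with $\tau(\iota)\in F$ and $\tau^j(\iota)\notin F$, and since the orbit $\iota,\tau(\iota),\tau^2(\iota),\dots$ is eventually periodic with pre-period and period at most $|\states{\D_u}|$, some $i$ with $2\le i\le |\states{\D_u}|+1$ also satisfies $\tau^i(\iota)\notin F$, so the procedure has an accepting run.

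The key point — and the one I would be most careful to state cleanly — is that a naive guess of $x$ is not space-bounded, so the procedure must retain only the transition profile of the prefix read so far; with this observation in place everything else is routine bookkeeping, including the small pigeonhole argument that bounds the rejecting exponent (one could equally check all $i$ up to $2|\states{\D_u}|$, still polynomial) and the remark that the $\T$-component merely enforces normalization and could be replaced by the state of the size-$|\T|$ DFA from \cref{lem:refset}~(5) that recognizes the words looping on $q$.
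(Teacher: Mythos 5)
Your proposal is correct and follows essentially the same route as the paper's proof: nondeterministically guess the witness word letter by letter while maintaining only its transition profile (the transformation it induces on the progress DFA's states), then test acceptance of powers via iterates of that profile. Your write-up is in fact somewhat more explicit than the paper's sketch on the points it leaves implicit — tracking the leading-TS state to enforce normalization, the pigeonhole bound on the rejecting exponent, and the appeal to $\PSPACE=\textsf{NPSPACE}$ and closure under complement.
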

\begin{proof}
A counterexample to the "almost saturation" of $\F$ is $(u,x) \in \NormFDFA{\F}$ such that $(u,x^i) \notin \NormFDFA{\F}$ for some $i > 1$.
    We can nondeterministically guess such a word $x = a_1a_2\dotsc a_n$ letter by letter.
    If $x_k = a_1\dots a_k$ is the word guessed so far, we also keep track of the transformation $x_k$ induces on $Q$, which is a mapping $m_k : Q \to Q$ such that $q \mapsto \delta^*(q, x_i)$.
    As these mappings compose naturally, we can compute them in polynomial space alongside the guessed word.
    It is then easy to obtain $q_i = m_k^i(\iota)$ and $q_j = m_k^j(\iota)$, the states reached by $x_k^i$ and $x_k^j$, respectively.
    Since for each guessed letter, we can easily check whether $q_j$ is accepting while $q_j$ is not, \PSPACE membership is established.
\end{proof}

For \PSPACE-hardness, we reduce from the DFA intersection problem.
An instance of this problem is a sequence $\mathbb{S} = (\D_1, \dots,\D_p)$ of DFAs, and the goal is to decide whether the intersection of all $\D_i$ is non-empty.
It is well-known that deciding whether some word is accepted by all $\D_i$ is \PSPACE-complete~\cite{Kozen77}.

\begin{lemma}
    Deciding if a given "FDFA" $\F = (\T, \D)$ is "almost saturated" is \PSPACE-hard.
    \label{lemma:powerclosednessPSPACEhard}
\end{lemma}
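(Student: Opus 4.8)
The plan is a polynomial-time many-one reduction from the DFA intersection problem, which is \PSPACE-complete~\cite{Kozen77}. Given DFAs $\D_1,\dots,\D_p$ over $\Sigma$, I would first pad the sequence with copies of the universal DFA for $\Sigma^*$ so that $p$ becomes prime (in particular $p\ge 2$); this affects neither whether the instance is positive nor the value of $\bigcap_j\LangDFA{\D_j}$. Then, picking a fresh symbol $\#\notin\Sigma$, let $L := \#\LangDFA{\D_1}\#\LangDFA{\D_2}\#\cdots\#\LangDFA{\D_p}$ over the alphabet $\Sigma\cup\{\#\}$. Since $\#$ occurs in none of the $\LangDFA{\D_j}$, a deterministic complete automaton of polynomial size recognizing $L$ can be built by chaining the $\D_j$: a fresh initial state that on $\#$ goes to the initial state of $\D_1$; inside the copy of $\D_j$ the original $\Sigma$-transitions, a $\#$-transition from each accepting state of $\D_j$ to the initial state of $\D_{j+1}$ for $j<p$, and transitions to a rejecting sink otherwise; and the accepting states of the copy of $\D_p$ as final states. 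Complementing this automaton yields a DFA $\A$ for the complement of $L$. The reduction outputs $\F=(\T,\A)$ where $\T$ is the trivial leading TS; then every pair is normalized, and $\F$ accepts $(\eps,x)$ exactly when $x\notin L$.

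The core step is to prove that $\F$ is almost saturated if and only if $\bigcap_j\LangDFA{\D_j}=\emptyset$. Because $\T$ is trivial, $\F$ is almost saturated precisely when, for every word $x$ and every $i>1$, $x\notin L$ implies $x^i\notin L$, equivalently $x^i\in L$ implies $x\in L$. I would then analyse when a proper power can land in $L$: if $x$ contains $k$ occurrences of $\#$, then $x^i$ contains $ik$, and every word of $L$ has exactly $p$ occurrences of $\#$ and starts with $\#$; as $p$ is prime and $i>1$, this forces $i=p$ and $k=1$, hence $x=\#u$ with $u\in\Sigma^*$, and then $x^p=(\#u)^p\in L$ iff $u\in\LangDFA{\D_j}$ for every $j$. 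Consequently, if $\bigcap_j\LangDFA{\D_j}=\emptyset$ then no $x$ and $i>1$ satisfy $x^i\in L$, so $\F$ is vacuously almost saturated; and if $u\in\bigcap_j\LangDFA{\D_j}$, then $x=\#u$ has a single $\#$ and so $x\notin L$ (since $p\ge 2$), while $x^p\in L$, so the normalized pair $(\eps,\#u)$ together with its $p$-th power witnesses that $\F$ is not almost saturated.

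Since $\bigcap_j\LangDFA{\D_j}=\emptyset$ is the complement of the \PSPACE-complete problem of DFA intersection non-emptiness and \PSPACE is closed under complement, this polynomial reduction establishes \PSPACE-hardness of deciding almost saturation; together with the \PSPACE upper bound of \Cref{lemma:powerclosednessPSPACEmembership} it completes the proof of \Cref{theorem:powerclosednessPSPACEcomplete}. The delicate point, and the step I expect to need the most care, is the membership analysis for proper powers $x^i\in L$: it is precisely the primality of $p$ that forces such a power to come from a single block $\#u$ repeated $p$ times, which is what ties the construction to the intersection $\bigcap_j\LangDFA{\D_j}$; one must also not overlook the (automatically satisfied) side condition $p\ge 2$, which is what makes the block $\#u$ genuinely fall outside $L$.
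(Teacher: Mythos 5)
Your proposal is correct and follows essentially the same route as the paper's proof: a reduction from DFA intersection (non-)emptiness with the sequence padded to prime length $p$, the progress DFA recognizing the complement of $\#\LangDFA{\D_1}\#\cdots\#\LangDFA{\D_p}$ over a trivial leading TS, and primality of $p$ forcing any violating power $x^i$ to be $(\#u)^p$ with $u$ in the intersection. Your explicit counting of $\#$-occurrences to pin down $i=p$ and $k=1$ is a slightly more detailed rendering of the same argument the paper states tersely.
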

\begin{proof}
  We give a reduction from the DFA intersection problem, so let $\mathbb{S} = (\D_1,\dotsc,\D_p)$ be a sequence of DFAs $\D_i = (\Sigma, Q_i, \delta_i, \iota_i, F_i)$.
  In the following, we need that $p$ is prime and greater than $1$.
  This is always possible since by the Bertrand-Chebyshev theorem, there must exist some $p < p' < 2p$ which is prime, and we may simply pad the sequence with universal DFAs.
  The padded sequence has a non-empty intersection if and only if the original sequence has a non-empty intersection since $\Sigma^*$ is neutral with regard to intersection.
  Therefore, we may assume $p$ to be prime and greater than $1$.
  Let $\#$ be a fresh symbol that does not occur in $\Sigma$.
  Define the extended alphabet $\Sigma_\# = \Sigma \mathbin{\dot\cup} \{\#\}$ and consider the language $L \subseteq \Sigma_\#^*$ 
  \begin{align*}
    L = \Sigma_\#^+ \setminus \big(\# \LangDFA{\D_1} \# \LangDFA{\D_2} \# \dotsi \# \LangDFA{\D_p}\big)
  \end{align*}
  We now take the "FDFA" $\F = (\T, \A)$ where $\T$ is trivial and loops on all symbols, and $\A$ is a DFA for $L$ of size at most $2 + \sum_{i \in \{1,\dots,p\}} |\D_i|$ as explained in the following.
  We construct $\A$ by using a new initial state $\iota$, a fresh accepting sink state, and chaining the $\D_i$ as follows.
  From $\iota$, we reach the initial state of $\D_1$ on $\#$ and the accepting sink on all other symbols.
  When reading $\#$ from a state $q$ of a DFA $\D_i$, we go to the new accepting sink if $q$ is not accepting or $i=p$, and to the initial state of $\D_{i+1}$ otherwise.
  A state of $\A$ is accepting if it is not a final state of $\D_p$. This DFA clearly accepts $L$.

 We show that $\F$ is "almost saturated" if and only if $\mathbb{S}$ has empty intersection.
 If $\mathbb{S}$ has non-empty intersection, pick a word $u \in \LangDFA{\D_1}\cap \dots \cap \LangDFA{\D_p}$ and consider $x = \#u$.
 As $\T$ is trivial, $(\eps, x^i)$ is "normalized" for every $i > 0$.
 Since $p > 1$, $x$ does not reach a final state of $\D_p$ in $\A$ and so $x \in \LangDFA{\A}$.
 $x^p$ on the other hand is not accepted by $\A$ as it does reach a final state of $\D_p$ in $\A$.
 Thus, $(\eps,x) \in \NormFDFA{\F}$, but $(\eps, x^p) \notin \NormFDFA{\F}$, witnessing that $\F$ is not "almost saturated".

 For the other direction, assume that $\F$ is not almost saturated, and let $x \in \Sigma_\#^+$ be a witness of this fact, i.e. $x \in \LangDFA{\A}$ but $x^i \notin \LangDFA{\A}$ for some $i > 1$.

  By construction of $\A$, $x^i$ is of the form $x^i = \#u_1\#u_2 \dots \# u_m$ with $m\cdot i=p$ and each $u_j \in \LangDFA{\D_{mk+j}}$ for all $k\ge 0$ such that $mk+j \le p$. Since $i > 1$ and $p$ is prime, we get $m=1$, and thus $x=\#u_1$ with $u_1 \in \LangDFA{\D_j}$ for all $j\in\{1,\ldots,p\}$.
  This then entails that $\mathbb{S}$ has non-empty intersection.
\end{proof}
  \section{Proofs from Section \ref{section:regularity} Regularity}
\label{section:appregularity}

\rstFDWAregular*
\begin{proof}
    The construction that goes back to \cite{CalbrixNP93} and is used in \cite{AngluinBF18} for saturated FDFAs and in \cite{LiST23} for almost saturated FDFAs builds an NBA that accepts an $\omega$-word if it is of the form $uv_1v_2v_3\cdots$ such that
    there is an accepting state $p$ of the progress automaton for $u$, such that each $v_i$ reaches $p$ and loops on $p$. This construction yields an NBA that accepts precisely those ultimately periodic words $uv^\omega$ such that $(u,v)$ is accepted by the FDFA and $v$ loops on the progress state that it reaches. By the semantics of of FDWAs, $uv^\omega$ is in the UP-language if $v^\omega$ is accepted by the progress automaton for $u$. Since this progress automaton is weak (each SCC is either completely accepting or completely rejecting), this means that $v^\omega$ eventually loops on an accepting state. So there must be an $i$ such that $v^i$ loops on the accepting progress state that it reaches. This implies that the constructed NBA contains precisely the UP-words that are in the UP-language of the FDWA, and hence the UP-language of the FDWA is UP-regular.
\end{proof}

\rststabilize*
\begin{proof}
    $\F'$ has to accepts a pair $(u,v)$ if, and only if, there are $v_1,v_2 \in \Sigma^*$ with $v=v_1v_2$ such that $(uv_1,v_2v_1)$ is accepted by $\F$.

    We first observe that $\T(u) = \T(uv)$ implies $\T(uv_1) = \T(uv_1 v_2v_1)$.

    We now discuss how to build $\N_q'$ for $q=\T(u)$.
    To build $\N_q'$, we construct an NFA that accepts $v_1v_2$ if $v_1,v_2 \in \Sigma^*$ and there is a state $p \in \N_{\T(uv_1)}$ s.t.\ there is a path in the nondeterministic transition system of $\N_{\T(uv_1)}$ from an initial state to $p$ when reading $v_2$ \emph{and} from state $p$ to a final state when reading $v_1$.
    For this, $\N_q'$ can simply guess the state $q' = \T(uv_1)$ and the state $p$ of $\N_{q'}$.
    For each such pair, we build an automaton $\N_q^{q',p}$. We describe $\N_q^{q',p}$ as an $\varepsilon$-automaton for convenience.

    $\N_q^{q',p}$ contains a product of $\T$ and two copies of $\N_{q'}$.
    The product of $\T$ is used to keep track of  which state $\T$ is in. So the states of $\N_q^{q',p}$ are of the form $(r,s,i)$ where $r$ is a state of $\T$, $s$ is a state of $\N_{q'}$, and $i \in \{1,2\}$ determines the copy. Within the two copies we have the standard transitions of the product. The initial state is $(q,p,1)$, and the only accepting state is $(q,p,2)$. Additionally, there are $\varepsilon$-transitions from states $(q',p',1)$ to $(q',p_0,2)$ if $p'$ is an accepting state of $\N_{q'}$, and $p_0$ is the initial state of $\N_{q'}$.


    $\N_q'$ is then simply the disjunction over all $\N_q^{q',p}$.

    If our $\F'$ accepts $(u,v)$, we can infer from the accepting run a partition of $v$ into $v_1,v_2$ with $v=v_1v_2$ (the $\varepsilon$-transition is used after reading $v_1$) \emph{and} an accepting run of $\F$ on $(uv_1,v_2v_1)$.
    If we have $\T(u) = \T(uv_1v_2)$, then an accepting run of $\F$ on $(uv_1,v_2v_1)$ provides an accepting run of $\F'$ on $(u,v)$.
\end{proof}

\textbf{Proof of \cref{thm:regular} for that (1) is in \PSPACE}

\begin{proof}
    The idea is to show that (1) is in \PSPACE if (3) is.
    For case (1), we can first use \cref{lem:stabalize}, and thus assume w.l.o.g.\ that $\F=(\T,\N)$ is loopshift-stable.
    We can then observe that $\T$ is deterministic and thus defines a labeling of the words interpreted by $\F$, where the letters of the word are pairs $(q,a) \in \states{\T}\times \Sigma =: \Pi$.

    We first transform an unlabeled word $(u_2,v_2)\in \Sigma^* \times \Sigma$ into a properly labeled word $(u,v) \in \Pi^*\times \Pi^+$ such that $(u_2,v_2)$ is the second projection of $(u,v)$, and for the first projection $(u_1,v_1)$ we have that $u_1\cdot \T(u_2)$ and $u_1\cdot v_1\cdot \T(u_2\cdot v_2)$ are the runs of $\T$ on $u_1$ and $u_1\cdot v_1$, respectively.

    We then build an FDFA $\F'=(\T',\N')$ that accepts a pair $(u,v)$ if, and only if, it is properly labeled and the second projection $(u_2,v_2)\in \NormFDFA{\F}$ (which implies $\T(u_2)=\T(u_2\cdot v_2)$), and where $\T'$ is like $\T$, except that it moves to a fresh sink $\bot$ when reading a letter $(q,a)$ s.t.\ $q$ is not the state of $\T'$. $\N_\bot'$ recognizes the empty language.
    The "UP-language" of $\F'$ is "UP-regular" if, and only if, the "UP-language" of $\F$ is "UP-regular".

    $\F'$ also defines the same "UP-language" as $\F''=(\T',\N'')$, where $\N_\bot''=\N_\bot'$ and all other automata are replaced by the same automaton $\N_\cup$ that recognizes the union $\bigcup_{q\in \states{\T}}\LangDFA{\N_q'}$.

    If $\T_\varepsilon$ denotes the trivial TS, then the "UP-language" of $\F''$ is "UP-regular" if, and only if, the "UP-language" of $(\T_\varepsilon,\N'')$ is "UP-regular".

\end{proof}

\rstregularEquivalent*
\begin{proof}
    If $L$ is UP-regular, then $\synL$ is of finite index by \cite{Arnold85} since $\synL$ is the syntactic congruence introduced in \cite{Arnold85} in the "prefix-independent" case. The other direction can be shown by using \cite[Theorem~5.7]{AngluinBF18} (which itself is based on \cite[Lemma~5]{CalbrixNP93}): If $\synL$ is of finite index, then we can build an FDFA $(\T,\D)$ with trivial leading transition system and progress DFA that corresponds to the (finite)  transition system defined by $\synL$, with a state being accepting if it corresponds to the class of a word $x$ with $x^\omega \in \Pow{P}$. Then $\F$ accepts a pair $(u,x)$ iff $ux^\omega \in L$, and hence it is saturated. By \cite[Theorem~5.7]{AngluinBF18} there is an NBA that accepts a language $L'$ such that $\UP{\Sigma} \cap L' = L$. Hence $L$ is "UP-regular".
\end{proof}

\rstfinfin*

Before turning to the full proof of \cref{lem:fin2fin}, we introduce a useful lemma for constructing roots in general and for constructing rejecting roots in particular.

\begin{restatable}{rstlemma}{rstdifferent}
    \label{lem:different}
    If $x,y$ are different words of equal length $\ell=|x|=|y|$ and $p>2\ell$ is a prime number, then $x \cdot y^{p-1}$ is a root.

    Moreover, if $\tau$ is a "rejecting" transition profile and both $\tpN(x)$ and $\tpN(y)$ are powers of $\tau$, then $\tpN(x\cdot y^{p-1})$ is "rejecting" and $x\cdot y^{p-1} \notin \Pow{P}$.
\end{restatable}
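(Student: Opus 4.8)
The plan is to prove the two assertions separately. For the first, that $w := x\cdot y^{p-1}$ is a root (i.e.\ primitive), I would argue by contradiction: if $w$ is not a root then $w = z^k$ for some $k \ge 2$, so $w$ has a period $r := |z| = |w|/k = p\ell/k$ with $1 \le r < p\ell$. Since $p$ is prime and $p > 2\ell \ge \ell \ge 1$, we have $p \nmid \ell$, so every divisor of $p\ell$ has the form $e$ or $p\cdot e$ with $e \mid \ell$; as $r < p\ell$ this leaves two cases: either $r \mid \ell$ (hence $r \le \ell$), or $r = p\cdot d$ with $d \mid \ell$ and $d < \ell$. In the first case $\ell$ is a multiple of the period $r$, so the length-$\ell$ prefix of $w$ equals the next length-$\ell$ block of $w$; but those blocks are exactly $x$ and the first copy of $y$ (which exists since $p-1 \ge 1$), so $x = y$, contradicting $x \ne y$.

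The remaining case $r = p\cdot d$ with $d$ a proper divisor of $\ell$ is the main obstacle. Here $\ell$ has a proper divisor, so $\ell \ge 2$, whence $p > 2\ell \ge 4$ and therefore $p \ge 5$; moreover the largest proper divisor of $\ell$ is at most $\ell/2$, so $r = p\cdot d \le p\ell/2$, and a short computation gives $2\ell + r \le p\ell$. Write $s := r \bmod \ell$ and let $\sigma^s(y)$ denote the cyclic rotation of the length-$\ell$ word $y$ that starts at position $s+1$. Because $r > 2\ell$, the positions $1{+}r,\dots,\ell{+}r$ and $\ell{+}1{+}r,\dots,2\ell{+}r$ of $w$ all lie strictly inside the $y^{p-1}$-block (using $2\ell+r \le p\ell$), and a direct index computation shows that both of these length-$\ell$ windows read off exactly $\sigma^s(y)$. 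On the other hand, the period-$r$ property of $w$ gives that the window at position $1$ equals the window at position $1{+}r$, and the window at position $\ell{+}1$ equals the window at position $\ell{+}1{+}r$. Since the window at position $1$ is $x$ and the window at position $\ell{+}1$ is the first copy of $y$, we obtain $x = \sigma^s(y) = y$, again a contradiction. Hence $w$ is a root.

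For the ``moreover'' part, the key fact is that the powers $\tau,\tau^2,\tau^3,\dots$ of a fixed transition profile $\tau$ form a commutative subsemigroup under the transition-profile product, so the property of being a power of $\tau$ is preserved by products and by taking powers. Thus if $\tpN(x)$ and $\tpN(y)$ are powers of $\tau$, then $\tpN(w)$, being built from $\tpN(x)$ and $p-1$ copies of $\tpN(y)$ by that product, is again a power of $\tau$, say $\tpN(w) = \tau^m$ with $m \ge 1$ (note $x$ and $y$ are non-empty). Since $\tau$ is rejecting, no power $\tau^j$ accepts; in particular $\tpN(w)^{j} = \tau^{jm}$ does not accept for any $j \ge 1$, so $\tpN(w)$ is rejecting. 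Finally, suppose towards a contradiction that $(x\cdot y^{p-1})^\omega \in \Pow{P}$, i.e.\ $(x\cdot y^{p-1})^\omega = v^\omega$ for some $v \in P = \LangDFA{\N}$. By the first part $w$ is a root, so $\wordroot{v} = \wordroot{w} = w$, hence $v = w^{t}$ for some $t \ge 1$; then $\tpN(v) = \tpN(w)^{t}$ accepts (because $v \in P$), contradicting that $\tpN(w)$ is rejecting. Therefore $(x\cdot y^{p-1})^\omega \notin \Pow{P}$.

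I expect the only genuinely delicate point to be the index bookkeeping in the case $r = p\cdot d$ — tracking which copy of $y$ each position falls in and checking that the shifted length-$\ell$ windows do not run past position $p\ell$; everything else reduces to divisor arithmetic and the monoid structure of transition profiles. A natural alternative for that case would be to apply the theorem of Fine and Wilf to $y^{p-1}$, which inherits both period $\ell$ and period $r = p\cdot d$ from $w$ and is long enough (given $p \ge 5$) to force the period $\gcd(\ell,r) = d$, from which one again derives $x = y$; I would choose whichever version is shorter once the arithmetic is written out in full.
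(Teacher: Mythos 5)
Your proposal is correct and follows essentially the same route as the paper: the same case split on the hypothetical shorter period $r$ of $x\cdot y^{p-1}$ (either $r\mid\ell$, or $p\mid r$ since $p$ is prime), the same window-shifting argument inside the $y^{p-1}$-block to force $x=y$ in the second case, and the same closure argument for powers of $\tau$ together with primitivity of $x\cdot y^{p-1}$ for the ``moreover'' part. Your version just carries out the index bookkeeping more explicitly than the paper does.
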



\begin{proof}
    The length of $w = x\cdot y^{p-1}$ is $\ell \cdot p$, and $p$ is not a divisor of $\ell$ (as $p>\ell$ holds).

    Assume for contradiction that $u$ is a shorter root of length $r=|u| < \ell \cdot p$ of  $w$.
    We first consider the case that $r$ is relative prime to $p$. Then $r$ divides $\ell$, and we have that $u^{\ell/r} = x$ due to the first $\ell/r$ repetitions of $u$, and $u^{\ell/r} = y$ due to the second $\ell/r$ repetitions of $u$, contradicting with the fact that $x \neq y$.

    We now look at the case that $p$ divides $r < p\cdot \ell$.
    But then the first $\ell$ letters of $w$ are $x$ \emph{and} equal to $w_{r+1},\ldots,w_{r+\ell}$, while the second $\ell$ letters of $w$ are $y$ \emph{and} equal to $w_{r+\ell+1},\ldots,w_{r+2\ell}$, while the latter is equal to $w_{r+1},\ldots,w_{r+\ell}$.

    We thus get $x=y$, leading to contradiction.
    This provides that $w$ is a root.

    For the second claim, if $\tpN(x)$ and $\tpN(y)$ are the $j$-th and $k$-th power of $\tau$, respectively, then $\tpN(w)$ is the $(j+(p-1)k)$-th power of $\tau$, and as such "rejecting". Moreover, $w^{\omega}\notin \Pow{P}$ as $\tpN(w)$ is "rejecting" and $w$ a root.
\end{proof}

With this lemma in place, we now turn to the proof of \cref{lem:fin2fin}

\begin{proof}[Proof of \cref{lem:fin2fin}.]
    We have defined $\Ter(P)$ such that $x^{\omega}\in \Pow{P}$ if $\tpN(x)$ is "accepting" \emph{or} $x^\omega = u^\omega$ holds for some root $u$ of a word in $\Ter(P)$.

    We thus have $x \synN y$ if $\tpN(x) = \tpN(y)$ holds \emph{and}, for all $z \in \Sigma^*$, $\tpN(xz)$ is "accepting" \emph{or} there is a root $u\in \Ter(P)$ such that $(xz)^\omega = u^\omega$ iff there is a root $v \in \Ter(P)$ such that $(yz)^\omega = v^\omega$.

    If the set of roots in $\Ter(P)$ is finite, then this is a regular property; consequently, $\synN$ (and thus $\synL$) have finite index.

    We now show that, if $\Ter(P)$ contains infinitely many roots, then, for any natural number $n$, the index of $\synN$ is at least $n$---and thus that $\synN$ has infinite index.

    For such an $n$, we pick $n$ different roots $u_1,\ldots,u_n \in \Ter(P)$ that define the same transition profile $\tau'$.

    Let $v_1,\ldots,v_n$ be powers of $u_1,\ldots,u_n$ such that $|v_1|=\ldots=|v_n|$; e.g., this length could be the smallest common multiplier of the lengths of $u_1,\ldots,u_n$.

    Let $i$ be a rejecting power of $\tau'$ and $\tau$ the $i$-th power of $\tau'$ and thus "rejecting", Let $\ell = i \cdot |v_1|$ and $p>2\ell$ a prime number.
    We now have that, for all $j,k\leq n$, ${v_j}^{i}{v_k}^{i(p-1)} = (v_j^i)\cdot (v_k^i)^{p-1}$ is in $\Pow{P}$ if $j=k$ (because its root $u_j$ has an "accepting" transition profile), but it is "rejecting" for $j\neq k$ by Lemma \ref{lem:different}.
\end{proof}

\rstInfiniteGoodness*

We first provide an alternative definition for a "good witness" and show that it is equivalent.

\begin{enumerate}
    \item there are infinitely many words $x$ that visit $\tau_g$ first, or
    \item[2a.] there are two different
          words $x\neq y$ that visit $\tau_g$ first and there is a word $u$ that recurs on $\tau_g$, or
    \item[2b.] there is a word $x$ that visits $\tau_g$ first and there are two different
          words $u \neq v$ that recur on $\tau_g$, or
    \item[2c.] there is exactly one a word $x$ that visits $\tau_g$ first and exactly one word $u$ that recurs on $\tau_g$, and they have different roots.
\end{enumerate}

\begin{lemma}\label{lem:sameGoodness}
    The alternative definition is equivalent to the definition in the paper.
\end{lemma}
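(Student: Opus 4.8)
The plan is to reduce the claimed equivalence to one combinatorial fact about roots and then to finish by a routine case analysis on the sizes of two sets. Since both formulations of $"good witness"$ share the requirement that $\tau_g$ be $"terminal@@tp"$, I would fix such a $\tau_g$ and show that the disjunction (1)$\vee$(2) of the paper's definition is equivalent to (1)$\vee$(2a)$\vee$(2b)$\vee$(2c). Write $X$ for the set of words that visit $\tau_g$ first and, when $X\neq\emptyset$, write $U$ for the set of words that recur on $\tau_g$. The first step is to observe that $U$ does not depend on the word $x\in X$ used in the definition of recurrence: transition profiles compose, so for any $u$ and any prefix $v$ of $u$ the values $\tpN(xu)$ and $\tpN(xv)$ depend only on $\tpN(x)=\tau_g$ and on $u$ (resp.\ $v$); hence the conditions defining $U$ are the same for every $x\in X$.

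The key lemma I would establish next is that distinct words in $X$ have distinct roots, and likewise distinct words in $U$ have distinct roots. For $X$: if $x\neq y$ both lie in $X$ with $x^\omega=y^\omega$, then by the Theorem of Fine and Wilf (as used for the notion of root in the preliminaries) $x=r^a$ and $y=r^b$ for their common root $r$ and positive integers $a\neq b$, so the shorter of $x,y$ is a proper prefix of the longer; it is not $\varepsilon$, since $\tpN(\varepsilon)$ is the identity profile while a $"terminal@@tp"$ profile is never the identity; so we have a true prefix of the longer word with transition profile $\tau_g$, contradicting that the longer one visits $\tau_g$ first. For $U$ the argument is similar but needs no appeal to terminality: both words are non-empty by definition of recurrence, so the shorter is a proper non-empty prefix of the longer, which, prepended with any $x\in X$, contradicts that the longer word recurs on $\tau_g$.

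With the key lemma in hand, ``alternative $\Rightarrow$ paper'' is quick: (1) gives (1); in case (2a) the two distinct visiting words have different roots, so at least one of them has a root different from that of the given recurring word $u$ (if both equalled it, the two visiting words would share a root, contradicting the lemma), which yields (2); case (2b) is symmetric via the two distinct recurring words; and (2c) is a special case of (2). For ``paper $\Rightarrow$ alternative'', suppose (2) holds, witnessed by $x\in X$ and $u\in U$ with different roots; in particular $X\neq\emptyset$ and $U\neq\emptyset$. If $X$ is infinite we are in (1). Otherwise $X$ is finite and non-empty: if $|X|\geq2$ then, together with $u$, we are in (2a); if $|X|=1$ then $X=\{x\}$, and if $|U|\geq2$ then, with $x$, we are in (2b), while if $|U|=1$ then $U=\{u\}$ and the hypothesis that $x$ and $u$ have different roots is precisely (2c). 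The cases are exhaustive, so the two definitions coincide.

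I expect the only real obstacle to be making the root argument fully rigorous: it requires invoking the Fine--Wilf characterization of roots, correct treatment of the empty-word/identity-profile corner case via terminality of $\tau_g$, and the observation that recurrence is independent of the chosen first-visit word. Everything else is bookkeeping over the case split on $|X|$ and $|U|$.
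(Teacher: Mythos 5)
Your proof is correct and follows essentially the same route as the paper's: both reduce the equivalence to showing that case (2) of the original definition matches the disjunction (2a)$\vee$(2b)$\vee$(2c), both rely on the key observation that distinct words visiting $\tau_g$ first (respectively, distinct recurring words) must have distinct roots, and both finish with the case split on the number of visiting and recurring words. You simply supply more detail than the paper does (the Fine--Wilf justification, the exclusion of $\varepsilon$ via terminality, and the independence of the set of recurring words from the chosen first-visiting word), all of which is sound.
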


\begin{proof}
    The first case is the same, so it is enough to show that the second case from the definition, ``2. there is a word $x$ that visits $\tau_g$ first and a word $u$ that recurs on $\tau_g$ that have different roots'' is equivalent to ``(2a) or (2b) or (2c)''.

    ``$\Rightarrow$'' is trivial.

    ``$\Leftarrow$:'' (2c) implies (2), as (2c) is a special case of (2).
    (2b) implies (2) as $u,v$ are different and recur, so they have different roots. They therefore cannot both have the same root as $x$.
    Similarly, (2a) implies (2) as $x,y$ are different and visit $\tau_g$ first, so they have different roots. They therefore cannot both have the same root as $u$.
\end{proof}

We use the new definition in the long version of the proof.

We split the proof into the following two lemmas.

We first show that, if $\Ter(P)$ contains infinitely many roots, then every NFA $\N$ with $P = \LangDFA{\N}$ has a good witness $\tau_g$.

\begin{lemma}
    If $\Ter(P)$ has finitely many roots, then there is a good witnesses $\tau_g$.
\end{lemma}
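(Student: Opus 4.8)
The plan is to prove the direction of \cref{lem:InfiniteGoodness} stating that infinitely many roots in $\Ter(P)$ imply the existence of a "good witness", and I would do so by contraposition: I would show that if \emph{no} transition profile of $\N$ is a "good witness", then $\Ter(P)$ has only finitely many roots. Since $\N$ is an NFA it has at most $2^{|Q|^2}$ transition profiles, and every element of $\Ter(P)$, hence every root in $\Ter(P)$, has a terminal profile. It would therefore suffice to bound, for each terminal profile $\tau_g$ that fails to be a "good witness", the number of roots $r$ with $\tpN(r) = \tau_g$, and then sum over the finitely many terminal profiles. Throughout I would use the original formulation of "good witness" rather than the refined one (which is legitimate by \cref{lem:sameGoodness}), since its single clause~(2) negates most cleanly.

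The central step would be a decomposition of each root $r$ with $\tpN(r) = \tau_g$ along its run through the transition system that tracks transition profiles. Taking $x$ to be the shortest prefix of $r$ with profile $\tau_g$, the word $x$ visits $\tau_g$ first, and cutting $r$ at each later position whose prefix has profile $\tau_g$ writes $r = x\,u_1 \cdots u_m$ with each $u_j$ recurring on $\tau_g$; composition of profiles makes the defining conditions for recurrence depend only on $\tpN(u_j)$ and $\tau_g$, so they hold by construction. Assuming $\tau_g$ is not a "good witness", the failure of clause~(1) gives that only finitely many words visit $\tau_g$ first, and I would then split on whether any word recurs on $\tau_g$. If none does, then $m=0$ for every such $r$, so $r=x$ lies in the finite set of visit-first words and finiteness follows immediately.

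In the remaining case some word recurs, and then the failure of clause~(2) says that \emph{every} visit-first word and \emph{every} recurring word have equal roots; chaining these equalities through one fixed pair would collapse all of them to a single common root $\rho$. Since $\wordroot{w} = \rho$ means that $w$ is literally a power of $\rho$, the decomposition $r = x\,u_1 \cdots u_m$ exhibits $r$ as a concatenation of powers of $\rho$, whence $r = \rho^{k}$ for some $k \ge 1$; as $r$ is a root it is primitive, forcing $r = \rho$. Thus at most one root would have profile $\tau_g$. Combining the two cases, every non-witness terminal profile accounts for finitely many roots, and summing over the finitely many terminal profiles would yield that $\Ter(P)$ has finitely many roots, establishing the required implication.

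The step I expect to be the main obstacle is the case in which recurring words exist: one must check that the ``equal roots'' consequence of the negated clause~(2) propagates \emph{simultaneously} to all visit-first and all recurring words, not merely to a single pair, and that the factorization $r = x\,u_1\cdots u_m$ really consists of one visit-first factor followed by recurring factors, so that clause~(2) applies to each of its factors. Once this rigidity is secured, the pigeonhole over the finitely many profiles and the primitivity argument collapsing $\rho^{k}$ to $\rho$ should be routine.
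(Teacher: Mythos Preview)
Your proof is correct and rests on the same core idea as the paper: factor each word with profile $\tau_g$ as a visit-first prefix followed by recurring blocks, then use the failure of the good-witness clauses to force all factors to be powers of a single root. The paper argues directly (pick a profile carrying infinitely many roots and show it is good) and routes through the alternative formulation (cases (2a)--(2c) of \cref{lem:sameGoodness}), reducing to the situation ``exactly one visit-first word $x$, exactly one recurring word $u$'' before observing that infinitely many roots among $\{xu^j\}$ force $\wordroot{x}\neq\wordroot{u}$. Your contrapositive via the original clause~(2) is slightly more streamlined: you never need the case split (2a)/(2b)/(2c), because the negation of clause~(2) already gives $\wordroot{x}=\wordroot{u}$ for \emph{all} visit-first $x$ and recurring $u$ simultaneously, which you chain to a common $\rho$ and collapse any root to $\rho$ by primitivity. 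The gain is that you bypass \cref{lem:sameGoodness} entirely; the paper's route, by contrast, isolates the single-$x$/single-$u$ scenario explicitly, which connects more transparently to the later \PSPACE\ algorithm that tests those cases one by one.
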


\begin{proof}
    Assume that $\Ter(P)$ has infinitely many roots.
    Then, as in the proof of Lemma \ref{lem:fin2fin}, we choose $\tau_g$ as a transition profile that has infinitely many roots $u$ in $\Ter(P)$ such that $\tpN(u) = \tau_g$. We show that $\tau_g$ is a good witness.
    Since there are infinitely many roots $u$ in $\Ter(P)$ such that $\tpN(u) = \tau_g$, we can conclude that there are either infinitely many words that visit $\tau_g$ first, or there is at least one word that visits $\tau_g$ first and infinitely many words loop on $\tau_g$.

    So the only possibility for $\tau_g$ not to be a good witness, would be that there is exactly one word $x$ that visits $\tau_g$ first and exactly one word $u$ that recurs on $\tau_g$. Then all words $v$ with $\tpN(v) = \tau_g$ are of the form $xu^j$ for $j \ge 0$. Since there are infinitely many roots in $\tau_g$, we get that infinitely many of the $xu^j$ are roots. Hence, $x$ and $u$ cannot have the same root. So we are in case (2c), and $\tau_g$ is indeed a good witness.
\end{proof}

We close by showing that, if there is a good witness, then $\Ter(P)$ has infinitely many roots.
\begin{lemma}
    If there is a good witness $\tau_g$, then
    $\Ter{P}$ has infinitely many roots.
\end{lemma}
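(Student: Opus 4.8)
The plan is to argue by cases on which clause of the definition of a good witness is satisfied, and in each case to produce an explicit infinite family of terminal words whose roots are pairwise distinct; since the root of a terminal word is again terminal (as noted in the main text), such a family gives infinitely many roots in $\Ter(P)$. Throughout I would use two elementary facts. First, transition profiles compose monoidally, so $\tpN(xy) = \tpN(x)\cdot\tpN(y)$ and powers $\tau^j$ make sense independently of any word. Second, $\tau_g$ is terminal and hence accepting, so $\tau_g \neq \id$ (the identity profile is its own power and so is never terminal); in particular no word visiting $\tau_g$ first is empty.

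For clause~(1) --- infinitely many words visit $\tau_g$ first --- I would note that any such word $x$ has $\tpN(x)=\tau_g$ terminal, hence $x$ is a terminal word and $\wordroot{x}\in\Ter(P)$. The map $x\mapsto\wordroot{x}$ on these words is injective: if $x_1\neq x_2$ both visit $\tau_g$ first and $\wordroot{x_1}=\wordroot{x_2}=:r$, then $x_1=r^{a_1}$ and $x_2=r^{a_2}$ with $1\le a_1<a_2$ (after relabelling), so $x_1$ is a true nonempty prefix of $x_2$ with $\tpN(x_1)=\tau_g$, contradicting the minimality in ``visits $\tau_g$ first''. Hence $\Ter(P)$ has infinitely many roots.

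For clause~(2) --- there is a word $x$ visiting $\tau_g$ first and a word $u$ recurring on $\tau_g$ with $\wordroot{x}\neq\wordroot{u}$ --- I would first observe that $\tpN(x)=\tpN(xu)=\tau_g$ forces $\tau_g\cdot\tpN(u)=\tau_g$, hence $\tpN(xu^j)=\tau_g\cdot\tpN(u)^j=\tau_g$ for all $j\ge 0$. So each $xu^j$ is a terminal word and $\wordroot{xu^j}\in\Ter(P)$; it remains to see that $\{\wordroot{xu^j}:j\ge0\}$ is infinite. As $u\neq\eps$, the lengths $|xu^j|$ are unbounded, so if this set were finite there would be $j_1<j_2$ with $\wordroot{xu^{j_1}}=\wordroot{xu^{j_2}}=:r$, say $xu^{j_1}=r^{a}$ and $xu^{j_2}=r^{b}$ with $a<b$. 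Cancelling $r^a$ in $r^a u^{\,j_2-j_1}=r^b$ yields $u^{\,j_2-j_1}=r^{\,b-a}$, so $\wordroot{u}=r$ and $u=r^c$ for some $c\ge1$; substituting back, $x\,r^{cj_1}=r^a$ forces $x=r^{\,a-cj_1}$, whence $\wordroot{x}=r=\wordroot{u}$, a contradiction. (One could instead invoke the Fine--Wilf packaging already available as \cref{lem:different}.)

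Combining the two cases completes the lemma (and, together with the preceding lemma, the proof of \cref{lem:InfiniteGoodness}). By \cref{lem:sameGoodness} it would be equivalent to run the case analysis over clauses (1), (2a), (2b), (2c) instead; clause~(2) above already subsumes (2a)--(2c), since two distinct words that both visit $\tau_g$ first, or both recur on $\tau_g$, necessarily have distinct roots, so at least one of them differs in root from the relevant partner. The delicate point I expect is the word-combinatorics in clause~(2): one must check carefully that $xu^{j_1}$ and $xu^{j_2}$ being powers of the \emph{same} primitive word really forces both $x$ and $u$ to be powers of it, and that the exponents come out nonnegative --- everything else is routine bookkeeping.
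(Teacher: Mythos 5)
Your proof is correct. Case (1) coincides with the paper's argument: two distinct words that both visit $\tau_g$ first cannot be prefixes of one another, hence have distinct roots, and roots of terminal words are terminal. For case (2) you take a genuinely different route. The paper builds a sparse sequence $x_1 = x$, $x_{j+1} = x_j u^{\ell_j}$, choosing each exponent $\ell_j$ large enough that $u^{\ell_j}$ covers the first position $k_j$ at which $x_j^\omega$ and $u^\omega$ disagree; this forces $x_j^\omega \neq x_{j'}^\omega$ for $j < j'$ by inspecting position $|x_j| + k_j$, so the roots are pairwise distinct without any appeal to primitivity. You instead take the full family $xu^j$, $j \geq 0$, and show by a Lyndon--Schützenberger/Fine--Wilf style cancellation that no two members can share a root: from $xu^{j_1} = r^a$ and $xu^{j_2} = r^b$ you cancel to $u^{j_2 - j_1} = r^{b-a}$, conclude $u \in r^+$ from primitivity of $r$, and then back-substitute to get $x \in r^+$ (using $x \neq \eps$, which you correctly justify from $\tau_g \neq \id$), contradicting $\wordroot{x} \neq \wordroot{u}$. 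Both arguments are sound; the paper's avoids word-combinatorial lemmas entirely at the cost of a more intricate inductive construction, while yours yields the stronger and cleaner statement that \emph{all} of the $xu^j$ have pairwise distinct roots, at the cost of the primitivity argument (which, as you note, is essentially the content already packaged in \cref{lem:different}). Your closing remark that clause (2) subsumes (2a)--(2c) matches \cref{lem:sameGoodness}.
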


\begin{proof}
    Let $\tau_g$ be good witness.

    We first consider case (1), where we have infinitely many words that visit $\tau_g$ first.
    As they visit $\tau_g$ first, neither is a prefix of the other, and they must therefore have pairwise different roots.
    As a root of a word with a terminal transition profile has a terminal transition profile, this provides us with infinitely many roots in $\Ter{P}$.

    We continue with case (2) that $x$ visits $\tau_g$ first, $u$ recurs on $\tau_g$, and they do not have a joint root, so that $x^\omega \neq u^\omega$ holds.

    We now set $x_1=x$ and inductively construct:
    \begin{itemize}
        \item $k_j$ as the first position where ${x_j}^\omega$ differs from $u^\omega$,
        \item $\ell_k = \big\lceil k_j/|u| \big\rceil$ as a number chosen, such that $\big|u^{\ell_j}\big| \geq k_j$, and
        \item $x_{j+1} = x_j\cdot u^{\ell_j}$.
    \end{itemize}
    To see that ${x_j}^\omega$ differs from $u^\omega$, we note that, for all $j\geq 1$, $x_j = x \cdot u^i$ holds for some $i\geq 0$, so that ${x_j}^\omega=u^\omega$ would imply $x^\omega = u^\omega$ (contradiction).

    Note that $x_{j+1}$ is defined such that ${x_j}^\omega$ differs from ${x_{j+1}}^\omega$ at position $|x_j|+k_j$, and thus within the first iteration of $x_{j+1}$.

    We now have that all $x_i$ have the good transition profile $\tau_g$, and that they have pairwise different roots.
    Using again that the root of a word with a terminal transition profile has a terminal transition profile, this provides us with infinitely many roots in $\Ter{P}$.
\end{proof}

\rstIsItGood*

\begin{proof}
    Let $\N$ have $n$ states.

    We first, check whether $\tau_g$ is a terminal.
    To this end, we first need to be able check whether or not a transition profile $\tau$ is "accepting". But this is the case if, and only if, $\tau^i$ maps an initial state to a final state for some $i$ (which is indeed the case if, and only if, it holds for some $i \leq n$). Testing this can be done in time polynomial in $n$ (and in NL).

    Being able to check whether or not a transition profile is "accepting", we can guess a good transition profile $\tau_g$.

    For $\tau_g$, we then build the NFA $\N'$ with the same states, initial states, and final states as $\N$ over a one letter alphabet $\Sigma'=\{u\}$, where $\delta_{\N'}(q,u) \mapsto \tau_g(q)$.
    We now build the transition system $\T_{\N'}$ of size exponential in $n$, whose state space is the transition profiles of $\N'$, and where $\T_{\N'}(x)$ is the transition profile that maps each letter $q$ to $\delta(q,x)$.

    After reading $u^i$, $\T_{\N'}$ is in the state that represents the $i$-th power of $\tau_g$.

    Thus, we can check in NL in the size of $\T_{\N'}$, and thus is SPACE in polynomial in $n$, whether $\T_{\N'}$ can reach a "rejecting" transition profile.

    For the properties of the four cases (1) through (2c), we
    trace a transition system $\T_{\N}$ of size exponential in $\N$, whose state space is again the transition profiles of $\N$, but this time defined based on $\N$: $\T_{\N}(x)$ is the transition profile that maps each letter $q$ to $\delta(q,x)$.

    While $\T_{\N}$ is exponential in the size of $\N$, its states (the transition profiles of $\N$) and transitions are succinctly represented by $\N$.

    Checking cases (1) through (2c) can again be done by variations of reachability problems in $\T_{\N}$, and all of them are in NL.

    Recall that we have guessed $\tau_g$ before checking that it is terminating.

    It is straight forward to check in space logarithmic in $\T_{\N}$ (and thus polynomial in $n$) whether $\tau_g$ is reachable first by some word.

    Likewise, we can check if it is reachable first by two different words. For this, we simply traverse two copies concurrently for two potentially different words. We reject immediately if they reach $\tau_g$ before they have seen a different letter and otherwise accept when $\tau_g$ has been reached by both copies (not necessarily at the same time).

    Finally, we can check if $\tau_g$ is reachable in $\T_{\N}$ while reaching some transition profile $\tau$ (which we can guess) at least twice before reaching $\tau_g$.
    If this is the case, then there are infinitely many words; if not, then there cannot be a word that reach $\tau_g$ first, which is longer than the number of transition profiles (as all longer words must visit some transition profile twice), which means that there are only finitely many.

    To check if at least one (resp.\ two) words recur, we can use the same procedure as for checking if at least one (resp.\ two) words visit $\tau_g$ first, but changing the initial state from the identity to $\tau_g$.

    The results of these tests will tell us if we are in case (1), (2a), or (2b). If neither is the case, they will also tell us if there is exactly one word $x$ that visits $\tau_g$ first and exactly one word $u$ that recurs on $\tau_g$.

    If $x$ and $u$ share a root $r$, then we can guess in \PSPACE a transition profile $\tau$ such that (a) $\tau_g$ is equal to two different powers of $\tau$, and (b) that there is a word $w$ with $\T_{\N}(w) = \tau$.

    (b) is a standard reachability problem for $\T_{\N}$ (and thus again in space polynomial in $n$), while (a) can be done in a transition system $\T_{\N''}$ over the one letter alphabet, where $\N''$ is built from $\tau$ like $\N'$ was built from $\tau'$.
    For $\T_{\N''}$ we then check if $\tau_g$ is visited twice.

    If both is the case, then each such word $w$ is root of both $u$ and $x$, and vice versa, if $u$ and $x$ have a joint root $r$, then $\tau=\T_{\N}(r)$ satisfies both (a) and (b).
    Thus, we can also check if we are in case (2c).
\end{proof}%
  \section{Details on Succinctness Results from~\cref{section:comparison}}
\label{section:appendixcomparison}
\lowerboundCombinedToSatFDFA*
\begin{proof}
  We fix the alphabets $\Sigma = \{\sigma, \tau, \gamma\}$ and $\Sigma_\# = \Sigma \cup \{\#\}$.
  For a fixed $n$, each letter $a \in \Sigma$ corresponds to a function $f_a : \{1,\dotsc,n\} \to \{1,\dotsc,n\}$.
  Specifically, we set
  \begin{align*}
    f_\sigma(i) =
    \begin{cases}
      i + 1 & \text{if } i < n \\
      1     & \text{otherwise}
    \end{cases}
    \quad
    f_\tau(i) =
    \begin{cases}
      3 - i & \text{if } i \in \{1,2\} \\
      i     & \text{otherwise}
    \end{cases}
    \quad
    f_\gamma(i) =
    \begin{cases}
      1 & \text{if } i \in \{1,2\} \\
      i & \text{otherwise}
    \end{cases}
  \end{align*}
  which means $\sigma$ corresponds to a rotation of the numbers, $\tau$ transposes the first and second numbers, and $\gamma$ maps the first two numbers to the same number.
  We extend the definition of $f$ to words through function composition, i.e.~$f_{a_1 \dotsc a_k} = f_{a_1} \circ \dotsc \circ f_{a_k}$.
  
  The language family we now choose ensures that the progress DFA of any saturated FDFA for $L_n$ must contain a state for each mapping $f: \{1,\dotsc,n\} \to \{1,\dotsc,n\}$.
  At the same time, we would like $L_n$ to be recognized by a saturated FDWA with $(1, n+2)$ states.
  For this, we use
  \begin{align*}
    L_n = \{w \in \Sigma_\#^\omega \mid \text{$w$ contains infinitely many infixes $\#u\#$ such that $f_u(1) = 1$}\}
  \end{align*}
  consisting of all words where infinitely often between two $\#$ a function is encoded that sends $1$ to $1$.
  Clearly, $L_n$ is a prefix-independent language, so the leading TS has a state with self-loops over each letter in $\Sigma_{\#}$.
  We first show that it is possible to recognize $L_n$ through a "saturated" "FDWA" with $(1, n+2)$ states.
  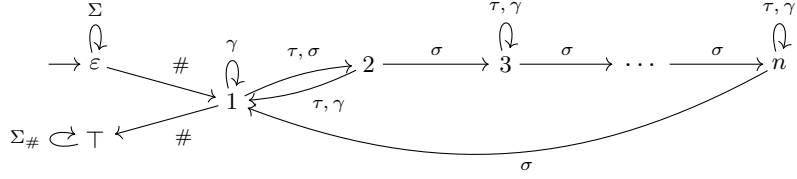
\begin{figure}[ht]
    \centering
    \begin{tikzpicture}[automaton, node distance=18mm]
      \node[state, initial] (e) {$\eps$};
      \node[state, right=of e, yshift=-5mm] (1) {$1$};
      \node[state, right=of 1, yshift=5mm] (2) {$2$};
      \node[state, right=of 2] (3) {$3$};
      \node[right=of 3] (dots) {$\dotsc$};
      \node[state, right=of dots] (n) {$n$};
      \node[state, left=of 1, yshift=-5mm,accepting] (accsink) {$\top$};

      \draw[->]
      (e)
      edge[loop above] node {$\Sigma$} (e)
      edge node {$\#$} (1)
      (1)
      edge[loop above] node {$\gamma$} (1)
      edge[bend left=10] node[pos=0.8] {$\tau,\sigma$} (2)
      edge node {$\#$} (accsink)
      (2)
      edge[bend left=10] node {$\tau, \gamma$} (1)
      edge node {$\sigma$} (3)
      (3)
      edge[loop above] node {$\tau, \gamma$} (3)
      edge node {$\sigma$} (dots)
      (dots)
      edge node {$\sigma$} (n)
      (n)
      edge[loop above] node {$\tau, \gamma$} (n)
      edge[bend left=25] node {$\sigma$} (1)
      (accsink)
      edge[loop left] node {$\Sigma_\#$} (accsink);
    \end{tikzpicture}
    \caption{
      The progress automaton of FDWA $\P_n$ constructed in the proof
      of~\cref{lemma:lowerbound:combinedToSatFDFA}.
      It accepts the language $L_n$ with $n+2$ states by keeping
      track of the number that $1$ is mapped to.
      The state $\top$ is an accepting sink.
    }
    \label{figure:lowerboundfwpm}
  \end{figure}
  Intuitively, the progress automaton of FDWA $\W_n$ depicted in~\cref{figure:lowerboundfwpm} recognizes $L_n$ by keeping track the number that $1$ is mapped to.

  The progress DFA of a saturated FDFA for $L_n$ on the other hand has to memorize the specific mapping encoded by the sequence of $\Sigma$-symbols since the last $\#$.
  To see this, assume that $\F = (\T, \D)$ is a "saturated FDFA" for $L_n$ with strictly less than $n^n$ states.
  This means that there exist two functions $f, g : \{1,\dotsc,n\} \to \{1,\dotsc,n\}$ which are not equal, but some words $x_f, x_g \in \Sigma^*$ encoding them reach the same state in $\D$.
  Let $k \leq n$ be the least integer satisfying $f(k) \neq g(k)$ which is guaranteed to exist since $f \neq g$.

  We set $i = f(k)$, $j = g(k)$ and consider the words $w_f = x_f \sigma^{n+1-i}\# \sigma^{k-1}$ and $w_g = x_g \sigma^{n+1-i}\# \sigma^{k-1}$.
  Since $f_{\sigma^{k-1} x_f \sigma^{n+1-i}}(1) = 1$, $\D$ must accept $\sigma^{k-1} x_f \sigma^{n+1-i}\#$.
  Because $\F$ is "saturated" and thus by~\cref{app:lemma:fdfacharacterisation} "loopshift-stable", we know that $\D$ must also accept $w_f$.
  But as $x_f$ and $x_g$ reach the same state, it must be that $\D$ also accepts $w_g$.
  This contradicts the assumption that $\D$ is saturated because $w_g$ is a rotation of $\sigma^{k-1} x_g \sigma^{n+1-i}\#$ and the mapping coded by $\sigma^{k-1} x_g  \sigma^{n+1-i}$ sends $1$ to some number other than $1$ since $x_g$ codes a function with $g(k) = j$ and $i \neq j$.

  Overall, we see that for each function $f : \{1,\dotsc,n\} \to \{1,\dotsc,n\}$, the progress DFA $\D_\eps$ of a "saturated FDFA" for $L_n$ must contain at least one state.
  As there are $n^n$ such functions, the lower bound on the size of a saturated FDFA for $L_n$ follows.
\end{proof}

We now provide a proof for~\cref{lemma:incomparableSatFDWAalsatFDFA}, which establishes that "almost saturated FDFAs" and "saturated FDWAs" are, in general, incomparable in size.
For the purpose of readability, we include the full statement below.

\incomparableSatFDWAalsatFDFA*

In the following, we show two auxiliary statements that are then combined to establish the lemma.

\begin{lemma}
    Let $n > 0$ and consider the language $L_n$ of words over $\Sigma_n = 2^{\{1,\dots,2n\}} \setminus \{\emptyset\}$ such that a word $w$ belongs to $L_n$ if some $i \in \{1,\dots,2n\}$ appears only finitely often in the letters of $w$.
    $L_n$ is recognized by a "saturated FDWA" of "size@@family" $(1,2n+1)$, but every "almost saturated FDFA" for $L_n$ with a trivial leading transition system has a progress automaton with at least $2^{\frac{n}{2}}$ states.
    \label{helper:incomparableSatFDWAalsatFDFA:first}
\end{lemma}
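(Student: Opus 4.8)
The plan is to establish the two assertions separately, beginning with the upper bound. First I would observe that a number occurs infinitely often in $v^\omega$ exactly when it occurs in some letter of $v$, so $v^\omega \in L_n$ iff the union of the letters of $v$ is a proper subset of $\{1,\dots,2n\}$; in particular $L_n$ is prefix-independent, so I may use the trivial leading TS and need only one progress automaton. I would take the deterministic automaton $\B_n$ with states $q_1,\dots,q_{2n}$ (all accepting) and a rejecting sink $r$, with $q_1$ initial, where from $q_i$ (for $i<2n$) a letter $X$ leads to $q_{i+1}$ if $i\in X$ and back to $q_i$ otherwise, and from $q_{2n}$ a letter $X$ leads to $r$ if $2n\in X$ and stays otherwise; the only cycles are the self-loops, so $\B_n$ is weak. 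On input $v^\omega$ the run either advances cyclically through $q_1,\dots,q_{2n}$ and then falls into $r$ (precisely when every number occurs in some letter of $v$), or it gets stuck in $q_j$ for the smallest missing number $j$; hence the FDWA $\W_n$ with progress automaton $\B_n$ accepts $(u,v)$ iff $v^\omega\in L_n$, and by prefix-independence iff $uv^\omega\in L_n$. Saturation is then immediate, since any two normalized representations of the same ultimately periodic word are treated alike; this gives a saturated FDWA of size $(1,2n+1)$ recognizing $L_n$.

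For the lower bound, fix an almost saturated FDFA $\F=(\T,\D)$ for $L_n$ with trivial leading TS, write $\bar X$ for the complement of $X\subseteq\{1,\dots,2n\}$, and set $P=\LangDFA{\D}$, so $\F$ accepts $(u,v)$ iff $v\in P$ and the UP-language of $\F$ is $\Sigma_n^*\cdot\Pow{P}$, which, since $\F$ recognizes $L_n$, equals the set of ultimately periodic words in $L_n$. I would extract two properties of $P$. \textbf{(A)} If the letters of $w$ cover $\{1,\dots,2n\}$ then $w\notin P$, since otherwise $w^\omega$ would lie in the UP-language of $\F$ although $w^\omega\notin L_n$. \textbf{(B)} If the letters of $w$ do not cover $\{1,\dots,2n\}$, then $w^\omega\in L_n=\Sigma_n^*\cdot\Pow{P}$, so $w^\omega=u'(v')^\omega$ with $v'\in P$; deleting the prefix $u'$ from $(\wordroot{w})^\omega$ shows that $v'$ is a power of some cyclic rotation of $\wordroot{w}$. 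Applied to a single letter $X$ with $|X|=n$, (B) yields $X^m\in P$ for some $m\ge1$; applied to a word $X\bar Y$ with $X\neq\bar Y$ (so $X\bar Y$ is primitive), it yields $(X\bar Y)^m\in P$ or $(\bar Y X)^m\in P$ for some $m\ge1$. Finally, almost saturation says $P$ is closed under taking powers of its words.

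The heart of the argument is to show that $X\mapsto(\D(X),\D(\bar X))$ is injective on the $\binom{2n}{n}\ge 2^n$ subsets $X$ of size $n$ (read as one-letter words), which forces $\D$ to have at least $\sqrt{\binom{2n}{n}}\ge 2^{n/2}$ states. As a preliminary I would show $\D(X)\neq\D(\bar X)$: by (B) and power-closure pick $k\ge 2$ with $\bar X^k\in P$; with context $c=\bar X^{k-1}$ we get $\bar X c\in P$, while $Xc=X\bar X^{k-1}$ has letters covering $\{1,\dots,2n\}$, so $Xc\notin P$ by (A). Now assume $X\neq Y$, $|X|=|Y|=n$, $\D(X)=\D(Y)$ and $\D(\bar X)=\D(\bar Y)$; the case $Y=\bar X$ is ruled out by the preliminary, so $X\neq\bar Y$ and $X\cup\bar Y\neq\{1,\dots,2n\}$ (else $Y=X$). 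By (B) applied to $X\bar Y$, either $(X\bar Y)^m\in P$ --- then, writing $(X\bar Y)^m=Xc$ with $c=(\bar Y X)^{m-1}\bar Y$, the equality $\D(X)=\D(Y)$ gives $Yc\in P$, contradicting (A) because $Yc$ contains both $Y$ and $\bar Y$ --- or $(\bar Y X)^m\in P$, handled symmetrically by writing it as $\bar Y c$ with $c=(X\bar Y)^{m-1}X$ and using $\D(\bar X)=\D(\bar Y)$ together with (A) applied to $\bar X c$. I expect fact (B) to be the main obstacle: one must pass correctly from the existential representation $u'(v')^\omega$ of a word of $L_n$ to powers of cyclic \emph{rotations} of a root (almost saturation gives closure under powers but not under roots, so one cannot simply assume the word itself lies in $P$), and one must check the small cases --- in particular $m=1$, and producing $k\ge 2$ in the preliminary --- so that the contexts constructed in the injectivity argument genuinely produce the covering situation of (A).
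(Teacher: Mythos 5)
Your proof is correct and follows essentially the same route as the paper: the same chain-shaped weak progress automaton for the upper bound, and for the lower bound the same pigeonhole argument that $X\mapsto(\D(X),\D(\bar X))$ must be injective on the size-$n$ subsets because covering words are rejected while $(X\bar Y)^\omega\in L_n$ forces some rotation-power of $X\bar Y$ to be accepted. Your explicit preliminary $\D(X)\neq\D(\bar X)$ and the careful treatment of the $m=1$ case via power-closure are welcome additions that the paper's proof leaves implicit.
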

\begin{proof}
  We begin by showing that $L_n$ can be recognized by a "saturated FDWA" of size $(1,2n+1)$.
  Clearly, $L_n$ is prefix-independent, so we can just use a trivial TS $\T$ in the FDWA. 
  Now we describe the progress automaton $\D$.
  For $1 \leq i \leq k$, we define
  \begin{align*}
    G_i & = \{A \subseteq \{1,\dotsc,2n\} \mid i \notin A\} \qquad B_i = \{A \subseteq \{1,\dots,2n\} \mid i \in A\}.
  \end{align*}
  The progress automaton of an FDWA for $L_n$ can use one state for every $i$, ordered in a chain as follows:

  \begin{figure}[ht]
    \centering
    \begin{tikzpicture}[automaton, node distance=22mm]
      \node[initial] (e) {$\eps$};
      \node[right=of e] (1) {$1$};
      \node[right=of 1] (2) {$2$};
      \node[right=of 2] (dots) {$\dots$};
      \node[right=of dots] (2nminus1) {$2n-1$};
      \node[right=of 2nminus1] (bot) {$\bot$};

      \draw[->]
      (e)
      edge[loop above] node {$G_1 $} (e)
      edge node {$B_1 $} (1)
      (1)
      edge[loop above] node {$G_2 $} (1)
      edge node {$B_2 $} (2)
      (2)
      edge[loop above] node {$G_3$} (2)
      edge node {$B_3$} (dots)
      (dots)
      edge node {$B_{2n-1}$} (2nminus1)
      (2nminus1)
      edge[loop above] node {$G_{2n}$} (2nminus1)
      edge node {$B_{2n}$} (bot)
      (bot)
      edge[loop right] node {$\Sigma_n $} (bot);
    \end{tikzpicture}
  \end{figure}
  A periodic word $u^\omega$ will reach $\bot$ precisely if every number in $\{1,\dots,2n\}$ appears.
  Here, except the state $\bot$, every state is accepting.
  Since this property is naturally invariant under rotations of $u$, the given FDWA is "saturated".

  For the other direction, let $\F = (\T, \D)$ be a an "almost saturated" "FDFA" where $\T$ has only one state.
  To prove our lower bound, we first show an auxiliary claim.
  Consider two sets $X, Y \subseteq \{1,\dots,2n\}$ such that $X \not= Y$ and $|X| =|Y| = n$.
  We write $\bar{X}$ and $\bar{Y}$ denote their complement, i.e., $\{1,\dots,2n\}\setminus X$ and $\{1,\dots,2n\}\setminus Y$, respectively.
  We claim that if $X$ and $Y$ lead to the same state in $\D$, then $\bar{X}$ and $\bar{Y}$ do not lead to the same state.
  Assume to the contrary that $\D(X) = \D(Y) = q$ and $\D(\bar{X}) = \D(\bar{Y}) = p$.
  Since $(X\bar{X}u)^\omega$ and $(Y\bar{Y}u)^\omega$ are not in $L_n$ for every $u \in \Sigma_n^*$, we get that $\bar{X}$ and $\bar{Y}$ lead to a rejecting sink from $q$.
  Similarly, $X$ and $Y$ must lead to a rejecting sink from $p$.
  But then $\D$ reaches a rejecting sink on both $X\bar{Y}$ and $\bar{Y}X$.
  This is a contradiction because $|X| = |Y| = n$ and $X \not= Y$ implies $X \cup \bar{Y} \not= \{1, \ldots, 2n\}$, and therefore $(X\bar{Y})^\omega \in L_n$.

  The number of sets $X \subseteq \{1,\dotsc,2n\}$ of size $n$ is ${\binom{2n}{n}} \geq 2^n$.
  So in a deterministic transition system of size $k$, at least $\frac{2^n}{k}$ distinct sets of size $n$ will reach the same state $q$.  By applying the auxiliary claim, all their complements need to reach pairwise different states, which implies the following inequalities on the number of states
  \[
    k \geq \frac{2^n}{k} \quad \iff \quad k^2 \geq 2^n \quad \iff \quad k \geq 2^{\frac n 2}
  \]
  giving us the claimed lower bound.
\end{proof}

\begin{lemma}
    Let $n > 0$ and consider the language $L'_n$ over the alphabet $\{0,1\}$ consisting of all words $w$ with infinitely many occurrences of an infix $0u0$ for some $|u| \leq n$.
    $L'_n$ is recognized by an "almost saturated FDFA" of "size@@family" $(1,n+3)$, and by an "FDWA" of the same size.
    Every "saturated FDWA" with trivial leading transition system for the complement of $L_n'$ has a progress automaton with at least $2^n$ states, and every "almost saturated FDFA" for $L_n'$ with trivial leading transition system has a progress DFA with at least $\frac{2^n}{2n}$ states.
    \label{helper:incomparableSatFDWAalsatFDFA:second}
\end{lemma}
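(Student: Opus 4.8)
Throughout, $L'_n$ denotes the set of words over $\{0,1\}$ with infinitely many infixes $0u0$ with $|u|=n-1$, equivalently with infinitely many pairs of $0$'s at distance exactly $n$. It is prefix-independent, so all four automata may use the trivial leading TS $\T_\eps$, and I only argue about progress automata. For the two upper bounds I would exhibit explicit automata. For the almost saturated FDFA take the progress DFA $\D$ with states $q_0,\dots,q_{n+1}$ and a dead sink: $q_0\xrightarrow{0}q_1$, $q_0\xrightarrow{1}\mathrm{dead}$, then $q_1\xrightarrow{a}q_2\xrightarrow{a}\dots\xrightarrow{a}q_n$ on any letter $a$, then $q_n\xrightarrow{0}q_{n+1}$, $q_n\xrightarrow{1}\mathrm{dead}$, with $q_{n+1}$ accepting sink; so $\LangDFA{\D}=0\{0,1\}^{n-1}0\{0,1\}^*$, which has $n+3$ states and is closed under taking powers, hence $(\T_\eps,\D)$ is almost saturated. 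It recognises $L'_n$: if $x\in\LangDFA{\D}$ then $x^\omega$ literally contains $0u0$ with $|u|=n-1$, so $x^\omega\in L'_n$; conversely any ultimately periodic word in $L'_n$ has, in its periodic part, a $0$ whose distance-$n$ partner is again a $0$, and the representation of that word whose loop word is the conjugate of a high power of the root that starts exactly at this $0$ lies in $\LangDFA{\D}$. For the FDWA take the progress automaton $\B$ with the same $n+1$-state chain but with $q_{n+1}$ an accepting sink and the dead state a rejecting sink; all SCCs are singletons, so $\B$ is weak, has $n+3$ states, and accepts $w$ iff $w_1=0$ and $w_{n+1}=0$. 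The same ``conjugate of a power'' argument shows $(\T_\eps,\B)$ recognises $L'_n$, and it is not saturated since a rotation of a witnessing loop beginning in the middle of a good window is rejected while the one beginning at the good $0$ is accepted.

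For the lower bounds the combinatorial core is the following. For $s,t\in\{0,1\}^n$ write $Z(s)=\{\,j:s_j=0\,\}$. The word $st$ has period $2n$, and since positions at distance $n$ in $(st)^\omega$ lie in opposite halves of a period, $(st)^\omega$ contains a pair of $0$'s at distance exactly $n$ iff $Z(s)\cap Z(t)\neq\emptyset$. Now let $\B$ be the progress automaton of a saturated FDWA for the complement of $L'_n$ (equivalently, after exchanging accepting and rejecting SCCs of the weak automaton, for $L'_n$ itself). Because the complement of $L'_n$ is prefix-independent, saturation forces $\B$ to accept $v^\omega$ iff $v^\omega$ lies in the complement, \emph{for every} finite nonempty $v$; in particular $\B$ accepts $(st)^\omega$ iff $Z(s)\cap Z(t)=\emptyset$. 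I would then run a fooling argument on the $2^n$ words $s\in\{0,1\}^n$: given $s\neq s'$ pick $i\in Z(s)\triangle Z(s')$, say $i\in Z(s)\setminus Z(s')$, and use as second half the word $c_{\{i\}}$ whose only $0$ is in position $i$; then $\B$ must reject $(s\,c_{\{i\}})^\omega$ but accept $(s'\,c_{\{i\}})^\omega$, forcing $\B$ into different states after reading $s$ and after reading $s'$, hence $|\B|\ge 2^n$.

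The step I expect to be the main obstacle is exactly this last inference: a weak automaton reading $(s\,c_{\{i\}})^\omega$ re-reads $s$ in every period, so equality of the states reached after $s$ and after $s'$ does not by itself make the two $\omega$-runs coincide. To make the fooling argument rigorous I would separate the two halves of the loop by long blocks $1^M$, replace each half by a high power, and use that after a sufficiently long run of $1$'s the automaton sits on a fixed $1$-cycle, so that the SCC in which the $\omega$-run settles genuinely factors through the state reached after the $s$-block. This is precisely where loopshift-invariance is used: it shows the automaton cannot commit to an offset inside the window and must therefore carry all $2^n$ patterns of $n$ bits.

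Finally, the last claim follows by rerunning the same core for the progress DFA of an \emph{almost saturated} FDFA for the complement of $L'_n$: the $2^n$ words $s$ (padded to length $2n$), together with the continuations $c_{\{i\}}$, must reach essentially as many distinct states, the only loss being that the existential semantics of a non-saturated FDFA guarantees only that \emph{some} conjugate of \emph{some} power of a loop word belongs to its accepted language, so words can be separated only up to rotation within a block of length at most $2n$; this yields the stated bound $2^n/(2n)$. Combining the four pieces gives the lemma.
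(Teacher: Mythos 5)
Your upper-bound constructions (the $(n+3)$-state chain used both as an almost saturated progress DFA accepting $0\{0,1\}^{n-1}0\{0,1\}^*$ and, re-read as a weak automaton, as an FDWA progress automaton) are correct and essentially identical to the paper's. The problem is in the lower bounds. You correctly identify the obstacle -- in a deterministic weak automaton the loop word is re-read every period, so $\delta^*(\iota,s)=\delta^*(\iota,s')$ does not make the runs on $(s\,c_{\{i\}})^\omega$ and $(s'\,c_{\{i\}})^\omega$ coincide -- but your proposed repair does not work. The gadget $st$ with $|s|=|t|=n$ tests $Z(s)\cap Z(t)\neq\emptyset$ only because the period is exactly $2n$, so that positions at distance $n$ straddle the two halves. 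Once you insert separators $1^M$ or replace a half by a power $s^K$, every distance-$n$ pair either lies inside a single block (and within $s^K$ such a pair compares $s_j$ with $s_j$, i.e.\ tests only whether $s$ contains a $0$ at all) or spans a gap of length $>n$; in either case membership of the resulting $\omega$-word in $L_n'$ no longer depends on $Z(s)\cap Z(t)$, so the semantic separation that drives the fooling argument evaporates. The claim that ``the SCC in which the $\omega$-run settles factors through the state reached after the $s$-block'' is exactly what needs proof, and padding cannot deliver it without breaking the combinatorics.

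The paper's proof circumvents this with a different fooling set: the words $u\tilde u$ where $\tilde u$ is the bitwise complement of $u$. Each $(u\tilde u)^\omega$ lies in the complement of $L_n'$ (positions at distance $n$ carry complementary bits), hence is accepted, and one defines $q_u$ as a state that is visited by infinitely many powers of $u\tilde u$ (a recurrent state, not the state after a prefix). If $q_u=q_v$ for $u\neq v$, the word $x=(u\tilde u)^r(v\tilde v)^l(u\tilde u)^m$ returns to this state infinitely often, so by weakness its run settles in the same accepting SCC and $x^\omega$ is accepted; but the junction infixes $\tilde u v$ and $\tilde v u$ force a $0$-pair at distance exactly $n$ at any position where $u$ and $v$ differ, so $x^\omega\in L_n'$, a contradiction. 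This yields $2^n$ distinct recurrent states, and the $2^n/(2n)$ bound for almost saturated FDFAs follows because such an FDFA need only accept one of the $2n$ rotations of each $u\tilde u$. You would need to replace your padding step by an argument of this kind (a fooling set whose elements are individually ``safe'' but whose junctions are individually ``unsafe'', combined with recurrence rather than prefix states) for the lower bounds to go through.
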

\begin{proof}
  Consider the family of languages where infinitely often two zeroes appear precisely $n$ positions apart, which is formally defined as
  \begin{align*}
    L'_n & = \{ w\in \Sigma^\omega \mid \text{ $w$ has infinitely many infixes $0u0$ for $u \in \Sigma^{n-1}$}\}.
  \end{align*}
  An "almost saturated FDFA" $\F'$ for $L_n'$ can use a trivial "leading transition system" and ``guess'' the start of an infix $0u0$ for some $|u| = n-1$ on the looping part.
  It needs a "progress DFA" with $n+3$ to verify that the zeroes indeed are $n-1$ symbols apart.
  We illustrate this progress automaton in~\cref{fig:napartillustration}.
  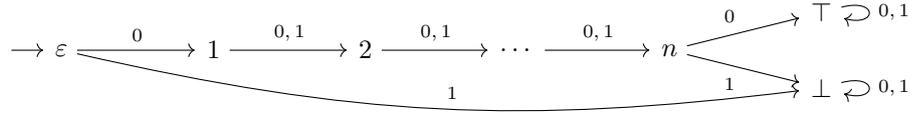
\begin{figure}[ht]
    \centering
    \begin{tikzpicture}[automaton, node distance=20mm]
      \node[initial] (e) {$\eps$};
      \node[right=of e] (1) {$1$};
      \node[right=of 1] (2) {$2$};
      \node[right=of 2] (dots) {$\dots$};
      \node[right=of dots] (n) {$n$};
      \node[right=of n, yshift=5mm] (top) {$\top$};
      \node[right=of n, yshift=-5mm] (bot) {$\bot$};

      \draw[->]
      (e)
      edge node {$0$} (1)
      edge[bend right=10] node {$1$} (bot)
      (1)
      edge node {$0,1$} (2)
      (2)
      edge node {$0,1$} (dots)
      (dots)
      edge node {$0,1$} (n)
      (n)
      edge node {$0$} (top)
      edge node[swap] {$1$} (bot)
      (bot)
      edge[loop right] node {$0,1$} (bot)
      (top)
      edge[loop right] node {$0,1$} (top)
      ;
    \end{tikzpicture}
    \caption{Illustration of the progress automaton (the "leading TS" is trivial) for the "almost saturated FDFA"/"FDWA" constructed in~\cref{helper:incomparableSatFDWAalsatFDFA:second}.}
    \label{fig:napartillustration}
  \end{figure}
  When viewed as an "FDWA", $\F'$ accepts the same language.

  \newcommand{\Lc}{\ensuremath{\overline{L_n'}}}
  For the second claim, we consider the complement of $L_n'$, which we denote $\Lc$.
  A word belongs to $\Lc$ if it contains \emph{finitely} many infixes $0u0$ for some $|u| = n-1$.
  
  Assume that $\W = (\T, \B)$ is a "saturated FDWA" with a trivial leading transition system, and that $\W$ recognizes $\Lc$.
  We show that every word $u \in \Sigma^n$ must lead to a different state in $\B$, which gives the desired lower bound on the size as $|\Sigma^n| = 2^n$.
  For a word $u = a_1\dotsc a_n$, we use $\tilde{u}$ to denote the word $(1-a_1)(1-a_2)\dots(1-a_n)$ where the bit in every position is flipped.
  Consider the sequence $q_0 = \iota, q_{i+1} = \delta^*(q_i, u\tilde{u})$, i.e. $q_i$ is the state reached on $(u\tilde{u})^i$.
  Then there must be $i < j \leq n$ such that $q_i = q_j$.
  We use $q_u$ to refer to the first state that is repeated at some later position in the sequence, meaning $q_u$ is reached by infinitely many powers of $u\tilde{u}$.

  Assume that for $u \neq v$ of length $n$ we have that $q_u = q_v =: p$.
  Pick $r > 0$ to be minimal with the property that $(u\tilde{u})^r$ reaches $q_u$, and let $k,l > 0$ be such that both $(u\tilde{u})^k$ and $(v\tilde{v})^l$ loop on $p$, that is, $\delta^*(p,(u\tilde{u})^k) = \delta^*(p,(v\tilde{v})^l) = p$.
  Let $m$ be such that $r+m$ is a multiple of $k$ and consider the word $x = (u\tilde{u})^r(v\tilde{v})^l(u\tilde{u})^m$.
  Note, that $x^\omega$ can be written as $(u\tilde{u})^k\left((v\tilde{v})^l(u\tilde{u})^{m+r}\right)^\omega$.
  Thus, $x^\omega$ reaches $q_u$ and takes the cycles around $q_u$ induced by $(u\tilde{u})^k$ and $(v\tilde{v})^l$ infinitely often.
  Therefore, since both $(u\tilde{u})^\omega$ and $(v\tilde{v})^\omega$ accepted by $\B$, also $x^\omega$ is accepted by $\B$.

  Clearly, $x^\omega$ contains infinitely many infixes $\tilde{u}v$ and $\tilde{v}u$.
  Because $u \neq v$, there exists a position $i \leq n$ such that $u_i \neq v_i = 1 - u_i$.
  By definition, we have $\tilde{u}_i = v_i$ and $\tilde{v}_i = u_i$.
  Now if $\tilde{u}_i = 0$, then $\tilde{u}v$ contains an infix $0x0$ for $|x| = n$ starting at position $i$.
  Otherwise, $\tilde{v}_i = u_i = 0$ and we find such an infix in $\tilde{v}u$.
  Hence, $x^\omega \notin \Lc$ but $\B$ accepts $x^\omega$, which is a contradiction.
  So overall, we have shown that in $\B$, there must exist at least $|\Sigma^n| = 2^n$ distinct states and the claim is established.

  We can use a similar proof technique to show an exponential lower bound on the size of an "almost saturated FDFA".
  Note, however, that in this case not every $(u\tilde{u})^\omega$ for $u \in \Sigma^*$ has to be accepted.
  Indeed, an almost saturated FDFA may only accept the $\omega$-power of one of the $2n$ rotations of $u\tilde{u}$.
  Overall, this gives a slightly lower but still exponential bound on the number of distinct states.
\end{proof}

We can now combine the preceding lemmas to obtain the proof of~\cref{lemma:incomparableSatFDWAalsatFDFA}.
The first claim is established by~\cref{helper:incomparableSatFDWAalsatFDFA:first}.
The third claim immediately follows from~\cref{helper:incomparableSatFDWAalsatFDFA:second}.
For the second claim, we observe that saturated FDWA can be complemented in place by swapping accepting and rejecting states.
This implies that a saturated FDWA for $L_n'$ with trivial "leading TS" also needs a progress automaton of size $2^n$, and the claim is established.

In the following, we give a full proof of~\cref{lemma:expgapsatFDFAsynFDFAfullysatFDFA}.
The families of languages that we use for both claimed bounds, make use of the alphabet $\Sigma^\to$ from~\cref{lemma:lowerbound:combinedToSatFDFA} for encoding mappings from $\{1,\dots,n\}$ to $\{1,\dots,n\}$.
For a word $u_i$, we write $m_{u_i}$ for the mapping that it defines.

\expgapsatFDFAsynFDFAfullysatFDFA*
\begin{proof}
For the first claim, we consider words of the form $u_0\#u_1\#\dots$ over $\Sigma^\to\dot\cup\{\#\}$ where $\#$ is a fresh symbol.
Such a word is in $L_n$, if $1$ is a fixpoint for infinitely many $u_i$, which means $m_{u_i}(1) = 1$ for infinitely many $i$.
We obtain a small "saturated FDFA" $(\T, \D)$ of "size@@family" $(n,2n)$ for $L_n$ as follows.
In the leading transition system $\T$, we track the number that $1$ is sent to under the mapping encoded by the word read since the last $\#$.
This is clearly possible with $n$ states, which we refer to as $1,\dots, n$, and leads to $n$ progress automata.
The progress automaton $\D_i$ accepts precisely the words $x\#y$ such that $m_x(i) = 1$ and $m_y(1) = i$.
Clearly, this can be done on the disjoint union of two copies of $\T$, by first ensuring that $i$ is sent to $1$, switching to a second copy on $\#$, and subsequently checking that $1$ is mapped to $i$.

The "syntactic FDFA" on the other hand has a trivial leading transition system, since membership in $L_n$ depends only on the infinite suffix.
The syntactic FDFA, the progress automaton for $u$ uses the syntactic progress congruence $\approx_u^L$ as transition system, where $x\approx_u^L y$ if for all $z$ with $uxz\sim_L u$ holds that $u(xz)^\omega \in L$ iff $u(yz)^\omega \in L$.
Now consider two words $x, y$ such that $m_x \neq m_y$.
We can find a word $w$ such that $1$ is a fixpoint of $m_w \circ m_x$ but not of $m_w \circ m_y$.
This shows that $x$ and $y$ must lead to different states in the progress automaton of the syntactic FDFA and since there are $n^n$ mappings from $\{1,\dots,n\}$ to itself, the bound is established.

Towards the second claim, the idea is to modify $L_n$ in a way that blows up the leading transition system, which makes it easier for the individual progress automata.
We introduce an additional letter $\sharp$ and separate words to force the leading transition system to keep track of which symbol was last seen.
To do this, we define $L_n'$ to be the union of $\L_n$ with the set of all words that contain an infix $\#u_ix$ where $x$ contains no $\#$ and $m_i(1)$ occurrences of $\sharp$.
This means that now two words ending in suffixes $x, y \in (\Sigma^\to)^*$ with $m_x(1) < m_y(1)$ can be separated by $\sharp^{m_x(i)}$.
Thus, the leading transition system of the "syntactic FDFA" for $L_n'$ has $n+1$ states, one for the numbers $1,\dots,n$, and a sink state.
The progress DFA for the sink is universal, while the one for $i$ behaves just as the progress DFA for $i$ in the saturated FDFA from the first claim does.
Thus, the size of the syntactic FDFA for $L_n'$ is $(n+1, 2n)$.

In contrast to this, a fully saturated FDFA for $L'$ cannot simply ignore words if they do not loop on a state, and it has to correctly classify all possible loops.
By applying a similar argument to before, we can show that any two words $x,y$ encoding mappings that differ in the image of $1$ must be separated in the progress automaton for $\eps$ in a "fully saturated FDFA" for $L_n'$.
Thus, in any such FDFA for $L_n'$ there must be a progress DFA with at least $n^n$ states.
\end{proof}%
  
\section{FDWA and FDFA with duo-normalized acceptance are the same}\label{section:appendix-duo-normal}
As mentioned in the related work of \Cref{section:introduction}, we can show that FDFAs with duo-normalized acceptance and FDWAs are basically the same model by showing that they can be translated into each other by just rewriting the acceptance condition and keeping the transition structure.

Let $\F = (\T, \D)$ be an FDFA.
With a normalized acceptance condition, the FDFA $\F$ only cares about whether normalized representations $(u, x) \in \Sigma^*\times \Sigma^+$ such that $\T(u) = \T(ux)$ are accepted or not.
That is, an FDFA with a normalized acceptance condition accepts $(u, x)$ if $(u, x)$ is a normalized representation and $x \in \LangDFA{\D_u}$.
Further, a normalized representation $(u, x)$ is said to be \emph{duo-normalized} if $\D_u(x) = \D_u(x\cdot x)$~\cite{FismanGZ24}.
Similarly, an FDFA with a duo-normalized acceptance condition accepts $(u, x)$ if $(u, x)$ is a duo-normalized representation and $x \in \LangDFA{\D_u}$. And it is saturated if for each ultimately periodic word, it accepts all duo-normalized representations of that word or none. 

The translation from FDWA to FDFA with duo-normalized acceptance is immediate.
\begin{lemma}
Each saturated FDWA for $L$ also defines $L$ and is saturated when interpreted as FDFA with duo-normalized acceptance.
\end{lemma}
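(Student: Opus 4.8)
The plan is to derive both assertions from a single structural observation about the weak progress automata. Write $\W = (\T,\B)$ for the given saturated FDWA with $\LangFDWA{\W} = L$, and let $\F$ denote the pair $(\T,\B)$ read as an FDFA with duo-normalized acceptance, so that each $\B_u$ is now interpreted as a DFA with its given set of final states. The key observation I would isolate is: \emph{if $(u,x)$ is a duo-normalized representation with respect to $\F$ --- that is, $\T(u) = \T(ux)$ and $q := \B_u(x) = \B_u(x^2)$ --- then $\W$ accepts $(u,x)$ as an FDWA if and only if $\F$ accepts $(u,x)$ as a duo-normalized FDFA.} To see this, note that $\B_u(x^2)$ is the state obtained from $q = \B_u(x)$ by reading $x$ in $\B_u$, so the equality $\B_u(x) = \B_u(x^2)$ says exactly that $x$ loops on $q$ in $\B_u$. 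Hence the run of $\B_u$ on $x^\omega$ eventually stays inside the SCC $S$ of $q$, visiting along each period every state on the $x$-cycle through $q$ infinitely often; since $\B_u$ is weak, all states of $S$ share the acceptance status of $q$. Therefore $x^\omega$ is accepted by $\B_u$ (the FDWA semantics of $(u,x)$) iff $q$ is a final state of $\B_u$, which is the same as $x \in \LangDFA{\B_u}$ (the FDFA semantics of $(u,x)$, since $(u,x)$ is already duo-normalized).

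Next I would record that \emph{every ultimately periodic word $w$ admits a duo-normalized representation with respect to $\F$}. Starting from any representation $w = ux^\omega$, the state sequence $\T(u), \T(ux), \T(ux^2), \dots$ is ultimately periodic, so for suitable $i_0$ and $q \ge 1$ the pair $(u', x') := (ux^{i_0}, x^q)$ satisfies $\T(u'x') = \T(u')$, i.e.\ is normalized, and $u'(x')^\omega = w$; by determinism of $\T$ this makes $(u', (x')^k)$ normalized for every $k \ge 1$. Finally the state sequence $\B_{u'}(x'), \B_{u'}((x')^2), \dots$ is ultimately periodic, so picking $k$ large enough and divisible by its period gives $\B_{u'}((x')^k) = \B_{u'}((x')^{2k})$, and thus $(u', (x')^k)$ is a duo-normalized representation of $w$.

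From these two facts the lemma follows. For saturation of $\F$: any two duo-normalized representations $(u_1,x_1)$ and $(u_2,x_2)$ of the same ultimately periodic word $w$ are, in particular, normalized representations of $w$, so saturation of $\W$ yields that $\W$ accepts $(u_1,x_1)$ iff it accepts $(u_2,x_2)$, and by the key observation the same equivalence holds for $\F$. For the language: given an ultimately periodic word $w$, pick a duo-normalized representation $(u,x)$ of $w$ using the second fact; by the key observation $\F$ accepts $(u,x)$ iff $\W$ accepts $(u,x)$, and since $(u,x)$ is a normalized representation of $w$ and $\W$ is saturated, this holds iff $w \in L$. Thus the ultimately periodic words having a duo-normalized representation accepted by $\F$ are exactly those in $L$, so $\F$ defines $L$.

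The step I expect to be the main obstacle is the key observation, and within it the delicate point of verifying that the condition $\B_u(x) = \B_u(x^2)$ is precisely what forces the run of the weak automaton $\B_u$ on $x^\omega$ to settle in the SCC of the single state $\B_u(x)$, rather than in some later SCC along the run; only then does weakness let one read the acceptance of $x^\omega$ off $\B_u(x)$ directly. The remaining pieces --- the ultimate-periodicity argument for the second fact and the two reductions to saturation of $\W$ --- are routine.
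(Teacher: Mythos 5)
Your proof is correct and follows essentially the same route as the paper's (much terser) argument: the heart of both is that for a duo-normalized pair $(u,x)$ the condition $\B_u(x)=\B_u(x^2)$ forces the run on $x^\omega$ to settle in the SCC of $\B_u(x)$, so weakness lets one read off acceptance of $x^\omega$ from the finality of that single state. Your additional steps (existence of duo-normalized representations for every ultimately periodic word, and the reduction of duo-normalized saturation to saturation of the FDWA) are details the paper leaves implicit but which you handle correctly.
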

\begin{proof}
Let $u \in \Sigma^*$ and $v \in \Sigma^+$ such that $v$ loops on the leading state reached by $u$, and on the progress state $p$ reached by $v$. Then then by definition of the semantics of FDWA, $uv^\omega \in L$ iff  $p$ is accepting.
\end{proof}

For the translation into the other direction, we need the following lemma, which summarizes properties of what is called the precise family of weak priority mappings for a regular $\omega$-language $L$ from~\cite{BohnL24}. The proof that we provide for the lemma mainly explains where to find the corresponding properties in \cite{BohnL24}.

\begin{lemma}[\cite{BohnL24}]\label{lem:precise-mapping}
  For each regular $\omega$-language $L$, there is a number $k \in \mathbb{N}$ and a function $\pi: \Sigma^* \times \Sigma^+ \rightarrow \{0,\ldots,k\}$ such that
  \begin{enumerate}
  \item For all $u,u',\in \Sigma^*$ with $u \sim_L u'$, and all $v \in \Sigma^+$:\; $\pi(u,v) = \pi(u',v)$.
  \item For all $u,x,v,y \in \Sigma^*$, and all $v \in \Sigma^+$:\; $\pi(u,xvy) \le \pi(ux,v)$
  \item There is $n \in \mathbb{N}$ such that for all $u \in \Sigma^*, v \in \Sigma^+$ and $m \ge n$:\;
$
\pi(u,v^m) \text{ is even iff $uv^\omega \in L$.}
$
  \end{enumerate}
\end{lemma}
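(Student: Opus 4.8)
The plan is to obtain $\pi$ directly from the \emph{precise family of weak priority mappings} of $L$ constructed in \cite{BohnL24}. Recall that this object consists of a leading transition system, which for the precise family is (the transition system of) $\sim_L$, together with, for every $\sim_L$-class $q$, a weak priority mapping $\pi_q \colon \Sigma^+ \to \{0,\dots,k\}$, where $k$ bounds the number of priorities used. I would then set $\pi(u,v) := \pi_{[u]_{\sim_L}}(v)$. Property~(1) is then immediate from this definition: since the leading congruence of the precise family is $\sim_L$, the value $\pi(u,v)$ depends on $u$ only through its $\sim_L$-class.

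For properties~(2) and~(3) I would match them against the axioms and the correctness statement of \cite{BohnL24}. Property~(2), $\pi(u,xvy)\le\pi(ux,v)$, is the monotonicity/consistency condition of a family of weak priority mappings: the priority a progress mapping assigns to a loop that contains $v$ as an infix, read after a prefix $x$, is at most the priority that the progress mapping at the $x$-successor class assigns to $v$ alone. Property~(3) is the statement that the precise family \emph{recognizes} $L$, i.e.\ that for every $u$ and every sufficiently long loop the parity of the assigned priority coincides with membership of $uv^\omega$ in $L$. Here I would also note that a \emph{uniform} bound $n$ follows from property~(2) alone: applying it with prefix $\varepsilon$, middle word $v^m$ and suffix $v$ gives $\pi(u,v^{m+1})\le\pi(u,v^m)$, so $m\mapsto\pi(u,v^m)$ is non-increasing; as it takes values in the finite set $\{0,\dots,k\}$, it is constant for all $m\ge k+1$, and by the recognition statement this stable value has even parity exactly when $uv^\omega\in L$. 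Hence $n:=k+1$ works.

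The actual construction is carried out in \cite{BohnL24}, and the only delicate point in assembling the above is to make sure that the precise family really does have leading congruence $\sim_L$ (so that property~(1) is meaningful) and uses boundedly many priorities; both are established there, and I expect this to be the main obstacle in the sense that there is no short self-contained route. A naive attempt to read $\pi$ off a deterministic parity automaton $\mathcal{A}$ for $L$ — for instance, letting $\pi(u,v)$ be the minimum, over the states of $\mathcal{A}$ reachable by words that are $\sim_L$-equivalent to $u$, of the limit priority of the run of $\mathcal{A}$ on $v^\omega$ from that state — does yield properties~(1) and~(3) (all these states accept the same $\omega$-words, so the limit priorities share a parity and the minimum inherits it), but it need not satisfy the infix-monotonicity~(2). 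Reconciling dependence on the $\sim_L$-class only, infix-monotonicity, and correctness simultaneously is exactly what the construction of the precise family in \cite{BohnL24} achieves, so the proof will consist of invoking that construction and pointing to the three corresponding statements, together with the short argument above that extracts the uniform $n$.
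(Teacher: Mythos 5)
Your proposal is correct and follows essentially the same route as the paper: both define $\pi(u,v)$ as $\pi^{\sim_L}_{[u]}(v)$ from the precise family of weak priority mappings of \cite{BohnL24} and read off properties (1)--(3) from its defining properties (the paper splits your ``monotonicity/consistency'' inequality for (2) into the separate \emph{weak} and \emph{monotonic} axioms) together with the recognition lemma. Your extraction of the uniform bound $n=k+1$ directly from the non-increasing sequence $m\mapsto\pi(u,v^m)$ is slightly more self-contained than the paper's appeal to finite-state (Mealy machine) computability of each $\pi_u$, but this is a minor variation rather than a different approach.
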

\begin{proof}
  In Definition~3.5 of \cite{BohnL24}, for $L$ and a right-congruence $\sim$ that refines $\sim_L$, a family of priority mappings $\pi_c^\sim:\Sigma^+ \rightarrow \{0,\ldots,k-1\}$ is defined for each class $c$ of $\sim$, where $k$ is minimal such that $L$ can be accepted by a deterministic parity automaton with priorities in $\{0, \ldots, k-1\}$. For $u \in \Sigma^*$ let us write $\pi_u^\sim$ for $\pi_c^\sim$ where $c$ is the $\sim_L$-class $c$ of $u$.
  
  We choose $\sim := \sim_L$ and define $\pi(u,v) := \pi_u^\sim(v)$. 

  Then the first property is immediate by definition. The second property is a combination of the properties \textit{weak} and \textit{monotonic}: Let $u,x,v,y \in \Sigma^*$ and $v \in \Sigma^+$. Each $\pi_c^\sim$ is weak (see comment before Definition~3.5 of \cite{BohnL24}), which means that $\pi_u^\sim(xvy) \le \pi_u^\sim(xv)$. And the family of the $\pi_x^\sim$ is monotonic (Definition 3.11 and Lemma 3.12 of \cite{BohnL24}), which means that $\pi_u^\sim(xv) \le \pi_{ux}^\sim(v)$. Together we obtain $\pi(u,xvy) \le \pi(ux,v)$.

  By Lemma 3.8 of \cite{BohnL24}, the family of the $\pi_c^\sim$ captures the periodic part of $L$, which means that the least priority that appears infinitely often among the $\pi_u^\sim(x)$ for $x$ a prefix of $v^\omega$ is even iff $uv^\omega \in L$. Since $\pi_u$ is weak, this means that there is some $n \in \mathbb{N}$ such that all $\pi_u(v^m)$ for $m \ge n$ are even if $uv^\omega \in L$, and all $\pi_u(v^m)$ are odd if $uv^\omega \not\in L$. That the $n$ can be chosen independently of $u,v$ follows from the fact that each $\pi_u$ can be computed by a finite state Mealy machine \cite[Theorem 3.6]{BohnL24}.
\end{proof}

\begin{lemma}
Consider a saturated FDFA with duo-normalized acceptance that defines $L$. Obtain an FDWA by making those states accepting that are in an SCC of an accepting state that is reachable by a duo-normalized pair. Then this FDWA is saturated and defines $L$.
\end{lemma}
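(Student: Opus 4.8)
The plan is to use the precise family of weak priority mappings from \Cref{lem:precise-mapping} as a bridge between the FDFA with duo-normalized acceptance and the constructed FDWA. Let $\F = (\T,\D)$ be the given saturated FDFA with duo-normalized acceptance defining $L$, and let $\W = (\T,\B)$ be the FDWA obtained from $\F$ by the stated recoloring of the progress automata (keeping the transition structure). I first note the key property of the recoloring: a state $p$ in $\B_u$ is accepting exactly when it lies in the SCC of some accepting state of $\D_u$ that is reached by a duo-normalized pair; so $\W$ is, by construction, weak (each SCC of $\B_u$ is uniformly accepting or rejecting). Fix $u\in\Sigma^*$ and $v\in\Sigma^+$. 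Since $\B_u$ has finitely many states, the run of $\B_u$ on $v^\omega$ eventually cycles in a single SCC $S$ of $\B_u$; by weakness, $v^\omega$ is accepted by $\B_u$ iff $S$ is an accepting SCC. The goal is to show that $S$ is accepting iff $uv^\omega \in L$, which gives both that $\W$ defines $L$ and (via the first coordinate being irrelevant up to $\sim_L$) that $\W$ is saturated.

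The core argument runs as follows. Replacing $v$ by a suitable power $v^e$ (which changes neither $uv^\omega$ nor the SCC eventually reached, and is harmless because duo-normalization and the weak semantics are insensitive to taking powers once we are deep enough), we may assume that $v$ reaches and loops on a state $p$ of $\D_u$ in the SCC $S$, so that $(u,v)$ is duo-normalized. Then $p$ is accepting in $\B_u$ iff $S$ is accepting iff $v^\omega$ is accepted by $\B_u$. On the other hand, $p$ is accepting in $\B_u$ iff $S$ contains an accepting state of $\D_u$ reached by a duo-normalized pair; and since $\F$ is saturated and all duo-normalized representations of $uv^\omega$ landing in $S$ are treated alike, one shows $p$ itself is accepting in $\D_u$ iff $uv^\omega \in L$, using the definition of the language of a saturated FDFA with duo-normalized acceptance. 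Here is the one subtle point that must be checked: I must argue that if \emph{some} accepting state of $\D_u$ reached by a duo-normalized pair lies in $S$, then the specific state $p$ on which $v^e$ loops is also accepting in $\D_u$. This is exactly where saturation of $\F$ is used: any duo-normalized pair $(u',w)$ with $w$ looping on an accepting state $q$ of $S$ represents some UP-word $u'w^\omega$; by moving between representations (shifting loops and taking powers, as in the proof of \Cref{lemma:fdfasaturationcharacterisation}) one can produce a duo-normalized pair on $p$ representing the same UP-word, forcing $p$ to be accepting by saturation. Since $S$ is one SCC, all of this stays within $S$, and the precise-mapping characterization of $L$ in terms of the eventual parity (\Cref{lem:precise-mapping}(3)) confirms that ``$p$ accepting $\iff uv^\omega\in L$'' is consistent across all the $\sim_L$-equivalent choices of $u$.

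Once ``$v^\omega$ accepted by $\B_u$ $\iff uv^\omega\in L$'' is established for all $u,v$, saturation of $\W$ is immediate: if $(u,x)$ and $(v,y)$ are normalized and $ux^\omega=vy^\omega$, then $x^\omega$ accepted by $\B_u$ iff $ux^\omega\in L$ iff $vy^\omega\in L$ iff $y^\omega$ accepted by $\B_v$. And $\W$ defines $L$ because its UP-language is precisely $\{ux^\omega : x^\omega \text{ accepted by } \B_u\} = \{ux^\omega : ux^\omega \in L\} = L\cap\UP{\Sigma}$, which by uniqueness of the $\omega$-regular extension (Büchi) means $\W$ defines $L$ as an $\omega$-regular language.

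\textbf{Main obstacle.} The delicate step is the claim that the recoloring ``by SCC of an accepting state reached by a duo-normalized pair'' is coherent, i.e.\ that on any SCC $S$ of $\B_u$, \emph{all} duo-normalized representations of UP-words whose runs settle in $S$ agree on acceptance, so that making the \emph{whole} SCC accepting does not overshoot. This requires combining the duo-normalization condition $\D_u(x)=\D_u(xx)$ with saturation of $\F$ to slide between representations within a fixed SCC, and it is essentially a reprise of the loopshift/power manipulation of \Cref{lemma:fdfasaturationcharacterisation} localized to one SCC; the bookkeeping (ensuring each intermediate pair stays duo-normalized and stays in $S$) is where the real work lies.
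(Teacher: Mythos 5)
Your overall architecture is sound and your easy direction is correct: if $uv^\omega\in L$, then a suitable power $v^e$ reaches and loops on a state $p$, the pair $(u,v^e)$ is duo-normalized, so $p$ is accepting in $\D_u$ and its SCC becomes accepting in $\B_u$. You have also correctly isolated the delicate step, namely that no SCC may simultaneously contain an accepting state reached by a duo-normalized pair and a state reached and looped on by a duo-normalized pair whose $\omega$-power lies outside $L$. But the mechanism you propose for that step does not work. Loopshifts and (de)duplications of the loop preserve the represented ultimately periodic word, and neither operation can move the looping state to a \emph{different} state of the same SCC of the same progress DFA: a loopshift changes the progress DFA (from $\D_u$ to $\D_{ua}$), and taking powers of $w$ keeps you on the very state on which $w$ loops. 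Concretely, if $p\xrightarrow{x}q\xrightarrow{y}p$ and $w$ loops on $q$, the words looping on $p$ are of the form $xw^ky$, and $(xw^ky)^\omega\neq w^\omega$ in general; there is no duo-normalized pair on $p$ representing the same UP-word as $(u',w)$. Since two duo-normalized pairs landing on different states of one SCC generally represent \emph{different} UP-words, saturation of $\F$ --- which only constrains representations of the \emph{same} UP-word --- gives you nothing here, and the proposed ``reprise of \cref{lemma:fdfasaturationcharacterisation} localized to one SCC'' cannot be carried out.

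The paper closes exactly this gap with \cref{lem:precise-mapping}, and that is where the lemma is genuinely needed (not merely for the $\sim_L$-invariance of the prefix, which is the only role you assign it). Assume the SCC contains $p_1$, reached and looped on by a duo-normalized $v_1$ with $uv_1^\omega\in L$, and $p_2$, reached and looped on by a duo-normalized $v_2$ with $uv_2^\omega\notin L$, chosen so that $\pi(u,v_1)+\pi(u,v_2)$ is minimal. With $p_1\xrightarrow{x}p_2\xrightarrow{y}p_1$, the word $z=v_1xv_2^ny$ loops on $p_1$, so $(u,z)$ is duo-normalized and $uz^\omega\in L$; but monotonicity and weakness of $\pi$ give $\pi(u,z^n)\le\pi(uv_1x,v_2^n)=\pi(u,v_2^n)$, and since the left-hand side is even and the right-hand side odd, the inequality is strict, contradicting minimality. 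This argument uses the regularity of $L$ through the structure of its precise weak priority mapping, an ingredient your sketch omits; as it stands, your proof has a genuine gap at its central step.
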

\begin{proof}
  Let $u,v$ be such that $v$ loops on the leading state reached by $u$. We have to show that $v^\omega$ is accepted by the progress DWA of $u$ iff $uv^\omega \in L$. Let $p$ be a progress state that is reached on $v^j$ and loops on $v^j$ for some $j > 0$. Then $(u,v^j)$ is a duo-normalized pair for $uv^\omega$. If $uv^\omega \in L$, then $p$ is accepting in the FDFA and by definition also accepting in the FDWA, and hence $v^\omega$ is accepted by the progress DWA of $u$. 

  If $uv^\omega \not\in L$, then we need to show that the SCC of $p$ does not contain an accepting state $q$ such that $q$ is reachable by a duo-normalized pair $(w,x)$. Assume the contrary. For obtaining a contradiction, we choose such problematic examples with minimal $\pi$-values for $\pi$ as in \cref{lem:precise-mapping}.

  So let $(u,v_1),(u,v_2)$ be duo-normalized pairs, such that $uv_1^\omega \in L$, $uv_2^\omega \not\in L$, and the states $p_1,p_2$ reached by $v_1,v_2$, respectively, in the progress congruence  are in the same SCC.

  Further choose $(u,v_1),(u,v_2)$ with these properties such that $\pi(u,v_1) + \pi(u,v_2)$ is minimal (with $\pi$ from \cref{lem:precise-mapping}). Let $x,y \in \Sigma^+$ be such that $p_1 \xrightarrow{x} p_2$ and $p_2 \xrightarrow{y} p_1$.

  Since, by the third property of \cref{lem:precise-mapping}, $\pi(u,v_1)$ is even and $\pi(u,v_2)$ is odd, we get that $\pi(u,v_1) \not= \pi(u,v_2)$.

  Consider the case that $\pi(u,v_1) < \pi(u,v_2)$, and define $z: = v_1xv_2^ny$ with $n$ as in the third property of \cref{lem:precise-mapping}. Then $(u,z)$ is a duo-normalized pair that loops on $p_1$, and hence $uz^\omega \in L$. Further, by the second and first property of $\pi$ from \cref{lem:precise-mapping}:
  \[
\pi(u,z^n) = \pi(u,(v_1xv_2^ny)^n) \le \pi(uv_1x,v_2^n) = \pi(u,v_2^n).
\]
By the third property of \cref{lem:precise-mapping}, we get that $\pi(u,z^n)$ is even, and thus  $\pi(u,z^n) < \pi(u,v_2^n)$. This contradicts the choice of $(u,v_1),(u,v_2)$ with $\pi(u,v_1) + \pi(u,v_2)$ being minimal.

The case $\pi(u,v_1) < \pi(u,v_2)$ uses a symmetric argument with $z: = v_2yv_1^nx$.
\end{proof}

}%
\end{document}